\theoremstyle{plain}
\newtheorem{theorem}{Theorem}[section]
\newtheorem{definition}[theorem]{Definition}
\newtheorem{proposition}[theorem]{Proposition}
\newtheorem{lemma}[theorem]{Lemma}
\newtheorem{observation}[theorem]{Observation}
\newtheorem{example}[theorem]{Example}
\newtheorem{corollary}[theorem]{Corollary}
\DeclarePairedDelimiter\norm{\lVert}{\rVert}%
\DeclarePairedDelimiter\abs{\lvert}{\rvert}%
\let\oldabs\abs
\def\abs{\@ifstar{\oldabs}{\oldabs*}}
\let\oldnorm\norm
\def\norm{\@ifstar{\oldnorm}{\oldnorm*}}
\newcommand{\pjr}{\text{PJR}}
\newcommand{\jr}{\text{JR}}
\newcommand{\ejr}{\text{EJR}}
\newcommand{\ejrp}{\text{EJR+}\xspace}
\newcommand{\cmark}{\ding{51}}
\newcommand{\xmark}{\ding{55}}
\newcommand{\algstyle}[1]{\textsc{#1}\xspace}
\newcommand{\gjcr}{\algstyle{GJCR}}
\newcommand{\noisygjcr}{\algstyle{Softmax-GJCR}}
\newcommand{\noisygjcrslack}{\algstyle{Softmax-GJCR-Slack}}
\newcommand{\noisygjcrcap}{\algstyle{Softmax-GJCR-Capped}}
\newcommand{\noisygreedy}{\algstyle{Softmax-GreedyJR}}
\newcommand{\noisypav}{\algstyle{Softmax-PAV}}
\newcommand{\dynamicgjcr}{\algstyle{Dynamic-GJCR}}
\newcommand{\lm}{\ensuremath{\ell_{\text{max}}}\xspace}
\newcommand{\cA}{\mathcal{A}}
\newcommand{\cD}{\mathcal{D}}
\newcommand{\cE}{\mathcal{E}}
\newcommand{\cI}{\mathcal{I}}
\newcommand{\cL}{\mathcal{L}}
\newcommand{\cM}{\mathcal{M}}
\newcommand{\cN}{\mathcal{N}}
\newcommand{\cR}{\mathcal{R}}
\newcommand{\cS}{\mathcal{S}}
\newcommand{\cT}{\mathcal{T}}
\newcommand{\W}{\mathcal{W}}
\newcommand{\cW}{\W}
\newcommand{\D}{\cD}
\newcommand{\eps}{\varepsilon}
\newcommand{\LW}{\cL}
\newcommand{\qq}{\qquad}
\newcommand{\qqqq}{\qquad\qquad}
\newcommand{\defeq}{\coloneq}
\newcommand{\activatebar}{%
  \begingroup\lccode`~=`| \lowercase{\endgroup\def~}{\,\delimsize\vert\,}%
  \mathcode`|="8000
}
\DeclareMathOperator*{\expectation}{\mathbb{E}}
\newcommand{\expect}{\expectation\expectarg}
\DeclarePairedDelimiterX{\expectarg}[1]{[}{]}{%
	\ifnum\currentgrouptype=16 \else\begingroup\fi
	\activatebar#1
	\ifnum\currentgrouptype=16 \else\endgroup\fi
}
\newcommand{\probability}{\Pr}
\newcommand{\prob}[1]{\probability\left[#1\right]}
\newcommand{\probover}[2]{\probability_{#1}\left[#2\right]}
\newcommand\given{\;\middle\vert\;}
\newcommand{\pav}[1]{{h_{#1}}}
\newcommand{\pavp}[1]{{h'_{#1}}}
\newcommand{\dtv}[2]{\operatorname{d}_{TV}\left( #1 ,\: #2 \right)}
\DeclareMathOperator{\harm}{Harm}
\newcommand{\alg}{\textsc{Alg}}
\DeclareMathOperator{\supp}{supp}
\title[Robust Committee Voting]{\texorpdfstring{Robust Committee Voting, \\ or The Other Side of Representation}{Robust Committee Voting, or The Other Side of Representation}}
\author{Gregory Kehne}
\email{kehne@wustl.edu}
\affiliation{%
  \institution{Washington University in Saint Louis}
  \city{Saint Louis}
  \state{Missouri}
  \country{USA}
}
\author{Ulrike Schmidt-Kraepelin}
\email{u.schmidt.kraepelin@tue.nl}
\affiliation{
  \institution{TU Eindhoven}
  \city{Eindhoven}
  \country{The Netherlands}
}
\author{Krzysztof Sornat}
\email{sornat@agh.edu.pl}
\affiliation{%
  \institution{AGH University}
  \city{Kraków}
  \country{Poland}
}
\begin{abstract}
We study approval-based committee voting from a novel perspective. 
While extant work largely centers around proportional representation of the voters, we shift our focus to the candidates while preserving proportionality.
Intuitively, candidates supported by similar voter groups should receive comparable representation.
Since deterministic voting rules cannot achieve this ideal, we develop randomized voting rules that satisfy ex-ante neutrality, monotonicity, and continuity, while maintaining strong ex-post proportionality guarantees.  

Continuity of the candidate selection probabilities proves to be the most demanding of our ex-ante desiderata.
We provide it via voting rules that are algorithmically stable, a stronger notion of robustness which captures the continuity of the committee distribution under small changes.
First, we introduce \textsc{Softmax-GJCR}, a randomized variant of the \emph{Greedy Justified Candidate Rule (GJCR)} \cite{brill23robust}, which carefully leverages slack in GJCR to satisfy our ex-ante properties. 
This polynomial-time algorithm satisfies EJR+ ex post, assures ex-ante monotonicity and neutrality, and provides $O(k^3/n)$-stability (ignoring $\log$ factors). 
Building on our techniques for \textsc{Softmax-GJCR}, we further show that stronger stability guarantees can be attained by (i) allowing exponential running time, (ii) relaxing EJR+ to an approximate $\alpha$-EJR+, and (iii) relaxing EJR+ to JR.  

We finally demonstrate the utility of stable voting rules in other settings.
In online dynamic committee voting, we show that stable voting rules imply dynamic voting rules with low expected recourse, and illustrate this reduction for \textsc{Softmax-GJCR}.
Our voting rules also satisfy a stronger form of stability that coincides with differential privacy, suggesting their applicability in privacy-sensitive domains.
\end{abstract}
\begin{document}

\hypersetup{colorlinks=true,
   linkcolor=teal,
   citecolor=teal, 
   filecolor=teal,
   urlcolor=teal,
}

\maketitle

\newpage
\section{Introduction}
Over the past decade, approval-based committee voting has attracted increasing attention in social choice theory. Here, each voter approves a subset of candidates, and a voting rule selects a $k$-sized subset (the \emph{committee}) using voters' preferences. Thanks to its simplicity, approval-based committee voting has multiple real-world applications: in proof-of-stake blockchain systems, committee elections are held on a daily basis to select validators \cite{cevallos2021verifiably,boehmer2024approval}; committee elections constitute a fundamental special case of participatory budgeting \cite{aziz2021participatory,rey2023computational}; and scholars have proposed using committee voting rules in online civic participation platforms, such as \emph{Polis}, to automatically aggregate large numbers of statements into a representative selection \cite{halpern23representation,fish2024generative}.

The academic literature has largely focused on designing committee voting rules that achieve the ideal of \emph{proportional representation}—that is, any $\alpha$-fraction of the voters should be represented by an $\alpha$-fraction of the committee. While proportional representation ensures that voters are adequately represented in the outcome, it does not prevent discrimination among candidates. This becomes particularly problematic when "candidates" are not objects but people or stakeholders who are invested in being part of the committee. 
For example, in blockchain applications, validators receive monetary compensation, making committee inclusion financially significant. Similarly, in participatory budgeting—a civic participation initiative where voters decide which projects (the candidates) should be funded in their municipality---projects are often proposed by nonprofit associations that are strongly invested in their implementation. 
In such scenarios, how can we justify why one candidate was selected while another candidate with similar support was not? 

Motivated by these considerations, we formulate the ideal of \emph{candidate fairness}: 
\begin{center}
    \emph{Candidates with similar sets of supporters should receive similar representation.}
\end{center}
Clearly, deterministic committee voting rules cannot satisfy candidate fairness, as "similar representation" is inherently impossible in a binary world where candidates are either \emph{included} in or \emph{excluded} from the committee.
We therefore turn to randomized committee voting rules and impose three well-established desiderata on the selection probabilities that, together, capture the essence of candidate fairness: (i) \emph{Neutrality}: Selection probabilities should be independent of candidate identities, ensuring that candidates with identical supporter sets have equal chances of being selected. (ii) \emph{Monotonicity}: Increasing support for a candidate should only increase its selection probability. (iii) \emph{Continuity}: Small changes in the election should result in only small changes in selection probabilities. We refer to these as \emph{ex-ante} properties. Beyond these, our randomized committee voting rules guarantee proportionality ex post by randomizing only over committees that satisfy well-established proportionality axioms, such as \emph{Extended Justified Representation} (\ejr{}) \cite{aziz17justified, lackner23multi}.

Continuity stands out among our three ex-ante guarantees, both conceptually and technically.
As any constant voting rule illustrates, continuity alone does not confer any fairness guarantees. 
However, existing axioms are strengthened when paired with continuity.
For instance, the combination of neutrality and continuity promotes the ideal of candidate fairness because if two candidates in a profile $P$ have almost identical, or even merely almost symmetric, sets of approving voters, then continuity stipulates that their selection probabilities are nearly the same as in a profile $P'$ where their approving voter sets are exactly symmetric. 
Neutrality then guarantees they are equally likely to be chosen in $P'$, and so their selection probabilities in $P$ must be similar to one another.
Moreover, combining continuity with ex-post proportionality confers a stronger form of proportionality.
For instance, consider proportionality axioms in the \jr{} family. 
If a randomized rule satisfies \jr{} ex post, then, in any outcome of the randomization, it provides a proportionality guarantee to any group of voters above a certain size threshold. However, if a group $G$ is barely below the threshold then \jr{} makes no promises; but if the rule is also continuous, then there is a high probability it provides the guarantee to $G$, since its output must be close to the output in a nearby profile under which \jr{} deterministically provides the guarantee. If $G$ supports a single candidate (or a small group of candidates), then 
these probabilistic guarantees are also directly inherited by the candidate(s).

Among our three ex-ante guarantees, continuity is also by far the most challenging to achieve. We ensure continuity through algorithmic stability, a stronger condition that guarantees continuity in the probability distribution over committees rather than over the selection probabilities. We use the term \textit{robustness} as a conceptual umbrella term capturing both continuity and stability. Stability, in turn, is of independent interest, with implications beyond our setting. We illustrate this by showcasing two applications. First, we show that our rules satisfy a stronger form of stability equivalent to \emph{differential privacy} \cite{dwork06calibrating}, a concept recently explored in voting theory \cite{mohsin22learning,david23local} and committee elections in particular \cite{li24differentially}. Intuitively, differential privacy in voting limits the information an adversary can infer about a group of voters' ballots, even given access to the  voting rule's output and all other ballots. Second, we leverage stability guarantees to design algorithms for \emph{dynamic committee voting}, where the goal is to minimize \emph{recourse}---the extent of changes made to the committee---as election data evolves over time. Dynamic settings are closely related to temporal voting, a topic of growing interest \cite{lackner2020perpetual,elkind2024temporal}.

\subsection{Our Approach and Results} 
After introducing preliminaries in \Cref{sec:prelims}, we present the following results: 

    In \Cref{sec:rando-rules}, we present our main result: a carefully constructed randomized variant of the \emph{Greedy Justified Candidate Rule (GJCR)}\footnote{
    As noted by \citet{SanchezFernandez25}, GJCR was previously known as Simple~EJR \cite{FernandezEL17}.
    We adopt the name GJCR, as its concise formulation serves as a natural basis for our enhancements.
    }
    \cite{brill23robust} satisfies all of our candidate-fairness desiderata as well as ex-post EJR+. We achieve the candidate continuity guarantee by showing that the same guarantee holds for the stronger notion of TV-stability. Our stability proof involves an analysis of the sequential random choices made by our rule, \noisygjcr{}, for which we use the technique of statistical coupling.

    In \Cref{sec:lower-bounds}, we establish a lower bound on the continuity guarantee for any rule satisfying the ex-post proportionality notion of \emph{justified representation} (JR). Intuitively, this implies that for any such rule, a single voter’s change can alter a candidate’s selection probability by up to \(\frac{k}{n}\) (committee size divided by the number of voters). While our guarantees for \noisygjcr{} do not tightly match this bound, we provide stronger lower bounds for our algorithm, suggesting that achieving significantly better guarantees requires a fundamentally different approach.

    In \Cref{sec:more-stability}, we introduce \noisygjcr{} variants that enhance continuity at the cost of relaxed proportionality or polynomial-time computability. \Cref{sec:gjcr-slack} shows that we can improve continuity a factor of \( k \) while maintaining a \((1-\frac{1}{\log k})\)-approximation of \ejrp{}. Meanwhile, \Cref{sec:noisy-greedy-JR} demonstrates that, with our approach, replacing EJR+ with JR improves continuity by at most a factor of \( \log k \). In \Cref{sec:exp-PAV}, we introduce a randomized variant of PAV that yields improved continuity for some realms of parameters while guaranteeing all other desiderata besides polynomial-time computability.
    A summary of our contributions from \Cref{sec:rando-rules,sec:more-stability} can be found in \Cref{tab:results}.

    \begin{table}[htb]
    \centering
    \resizebox{\columnwidth}{!}{
    \begin{tabular}{@{} l c cc cc @{}}
        \toprule
        Prop. & Randomized Rule & \makecell{Monotonicity\\and Neutrality} & \makecell{Poly. Time\\ Comput.} & \makecell{Continuity \\ ($\eps$-TV-stability)} & \makecell{Continuity \\ Lower Bounds} \\
        \midrule
        \midrule
        \multirow{1}{*}{\jr{} } 
        & \makecell{\noisygreedy \\ (\Cref{alg:gjcr-noisy-cap} for $\ell_{\max}=1$)} & \cmark & \cmark & $O\left(\frac{k^3}{n} \cdot \log \left(mn\right)\right)$ & 
        \multirow{1}{*}{$\Omega\left(\frac{k}{n}\right)$ } \\
        \midrule
        \multirow{3}{*}{\ejrp{} } 
        & \makecell{\noisygjcr \\ (\Cref{alg:gjcr-noisy})} & \cmark & \cmark & $O\left(\frac{k^3}{n} \log(k) \cdot \log \left(mn\right)\right)$ & 
        \multirow{3}{*}{$\Omega\left(\frac{k}{n}\right)$ } \\
        &  \makecell{\noisypav \\ (\Cref{alg:exp-PAV})}
        & \cmark & \xmark & $O\left(\frac{k^3}{n} \log (k) \cdot \log \left(m \right)\right)$ \\
        \midrule
        \multirow{1}{*}{$\alpha$-\ejrp{} } 
        & \makecell{\noisygjcrslack \\ (\Cref{alg:gjcr-noisy-slack})} & \cmark & \cmark & $O\left(\frac{k^2}{n (1-\alpha + \frac{2-\alpha}{k})} \cdot \log (k) \cdot \log (mn)\right)$ & $\Omega \left(\frac{1}{n}\right)$ \\
        \bottomrule
    \end{tabular}
    }
    \caption{Overview of our results. Continuity bounds describe the maximum change of a candidate's selection probability when some voter changes all of their approvals. Later, we present fine-grained bounds, parameterized by the number of approvals changed.
    The guarantee for \noisypav{} is stated under the assumption that $n = O(m^k)$.
    For \noisygjcrslack, note that even $\alpha = 1-1/\log k$ yields $\widetilde O\left(k^2/n\right)$ continuity.
    }
    \label{tab:results}
    \vspace*{-10pt}
\end{table}

    In \Cref{sec:dp-abc}, we consider the extent to which our rules satisfy not only additive but additive-multiplicative bounds on the change in selection probability of any subset of committees.
    This corresponds to the concept of $(\eps,\delta)$-differential privacy, and implies strong privacy guarantees for individual voters.
    We demonstrate that \noisypav{} is better able to achieve $(\eps,\delta)$-differential privacy for small $\delta$ than our greedy rules, identify lower bounds for any rule that satisfies proportionality ex post, discuss prior work on differentially private committee voting, and explore the extent to which relaxations of proportionality admit stronger differential privacy guarantees.

    In \Cref{sec:dynamic}, we demonstrate that TV-stable rules can be used to design dynamic voting rules that maintain a winning committee through a sequence of incremental changes to the approval profile, with minimal alterations to the winning committee.
    We illustrate this reduction for \noisygjcr{}, which also satisfies EJR+ ex post for every committee in an output sequence.

\subsection{Related Work}

\paragraph{Proportional representation}
We build upon \emph{(extended) justified representation ((E)JR)}, a family of proportional representation notions first proposed by \citet{aziz17justified}, which have since received significant attention (see, e.g., \cite{lackner23multi}). In particular, we study \emph{EJR+} \cite{brill23robust}, a strengthening of EJR that is satisfied by virtually all rules that satisfy EJR. Additionally, we consider relaxations of EJR+, which align with approximations of EJR \cite{halpern23representation} and approximations of core stability \cite{jiang20approximately,MSW+22a}. %

\paragraph{Monotonicity}
The literature distinguishes between two main types of monotonicity. The first, \textit{committee monotonicity}, requires that increasing the committee size \( k \) should not cause any candidate to lose their seat. Whether this property is compatible with \ejr{}(+) remains a major open problem \cite{lackner23multi}. The second type, introduced by \citet{fernandez19monotonicity}, concerns scenarios where a voter provides additional approval to a candidate (or a set of candidates). A rule is said to be \textit{candidate monotone} (\textit{support monotone}, respectively) if these additional approvals do not result in that candidate (or set of candidates) losing a seat. Here we focus on a randomized generalization of candidate monotonicity. 
The deterministic compatibility of support monotonicity with \ejr{}(+) remains open \cite[Q13]{lackner23multi}.   

\paragraph{Randomized committee voting}
Recently, several works studied randomized committee voting, thereby focusing on ex-ante proportionality, without analyzing candidate selection probabilities. \citet{suzuki2024maximum} introduce a rule satisfying a new ex-ante proportionality notion (\emph{group resource proportionality}) alongside ex-post proportionality axioms. However, their approach extends a deterministic rule fractionally when fewer than $k$ candidates are selected. Consequently, this approach does not yield any continuity guarantees. Similarly, \citet{aziz2023best} closely build on the method of equal shares~\cite{PetersS20equalshares}, which does not easily yield continuity guarantees. Moreover, \citet{cheng19group} introduce \emph{stable lotteries} for committee voting and prove their existence. The stability of \citeauthor{cheng19group} generalizes core stability to a randomized setting, and should not be confused with the distributional stability we use, which concerns a rule's robustness to small input changes. %

\paragraph{Robustness} To our knowledge, the stability of randomized approval-based committee rules, as studied here, is novel, though similar notions for deterministic rules, termed \emph{robustness}, have been explored. \citet{gawron2019robustness} showed that PAV can entirely change its outcome if a single voter approves one additional candidate. In contrast, we introduce a randomized variant of PAV in \Cref{sec:more-stability} that mitigates this issue. \citet{boehmer2023robustness} experimentally analyzed robustness in participatory budgeting, finding that proportional rules tend to be less robust. \citet{kraiczy23properties} studied a related robustness concept for a local search variant of PAV, showing that an outcome of $O(n / k^2)$-LS-PAV still satisfies \ejr{} even if an $\Omega(1 / k^2)$ fraction of voters are added or removed. Finally, our stability questions can be seen as dual to those in the smoothed analysis of voting rules \cite{xia20smoothed,flanigan23smoothed}. While smoothed analysis asks whether a given desirable property typically holds under small perturbations of a given profile $E$, we ask how similar a rule can remain under perturbations of $E$, while ensuring a desirable property for $E$ itself.

We discuss prior work on differentially private and online voting in \Cref{sec:dp-abc,sec:dynamic}.

\section{Preliminaries} \label{sec:prelims}

In the first part of this section, we introduce approval-based committee elections, randomized voting rules and our desiderata for these rules. In \Cref{subsec:prelimTV}, we introduce the concept of stability for randomized algorithms and the technique of statistical coupling. 

\subsection{Elections and Randomized Voting Rules}
Let $C$ be a set of $m$ candidates and $N$ be a set of $n$ voters. \emph{An approval set} of voter $i \in N$ is a subset $A_i \subseteq C$, which we interpret to be the set of candidates that are ``approved'' by voter $i$. 
A collection of all voters' approval sets is denoted by $A = (A_i)_{i \in N}$ and called \emph{an approval profile}.
The \emph{set of supporters} of a  candidate $c \in C$ is the set $N_c = \{ i \in N: c \in A_i\}$, i.e., the set of voters approving $c$.
We call $k\in \mathbb{N}$ a \emph{committee size} and we define the set of committees to be all subsets of size at most $k$, i.e., $\W \defeq \{W \subseteq C: |W| \leq k \}$. \footnote{While allowing committee rules to select fewer than $k$ candidates is slightly non-standard, it has been done, e.g., in the context of the method of equal shares \cite{PetersS20equalshares}. For proportional representation, this is mild because
any smaller committee can be arbitrarily extended to one of size $k$ with the same proportionality guarantee. In our case, we have to be a bit more careful to also maintain our ex-ante properties. We can maintain neutrality and continuity, for example, by 
sampling candidates uniformly at random to fill the remaining seats.}
An \emph{approval-based committee election}, or simply \emph{an election}, is a tuple $E = (C,N,A,k)$.

We study \emph{randomized voting rules} $f$ (later simply \emph{rules}), which take an election $E=(C,N,A,k)$ as input and output a random committee $W \in \mathcal{W}$.
We denote the probability distribution of a random variable $X$ by $\cD_{X}$, and let $\cD(x) \defeq \probover{X \sim \cD}{X = x}$.
Each committee distribution $\cD_{f(E)}$ induces 
candidate \emph{selection probabilities} $\pi(f(E)) = (\pi_c)_{c\in C}$, where $\pi_c \defeq \Pr[c \in f(E)]$ and $\sum_c \pi_c \leq k$.

\subsubsection{Ex-ante Properties}
We capture candidate fairness through three properties based on \emph{selection probabilities}: \emph{neutrality}, \emph{monotonicity}, and \emph{continuity}, defined below. Intuitively, neutrality ensures candidates are treated equally, independent of their identities. %

\begin{definition}[{Neutrality \cite[Section~3.1]{lackner23multi}}] \label{def:candidate-neut}
    A randomized voting rule $f$ is \emph{neutral}, if
    for every pair of elections $E = (C,N,A,k), E' = (C',N,A',k)$ that differ by renaming the candidates, i.e.,
    there exists a bijection $\sigma \colon C \to C'$ such that
    $(\sigma(A_i))_{i \in N} = (A'_i)_{i \in N}$,
    where $\sigma(A_i) = \{\sigma(c) : c \in A_i \}$,
    we have $\pi_{\sigma(c)}(f(E)) = \pi_c(f(E'))$ for all $c \in C$.
\end{definition}

\paragraph{Small Changes to the Approval Profile} \label{sec:small-approval-changes}
In order to formalize monotonicity and continuity, 
we must make the notion of `small changes in the election' precise.
There are a few choices for what it means for approval profiles $A$ and $A'$ to be close.
The first is that they differ on only one approval; that $c \not\in A_v$ and $c \in A'_v$ for some candidate $c$ and voter $v$, and $A$ and $A'$ agree on all other approvals.
The second is that $A$ and $A'$ differ on only one voter; that is, $A_v \neq A_v'$ for some voter $v$, and $A_{v'} = A_{v'}'$ for all other voters. 
We will use this first change for monotonicity, since it lets us consider a candidate $c$ receiving one additional approval in isolation.
For continuity and stability, we will use the second and let $A$ and $A'$ differ on multiple approvals of a single voter. 
More precisely, we will use $\Delta_v \defeq \abs{A_v \oplus A_v'}$ to denote the number of candidates for which $v$ changes approval, where $S \oplus T$ denotes the symmetric difference of sets. 
In pursuit of beyond-worst-case guarantees, we assume throughout that $\Delta_v \leq \Delta$ for some known $\Delta$. 
Then the single-approval-change setting corresponds to $\Delta=1$, and the setting of arbitrary single-voter changes corresponds to $\Delta = m$.

Monotonicity requires that increasing the support of a candidate never decreases its probability of inclusion in the winning committee.
If we want to differentiate this form of monotonicity from others in the literature, we also refer to it as \emph{candidate monotonicity}. 

\begin{definition}[Monotonicity] \label{def:candidate-mon}
    The randomized voting rule $f$ is \emph{monotone} if,
    for every pair of elections $E = (C,N,A,k)$ and $ E' = (C,N,A',k)$ such that $A$ and $A'$ differ on only one voter $v$ who approves one additional candidate $c^*$, i.e., $A'_v = A_v \cup \{c^*\}$, then we have 
        $\pi_{c^*}(f(E')) \geq \pi_{c^*}(f(E))$.
\end{definition}

Continuity requires that the change in the selection probabilities of any candidate is bounded after the change of one voter's approvals. If we want to differentiate this form of continuity from stability of the committee distributions, we also refer to it as \emph{candidate continuity}. 
We define continuity according to our parameter $\Delta$, but omit $\Delta$ when it is clear from context.
\begin{definition}[Continuity] \label{def:candidate-cont}
    The randomized voting rule $f$ is \emph{$L$-$\Delta$-continuous} if,
    for every committee size $k$ and for every pair of elections $E = (C,N,A,k)$ and $ E' = (C,N,A',k)$ such that $A$ and $A'$ differ on only one voter $v$ for which $\abs{A_v \oplus A_v'} \leq \Delta$, for all $c \in C$ we have 
        $\abs{\pi_c(f(E')) - \pi_c(f(E))} \leq L$.
\end{definition}

\subsubsection{Ex-post Proportionality}
We now define the standard proportionality axioms Justified Representation (\jr{}) and Extended Justified Representation+~(\ejrp{})~\cite{aziz17justified, brill23robust}.
To this end, we call a subset of the voters $N' \subseteq N$ \emph{$\beta$-large} for some $\beta \in \mathbb{R}$, if $|N'| \geq  \frac{\beta n}{k}.$

We directly introduce $\alpha$-approximate generalizations of the axioms, wherein a group must be larger by a factor of $\alpha$ in order to demand representation under the axiom; the standard definitions then correspond to $\alpha=1$. 
While there are a number of ways to relax these proportionality axioms, our choice is arguably the most common
\cite{jiang20approximately,do22online,halpern23representation}.

\begin{definition}[$\alpha$-\jr{}] \label{def:JR}
    For $\alpha \in (0,1]$, a committee $W$ provides \emph{$\alpha$-\jr{}} if there is no candidate $c \in C \setminus W$ and $\frac{1}{\alpha}$-large subset of voters $N' \subseteq N$ such that
    \[
        c \in \bigcap_{i \in N'} A_i
        \quad \text{and} \quad
        \max_{i \in N'} (\abs{A_i \cap W}) = 0.
    \]
\end{definition}

\begin{definition}[$\alpha$-\ejrp{}] \label{def:EJRp}
    For $\alpha \in (0,1]$, a committee $W$ provides \emph{$\alpha$-\ejrp{}} if
    there is no candidate $c \in C \setminus W$, $\ell \in [k]$ and some
    $\frac{\ell}{\alpha}$-large subset of voters $N' \subseteq N$ such that
    \[
        c \in \bigcap_{i \in N'} A_i
        \quad \text{and} \quad
        \max_{i \in N'} (\abs{A_i \cap W}) < \ell.
    \]
\end{definition}

We say that a randomized voting rule $f$ satisfies $\alpha$-JR ($\alpha$-EJR+, respectively) \emph{ex post}, if for every election $E$, every committee in the support of $\cD_{f(E)}$ satisfies $\alpha$-JR ($\alpha$-EJR+, respectively).

\subsection{Total Variation Distance and Stability} \label{subsec:prelimTV}

We define \emph{the total variation (TV) distance} between two probability distributions that measures the largest absolute difference between the probabilities that the distributions assign to the same event.

\begin{definition}[Total Variation Distance] \label{def:tv-dist}
    For probability distributions $\mu$, $\nu$ over $\Omega$, the total variation distance between $\mu$ and $\nu$ is $\dtv{\mu}{\nu} \defeq \max_{A \subseteq \Omega} (\mu(A) - \nu(A))$.
\end{definition}
More generally, $d_{TV}$ is a metric on the space of probability distributions over a common measurable space (identified up to sets of measure zero).
For countable $\Omega$ like ours, this admits a standard and useful characterization in terms of the $L^1$ norm (see, e.g., \citet[Prop. 4.2]{levin17markov}).
\begin{proposition}
    If $\Omega$ is countable, then $\dtv{\mu}{\nu} = \frac{1}{2} \sum_{a \in \Omega} \abs{\mu(a) - \nu(a)}$.
\end{proposition}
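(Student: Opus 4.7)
The plan is to prove the two inequalities $\dtv{\mu}{\nu} \leq \frac{1}{2} \sum_{a \in \Omega} \abs{\mu(a) - \nu(a)}$ and $\dtv{\mu}{\nu} \geq \frac{1}{2} \sum_{a \in \Omega} \abs{\mu(a) - \nu(a)}$ by identifying an optimal witness set for the maximum appearing in \Cref{def:tv-dist}. Specifically, define $B \defeq \{a \in \Omega : \mu(a) \geq \nu(a)\}$ and its complement $B^c = \Omega \setminus B$. I expect $B$ to attain the maximum: to extract positive contribution to $\mu(A) - \nu(A)$, one should include exactly those points where $\mu$ dominates $\nu$.

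First I would show $\dtv{\mu}{\nu} \geq \frac{1}{2} \sum_{a \in \Omega} |\mu(a) - \nu(a)|$ by evaluating at $A = B$. Writing $\mu(B) - \nu(B) = \sum_{a \in B}(\mu(a) - \nu(a)) = \sum_{a \in B} |\mu(a) - \nu(a)|$, and invoking $\mu(\Omega) = \nu(\Omega) = 1$ to get $\mu(B) - \nu(B) = \nu(B^c) - \mu(B^c) = \sum_{a \in B^c} |\mu(a) - \nu(a)|$, the two expressions average to $\frac{1}{2}\sum_{a \in \Omega} |\mu(a) - \nu(a)|$, proving the lower bound.

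For the matching upper bound, take any $A \subseteq \Omega$. Then $\mu(A) - \nu(A) = \sum_{a \in A \cap B}(\mu(a) - \nu(a)) + \sum_{a \in A \cap B^c}(\mu(a) - \nu(a))$. The first sum is bounded above by $\sum_{a \in B}(\mu(a) - \nu(a))$ (extending to all of $B$ only adds nonnegative terms), while the second sum is nonpositive and so bounded above by $0$. Hence $\mu(A) - \nu(A) \leq \mu(B) - \nu(B) = \frac{1}{2}\sum_{a \in \Omega}|\mu(a) - \nu(a)|$ by the computation above, and taking the supremum over $A$ gives $\dtv{\mu}{\nu} \leq \frac{1}{2}\sum_{a \in \Omega}|\mu(a) - \nu(a)|$.

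The only technical subtlety is justifying the termwise manipulations of the series when $\Omega$ is countably infinite, which is handled by the fact that $\sum_{a \in \Omega} \mu(a) = \sum_{a \in \Omega} \nu(a) = 1$ makes every sum involved absolutely convergent, so rearrangements and splittings across $B$ and $B^c$ are valid. There is no genuine obstacle here; the proof is short and standard, and is included primarily to fix the characterization that will be used repeatedly in the stability analyses of \noisygjcr{} and its variants.
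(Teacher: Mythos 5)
Your proof is correct and is the standard argument (optimal witness set $B = \{a : \mu(a) \geq \nu(a)\}$, plus the averaging trick using $\mu(\Omega)=\nu(\Omega)=1$); the paper does not prove this proposition itself but cites it from \citet[Prop.~4.2]{levin17markov}, where essentially the same argument appears. Nothing further is needed.
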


We present rules where small changes in the approval profile \( A \) lead to only bounded shifts in the winning committee distribution. This aligns with algorithmic stability, a key concept in differential privacy, machine learning, and dynamic algorithms \cite{beimel2022dynamic}. Following \citet{bassily21algorithmic}, who use both notions for adaptive data analysis, we begin with the weaker stability concept.
We adapt these definitions to our specific setting to accommodate beyond-worst-case guarantees parameterized by $\Delta$, our assumed upper bound on the number of approvals a single voter changes.
\begin{definition}[TV stability] \label{def:tv-stability} 
    A randomized rule $f:\cE \rightarrow \cW$ is \emph{$\eps$-$\Delta$-TV-stable} if for all $E, E' \in \cE$ such that $E$ and $E'$ differ on one voter $v$ and $\abs{A_v \oplus A_v'} \leq \Delta$, 
    \[
        \dtv{\cD_{f(E)}}{\cD_{f(E')}} \leq \eps.
    \]
\end{definition}
\begin{definition}[Max-KL stability] \label{def:kl-stability} 
    A randomized rule $f:\cE \rightarrow \cW$ is \emph{$(\eps,\delta)$-$\Delta$-max-KL stable} if for all $E, E' \in \cE$ such that $E$ and $E'$ differ on exactly one voter $v$ for which $\abs{A_v \oplus A_v'} \leq \Delta$, and for all $R \subseteq \cW$,
    \[
        \prob{f(E) \in R} \leq e^\eps \cdot \prob{f(E') \in R} + \delta.
    \]
\end{definition}
\noindent
When $\Delta = m$, max-KL stability coincides with $(\eps,\delta)$-differential privacy  ($(\eps,\delta)$-DP) \cite{dwork06calibrating}, where $\delta=0$ is known as ``pure'' differential privacy. 
We discuss DP in \Cref{sec:dp-abc}.

One key observation is that TV stability implies candidate continuity. Continuity relates to the selection probabilities, $\pi_c$, which correspond to expectations of projections of $f(E)$, and so $\abs{\pi_c' - \pi_c}$ is at most the TV distance between $\cD_{f(E)}$ and $\cD_{f(E')}$. The following can be seen as a consequence of intuitive closure property of $\eps$-TV-stability called \emph{post-processing} \citep{bassily21algorithmic}; we also provide a direct proof in the appendix.
\begin{restatable}{proposition}{etvdistanceToContinuity} \label{obs:marginalseasier}
    If a rule $f$ is $\eps$-$\Delta$-TV-stable, then $f$ is $\eps$-$\Delta$-continuous.
\end{restatable}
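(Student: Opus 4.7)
The plan is to verify the statement directly from the definitions, since it is essentially just observing that candidate selection probabilities are marginals of the committee distribution and therefore cannot shift by more than the total variation distance.

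First I would fix an arbitrary pair of elections $E = (C, N, A, k)$ and $E' = (C, N, A', k)$ that differ on exactly one voter $v$ with $|A_v \oplus A_v'| \le \Delta$, and fix an arbitrary candidate $c \in C$. The goal is to show $|\pi_c(f(E')) - \pi_c(f(E))| \le \eps$.

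Next I would identify the relevant event in the sample space $\cW$, namely
\[
    R_c \defeq \{ W \in \cW : c \in W \}.
\]
By the definition of the selection probability, $\pi_c(f(E)) = \Pr[f(E) \in R_c] = \cD_{f(E)}(R_c)$, and similarly for $E'$. Applying the definition of TV distance gives $\cD_{f(E)}(R_c) - \cD_{f(E')}(R_c) \le \dtv{\cD_{f(E)}}{\cD_{f(E')}}$; applying it to the complement event $\cW \setminus R_c$ (equivalently, swapping the roles of $E$ and $E'$) gives the reverse inequality. Combining these,
\[
    |\pi_c(f(E')) - \pi_c(f(E))| \le \dtv{\cD_{f(E)}}{\cD_{f(E')}}.
\]

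Finally, by the hypothesis that $f$ is $\eps$-$\Delta$-TV-stable and the constraint that $E$ and $E'$ differ only on voter $v$ with $|A_v \oplus A_v'| \le \Delta$, the right-hand side is at most $\eps$. Since $E$, $E'$, and $c$ were arbitrary subject to the continuity hypothesis, this proves $\eps$-$\Delta$-continuity. There is no real obstacle here; the only subtlety is remembering that the one-sided definition of $\dtv{\cdot}{\cdot}$ stated in the paper in fact bounds the two-sided difference $|\mu(A)-\nu(A)|$ for every measurable $A$, which follows by taking complements.
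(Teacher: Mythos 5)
Your proof is correct and takes essentially the same approach as the paper: both bound $\abs{\pi_c(f(E')) - \pi_c(f(E))}$ by applying the TV-stability hypothesis to the event $\{W \in \cW : c \in W\}$, with the paper merely spelling out via an explicit decomposition (over the set of committees whose probability increases) the fact that the one-sided $\max$ in \Cref{def:tv-dist} bounds the two-sided difference, which you handle by the complement argument.
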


Lastly, we introduce statistical coupling, which we use in our stability proofs and in \Cref{sec:dp-abc}.

\begin{definition}[Coupling] \label{def:coupling}
    Consider probability distributions $\mu$, $\nu$ over $\Omega$. A probability distribution $\pi$ over $\Omega \times \Omega$ is a \emph{coupling} of $\mu$ and $\nu$ if it preserves the marginals of $\mu$ and $\nu$; that is, if for $(X, Y) \sim \pi$ the distribution of $X$ is $\mu$ and the distribution of $Y$ is $\nu$.
\end{definition}

\begin{lemma}[Coupling Lemma \cite{aldous1983random}] \label{lem:coupling}
For probability distributions $\mu$, $\nu$ over a ground set $\Omega$ and a coupling $\pi$ of them,
    \[
        \probability_{(X,Y) \sim \pi}[X \neq Y] \geq \dtv{\mu}{\nu}, \text{ and there exists a coupling } \pi^* \text{ with }
        \probability_{(X,Y) \sim \pi^*}[X \neq Y] = \dtv{\mu}{\nu}.\]
\end{lemma}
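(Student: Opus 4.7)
The plan is to prove the Coupling Lemma in two separate parts: first, the inequality $\Pr[X \neq Y] \geq \dtv{\mu}{\nu}$ for an arbitrary coupling $\pi$; second, the explicit construction of a coupling $\pi^*$ attaining equality (the so-called maximal coupling).

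For the inequality, I would fix any coupling $\pi$ and any event $A \subseteq \Omega$. Because $X \sim \mu$ and $Y \sim \nu$ under $\pi$,
\[
    \mu(A) - \nu(A) = \probability_{(X,Y)\sim\pi}[X \in A] - \probability_{(X,Y)\sim\pi}[Y \in A].
\]
Splitting each term by whether the coordinates agree and cancelling the common piece $\Pr[X \in A, Y \in A]$ gives $\mu(A) - \nu(A) = \Pr[X \in A, Y \notin A] - \Pr[Y \in A, X \notin A] \leq \Pr[X \in A, Y \notin A] \leq \Pr[X \neq Y]$. Taking the maximum over $A$ (by \Cref{def:tv-dist}) yields $\dtv{\mu}{\nu} \leq \probability_{(X,Y)\sim\pi}[X \neq Y]$.

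For the construction of $\pi^*$, I would use the $L^1$ characterization of total variation (the proposition stated immediately before the lemma). Define the pointwise minimum measure $\gamma(x) \defeq \min(\mu(x), \nu(x))$ for every $x \in \Omega$. A short calculation using $\mu(x) + \nu(x) = \min(\mu(x),\nu(x)) + \max(\mu(x),\nu(x))$ together with the $L^1$ formula gives $p \defeq \sum_{x} \gamma(x) = 1 - \dtv{\mu}{\nu}$. Assuming the non-degenerate case $0 < p < 1$, let $\overline\gamma \defeq \gamma/p$ and set $\mu' \defeq (\mu - \gamma)/(1-p)$ and $\nu' \defeq (\nu - \gamma)/(1-p)$, both genuine probability distributions. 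The coupling $\pi^*$ samples a Bernoulli variable $B$ with $\Pr[B=1] = p$; if $B=1$, it draws $Z \sim \overline\gamma$ and outputs $(X,Y) = (Z,Z)$, otherwise it draws $X \sim \mu'$ and $Y \sim \nu'$ independently.

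The verification then has two steps. First, the marginals: for each $x$, $\Pr[X = x] = p \cdot \overline\gamma(x) + (1-p) \cdot \mu'(x) = \gamma(x) + (\mu(x) - \gamma(x)) = \mu(x)$, and symmetrically for $Y$. Second, the probability of disagreement: $\mu'$ is supported on $\{x : \mu(x) > \nu(x)\}$ while $\nu'$ is supported on $\{x : \nu(x) > \mu(x)\}$, and these sets are disjoint, so conditional on $B=0$ we have $X \neq Y$ almost surely, while conditional on $B=1$ we have $X = Y$. Therefore $\Pr[X \neq Y] = 1 - p = \dtv{\mu}{\nu}$. The degenerate cases $p=0$ (in which $\mu$ and $\nu$ have disjoint supports and any product coupling works) and $p=1$ (in which $\mu = \nu$ and the diagonal coupling works) are handled separately by inspection. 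The main subtlety to get right is checking that the supports of $\mu'$ and $\nu'$ are indeed disjoint, since this is what drives the equality $\Pr[X \neq Y] = 1-p$; everything else is bookkeeping.
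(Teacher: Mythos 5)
Your proof is correct: the first half is the standard event-splitting argument for the lower bound, and the second half is the standard maximal-coupling construction via the pointwise minimum measure, including the necessary check that the residual measures $\mu'$ and $\nu'$ have disjoint supports. The paper itself does not prove this lemma --- it is imported as a classical result with a citation to Aldous --- so there is no in-paper argument to compare against. One small remark: your equality construction leans on the $L^1$ characterization $\dtv{\mu}{\nu} = \frac{1}{2}\sum_{x}\abs{\mu(x)-\nu(x)}$, which the paper states only for countable $\Omega$; this is harmless here since every application in the paper is to finite ground sets (candidates or committees), but the lemma as stated for a general ``ground set'' would need the measure-theoretic version of the same argument.
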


\section{A Proportional and Candidate-Fair Voting Rule}\label{sec:rando-rules}

In this section, we present a randomized voting rule that is both ex-post proportional and ex-ante candidate-fair. 
Before presenting our rule, we explore natural approaches to designing such a rule and demonstrate why they fall short of meeting at least one of our desiderata.

\paragraph{Uniform Selection} A natural idea is to sample a committee uniformly at random from the set of all \ejrp{} committees. We call this voting rule \textsc{Uniform\ejrp{}}. Indeed, this rule is monotone. All missing proofs can be found in the appendix.

\newcommand{\Wf}{\ensuremath{\mathcal{E}}}
\newcommand{\Ws}{\ensuremath{\mathcal{E}'}}
\newcommand{\Wc}{\ensuremath{\mathcal{W}_c}}
\newcommand{\uejr}{\textsc{Uniform\ejrp{}}\xspace}

\begin{restatable}{theorem}{thmuniformEJR}\label{thm:uniformEJR}
    \textsc{Uniform\ejrp{}} is ex-ante monotone. 
\end{restatable}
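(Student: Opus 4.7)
The plan is to directly compute $\pi_{c^*}(f(E))$ and $\pi_{c^*}(f(E'))$ as ratios of EJR+ committees containing $c^*$ to all EJR+ committees, and then compare these ratios via two set-inclusion claims about how the family of EJR+ committees transforms when voter $v$ approves the additional candidate $c^*$. Let $\Wf$ denote the set of EJR+ committees under $E=(C,N,A,k)$ and $\Ws$ the set of EJR+ committees under $E'=(C,N,A',k)$, and let $\Wc = \{W \in \W : c^* \in W\}$. Since GJCR guarantees existence, both $\Wf$ and $\Ws$ are nonempty, so $\pi_{c^*}(f(E)) = |\Wf \cap \Wc|/|\Wf|$ and $\pi_{c^*}(f(E')) = |\Ws \cap \Wc|/|\Ws|$. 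The goal is therefore to prove the ratio inequality $|\Ws \cap \Wc| \cdot |\Wf| \geq |\Wf \cap \Wc| \cdot |\Ws|$.

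The two key claims to establish are:
\begin{itemize}
\item[(a)] $\Wf \cap \Wc \subseteq \Ws \cap \Wc$ — any EJR+ committee containing $c^*$ remains EJR+ after the approval change.
\item[(b)] $\Ws \setminus \Wc \subseteq \Wf \setminus \Wc$ — any EJR+ committee under $E'$ not containing $c^*$ was already EJR+ under $E$.
\end{itemize}
Writing $a,b$ and $a',b'$ for the cardinalities of the two pairs $\Wf \cap \Wc, \Wf \setminus \Wc$ and $\Ws \cap \Wc, \Ws \setminus \Wc$, (a) gives $a' \geq a$ and (b) gives $b' \leq b$. Then $a'b \geq ab \geq ab'$ immediately yields $\frac{a'}{a'+b'} \geq \frac{a}{a+b}$, which is exactly monotonicity.

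The content lies in verifying (a) and (b), which I expect to be the main (though not deep) obstacle: we must chase the EJR+ definition carefully through the single-voter approval change. For (a), take $W \in \Wf \cap \Wc$ and suppose for contradiction that some candidate $c \in C \setminus W$, some $\ell$, and some $\ell/\alpha$-large group $N' \subseteq N$ witness an EJR+ violation under $E'$. Split on whether $v \in N'$: if $v \notin N'$, approvals on $N'$ are unchanged, so the same triple would already witness a violation under $E$; if $v \in N'$, then $c \in A'_v$ forces either $c = c^*$ (ruled out since $c^* \in W$ but $c \notin W$) or $c \in A_v$, so again $c \in \bigcap_{i \in N'} A_i$, and since $|A_i \cap W| \leq |A'_i \cap W|$ (with strict difference only possibly at $i=v$ via $c^* \in W$), the max is no larger under $E$, contradicting EJR+ of $W$ under $E$. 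For (b), take $W \in \Ws \setminus \Wc$ and any triple $(c,\ell,N')$ purporting to witness a violation under $E$; since $A_i \subseteq A'_i$, the joint-approval condition lifts to $E'$, and since $c^* \notin W$ we have $|A_v \cap W| = |A'_v \cap W|$, so the max approval intersection is identical and EJR+ of $W$ under $E'$ gives the required bound.

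The trickiest step is the sub-case $v \in N'$ in (a), where one has to notice that the excluded-candidate condition $c \notin W$ together with $c^* \in W$ rules out $c = c^*$ and allows the violation to be pulled back to profile $E$. Once these two set inclusions are in hand, the arithmetic comparison of ratios is immediate.
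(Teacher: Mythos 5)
Your proposal is correct and takes essentially the same route as the paper's proof: both rest on the two set inclusions $\Wf \cap \Wc \subseteq \Ws \cap \Wc$ and $\Ws \setminus \Wc \subseteq \Wf \setminus \Wc$, verified by chasing the \ejrp{} definition through the single approval change exactly as you do, followed by the same elementary ratio comparison (your cross-multiplication $a'b \geq ab \geq ab'$ is equivalent to the paper's chain of inequalities). The only blemish is a stray $\alpha$ in your description of the violating group's size (for \ejrp{} here the group is $\tfrac{\ell}{1}$-large), which does not affect the argument.
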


However, \uejr is far from being continuous, as the following example illustrates. 

\begin{example} \label{ex:continuity}
    Let $n, k \in \mathbb{N}$ be such that $n$ is divisible by $k$. We divide the voters into $k$ equal-sized groups $N = N_1 \cup \dots \cup N_k$. The set of candidates is $C = \{c_1, \dots, c_k\} \cup \{d_1, \dots, d_{m-k}\}$, where for each $i \in [k]$, the candidate $c_i$ is approved exactly by those voters in $N_i$. All candidates $d_i, i \in [m-k]$ are dummy candidates and are not approved. We call the resulting election $E$. There exists exactly one committee satisfying \ejrp{}, namely, the committee $\{c_1, \dots, c_k\}$. Thus, the selection probabilities for \uejr for the election $E$ are $1$ for all candidates in $\{c_1, \dots, c_k\}$ and $0$ for all dummy candidates.

    \vspace{.2cm}

    \begin{center}
    \begin{tikzpicture}
    \node at (0.8,0) {\(1\)};
    \node at (1.5,0) {\(2\)};
    \node at (2.25,0) {...};
    \node at (3,0) {\(\frac{n}{k}\)};
    \node at (4,0) {\(\frac{n}{k} + 1\)};
    \node at (5,0) {...};
    \node at (6,0) {\(\frac{2n}{k}\)};
    \node at (7,0) {\(\frac{2n}{k} + 1\)};
    \node at (7.75,0) {...};
    \node at (8.5,0) {\(\frac{(k-1)n}{k}\)};
    \node at (10,0) {\(\frac{(k-1)n}{k} + 1\)};
    \node at (11.245,0) {...};
    \node at (12,0) {\(n\)};

    \draw[thick, rounded corners, fill=teal!20] (0.5,0.5) rectangle (3.3,1.5);
    \draw[thick, dashed, rounded corners, fill=teal!40] (1.2,0.5) rectangle (3.3,1.5);
    \node at (1.8,1){$c_1$};
    \draw[thick, rounded corners,fill=black!40] (3.5,0.5) rectangle (6.3,1.5);
    \node at (4.8,1){$c_2$};
    \node at (7.8,1){$\dots$};
    \draw[thick, rounded corners,fill=black!40] (9.4,0.5) rectangle (12.2,1.5);
    \node at (10.7,1){$c_k$};
\end{tikzpicture}
    \end{center}

    Now, we construct the election $E'$ by deleting the approval of voter $1$ for candidate $c_1$. The set of \ejrp{} committees consists of any committee $W$ with $\{c_2, \dots, c_k\} \subseteq W$. Thus, the selection probability for $c_1$ drops from $1$ to $\frac{1}{m-k+2}$.
\end{example}

\Cref{ex:continuity} demonstrates that uniformly randomizing over \ejrp{} committees violates continuity and that the same issue arises with weaker proportionality notions like \jr{}. It also shows that a continuous algorithm must distinguish between nearly crucial and non-crucial candidates for \ejrp{}.

A natural approach is to randomize over \ejrp{} committees, weighting probabilities by a suitable measure of proportionality. In \Cref{sec:exp-PAV}, we show that this idea indeed gives rise to an ex-ante monotone and continuous rule, however, this rule is not polynomial-time.

We explore a second approach: introducing randomness in deterministic polynomial-time committee rules that satisfy \ejrp{}. A natural candidate would be the \emph{method of equal shares (MES)} \cite{PetersS20equalshares}, however, since MES is very sensitive to small changes in the input, we are so far unable to design a continuous randomized variant of it. %
A second candidate would be a local search variant of PAV; however, this has been shown to even violate candidate monotonicity \cite[Proposition 12]{kraiczy23properties}. In contrast, we will demonstrate that the third contender, the greedy justified candidate rule of \citet{brill23robust}, indeed allows for a carefully crafted randomized variant that we show to satisfy all of our desiderata.

\subsection{Softmax Greedy Justified Candidate Rule (\noisygjcr)} \label{sec:noisy-GJCR}

\newcommand{\scl}{\ensuremath{n_{c\ell}}}
\newcommand{\sclp}{\ensuremath{n'_{c\ell}}}
\newcommand{\Scl}{\ensuremath{N_{c\ell}}}
\newcommand{\scsl}{\ensuremath{n_{c^*\ell}}}
\newcommand{\scslp}{\ensuremath{n'_{c^*\ell}}}
\newcommand{\sdl}{\ensuremath{n_{d\ell}}}
\newcommand{\sdlp}{\ensuremath{n'_{d\ell}}}

The so-called \emph{Greedy Justified Candidate Rule (\gjcr{})} is a simple and elegant algorithm that returns \ejrp{} committees. The algorithm starts with an empty committee and then iterates over $k$ epochs. For each epoch $\ell \in [k]$, the algorithm checks whether there exists a candidate that witnesses an \ejrp{} violation for some $\ell$-large group of voters. If so, the algorithm adds such a candidate to the committee. %
To formalize GJCR, we define for a (current) committee $W$ 
\begin{equation} \label{eq:gjcr-deltajell-defn}
    \Scl \defeq \{i \in N_c \colon \abs{A_i \cap W} < \ell\}\qquad \text{ and } \qquad \scl \defeq |\Scl|, 
\end{equation}
capturing the number supporters of candidate $c$ who are less than $\ell$-represented by $W$.

\begin{algorithm}
\caption{Greedy Justified Candidate Rule (\gjcr{}) \cite{brill23robust}}
\label{alg:gjcr}
$W \gets \emptyset$\;
\For{$\ell$ in $(k, k-1, \dots, 1)$}{
    \While{there exists $c\in C \setminus W$ with $\scl \geq \frac{\ell n}{k}$ \label{line:GJCR-thresh}}{
        $W \gets W \cup \{c\}$ \;
    }
}
\Return $W$\;
\end{algorithm}

We modify the algorithm in two ways. To ensure neutrality, our algorithm randomizes over all witnesses instead of picking an arbitrary one. 
To achieve continuity, the algorithm must interpolate between cases where a candidate is and is not a witness. To address this, we observe that \gjcr has slack: as long as the threshold $\frac{\ell n}{k}$ in \cref{line:GJCR-thresh} remains strictly greater than $\frac{\ell n}{k+1}$, the returned committee satisfies \ejrp{}. This follows from \citet[Proposition 7]{brill23robust} and our proof of \Cref{thm:gjcr-noisy-correct}. 

We refer to candidates $c$ with $\scl \geq \frac{n \ell}{k}$
as true \emph{witnesses} and to candidates $c$ with $\scl \in (\frac{n\ell}{k+1},\frac{n\ell}{k})$
as \emph{quasi-witnesses}. While witnesses have to be included unless another candidate’s inclusion nullifies their witness status, quasi-witnesses do not directly witness an \ejrp{} violation, but including them does not hinder \ejrp{} compliance. Based on this, we refine \Cref{alg:gjcr} by sampling from the set of quasi-witnesses and witnesses, 
weighted with help of the following function
\begin{equation} \label{eq:gjcr-g-defn}
    g_\ell(x) \defeq e^{a_\ell x},
\end{equation}
for factors $a_\ell$ for $\ell \in [k]$ to be specified later. Any positive and increasing $g_\ell$ suffices for \Cref{alg:gjcr-noisy} to satisfy \ejrp{} (\Cref{thm:gjcr-noisy-correct}) and monotonicity (\Cref{lem:gjcr-mon-next}); $a_\ell$ will be chosen to optimize our continuity guarantee (\Cref{lem:noisy-GJCR-sequence-stable}). 
We are now ready to formalize \noisygjcr{} in \Cref{alg:gjcr-noisy}. We write $c = \bot$ to indicate that no candidate is selected. This may transpire if the set of {(quasi-)} witnesses (indicated by $\LW$) is empty, or if $\LW \neq \emptyset$ but the sum of the $g_\ell(\scl)$ does not reach $g_\ell(\frac{n\ell}{k})$; this can happen if 
no true witnesses exist.

\begin{algorithm}
\caption{\noisygjcr{}}
\label{alg:gjcr-noisy}
$W \gets \emptyset$\;
\For{epoch $\ell$ in $(k, k-1, \dots, 1)$}{ \label{line:alg:noisy-gjcr:outer-loop}
    \For{round $r \in [k]$} { \label{line:GJCR-noisy-inner-loop}
        $\LW \leftarrow \left\{c \in C \setminus W: \scl > \frac{n \ell}{k+1} \right\}$\;\label{line:alg:noisy-gjcr:defining-l} 
        Sample $c\sim \LW$ with probability $P_c \defeq \frac{g_\ell(\scl)}{\max\left(\sum_{d \in \LW} g_\ell(\sdl), \: g_\ell(\frac{n\ell}{k})\right)}$, and $c \leftarrow \bot$ otherwise\; \label{line:GJCR-noisy-sample}
        \If{$c \neq \bot$}{
            $W \gets W \cup \{c\}$\;
        }
    }
}
\Return $W$\;
\end{algorithm}

Aside from differences in thresholds and deterministic vs. randomized sampling, a difference between \Cref{alg:gjcr} and \Cref{alg:gjcr-noisy} is that for the former, the inner loop is a ``while'' loop, while the latter employs a loop that takes $k$ rounds. This is because \Cref{alg:gjcr-noisy} may continue to select candidates even though no witness remains. 

We show that \noisygjcr{} indeed satisfies \ejrp{}. The proof is a modification of the proof of \citet[Proposition 7]{brill23robust}.

\begin{restatable}{theorem}{gcjrEJRplus} \label{thm:gjcr-noisy-correct}
    \noisygjcr{} satisfies ex-post \ejrp{}.
\end{restatable}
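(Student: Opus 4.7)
The plan is to prove the theorem by contradiction, adapting the argument that \citet{brill23robust} use to show that \gjcr{} satisfies \ejrp{}. Suppose some committee $W^*$ in the support of \noisygjcr{} violates \ejrp{}: there exist $c^* \notin W^*$, $\ell^* \in [k]$, and $N' \subseteq N_{c^*}$ with $|N'| \geq \ell^* n/k$ such that $|A_i \cap W^*| < \ell^*$ for all $i \in N'$.

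The first step is to show that during every round of epoch $\ell^*$ the algorithm adds some candidate to $W$ (that is, $\bot$ is never sampled). Because $n_{c^*\ell^*}$ is non-increasing as the committee grows, and $n_{c^*\ell^*}(W^*) \geq |N'| \geq \ell^* n/k$, the candidate $c^*$ has $n_{c^*\ell^*} \geq \ell^* n/k$ throughout epoch $\ell^*$ and thus lies in $\LW$. In particular, $c^*$ is a true witness, so the denominator in line~\ref{line:GJCR-noisy-sample} satisfies $\sum_{d \in \LW} g_{\ell^*}(n_{d\ell^*}) \geq g_{\ell^*}(n_{c^*\ell^*}) \geq g_{\ell^*}(n\ell^*/k)$ and therefore coincides with the sum. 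The sampling probabilities $P_c$ sum to $1$, so $\bot$ is ruled out, and exactly $k$ candidates $c_1,\ldots,c_k$ are added during epoch $\ell^*$.

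The second step is a counting argument. Each $c_j$ was drawn from $\LW$ at the time of its addition, so its ``fresh'' supporter set $S_j := \{i \in N_{c_j} : |A_i \cap W_{j-1}| < \ell^*\}$ has size $|S_j| > n\ell^*/(k+1)$. Summing and swapping the order of summation,
\[
\frac{k n \ell^*}{k+1} < \sum_{j=1}^{k} |S_j| = \sum_{i \in N} \bigl|\{j \in \text{epoch } \ell^* : c_j \in A_i \text{ and } |A_i \cap W_{j-1}| < \ell^*\}\bigr|.
\]
For each voter $i$, the inner count equals the increase in $\min(|A_i \cap W|, \ell^*)$ during epoch $\ell^*$, which is at most $\min(|A_i \cap W^*|, \ell^*)$. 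Substituting and using the identity $\sum_i \min(|A_i \cap W^*|, \ell^*) = n\ell^* - \sum_i (\ell^* - |A_i \cap W^*|)^+$ yields $\sum_i (\ell^* - |A_i \cap W^*|)^+ < n\ell^*/(k+1)$. But each $i \in N'$ contributes at least $1$ to this sum, so $\sum_i (\ell^* - |A_i \cap W^*|)^+ \geq |N'| \geq n\ell^*/k$. Combining the two bounds gives $n\ell^*/k < n\ell^*/(k+1)$, i.e., $k+1 < k$, the desired contradiction.

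I expect the main obstacle to be the relaxed threshold $n\ell/(k+1)$ defining $\LW$: unlike \gjcr{}'s threshold $n\ell/k$, it admits quasi-witnesses whose support need not be $\ell$-large, weakening the lower bound on $\sum_j |S_j|$ by a factor of $k/(k+1)$. The argument shows this is precisely the right amount of slack: $kn\ell^*/(k+1) = n\ell^* - n\ell^*/(k+1)$ leaves just enough room below $n\ell^*$ for the deficit $n\ell^*/k$ from the voters in $N'$ to force a contradiction, so this proof would break if the threshold were pushed any lower than $n\ell/(k+1)$.
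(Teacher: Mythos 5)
Your argument for the ``no violation'' half of the claim is correct, and it is in essence the same counting idea as the paper's, just packaged differently. The paper runs a price system: each selected candidate costs $1$, split equally among its fresh supporters, and since each share is strictly below $\left(\frac{n\ell}{k+1}\right)^{-1}$ while each voter can absorb only boundedly many shares, one concludes that after a full epoch $\ell$ fewer than $\frac{n\ell}{k+1}$ voters remain $\ell$-underrepresented, so $\LW=\emptyset$. Your version unnormalizes this into a direct double count of $\sum_j \abs{S_j}$ and derives a contradiction from a single assumed violating triple $(c^*,\ell^*,N')$ rather than establishing the invariant for every epoch. Both exploit exactly the gap between the sampling threshold $\frac{n\ell}{k+1}$ and the \ejrp{} threshold $\frac{n\ell}{k}$, and your closing remark about the tightness of the slack matches the paper's observation. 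One small structural advantage of your version: your epoch-$\ell^*$ count is self-contained, whereas the paper's epoch argument leans on the fact that the committee has reached size exactly $k$ (so that total payments equal $k$), which in turn requires its first part.

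That first part is the genuine gap in your proposal: you never show that \noisygjcr{} outputs at most $k$ candidates. This is not cosmetic. The algorithm runs $k$ epochs of $k$ rounds each and can in principle add a candidate in every round, so without an argument it could return up to $k^2$ candidates; the paper's definition of a committee is a set of size at most $k$, and the paper opens the proof of this very theorem with the claim $\abs{W}\leq k$, proved via the price system (a voter whose last charge occurs in epoch $\ell$ spends strictly less than $\ell\cdot\left(\frac{n\ell}{k+1}\right)^{-1}=\frac{k+1}{n}$ in total, so fewer than $k+1$ candidates are ever paid for). Your within-epoch double count does not directly yield this, because the thresholds $\frac{n\ell}{k+1}$ vary across epochs and the bound must aggregate charges over all of them --- which is precisely what the normalization in the price system handles. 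You would need to add this argument (or an equivalent weighted double count, assigning each added candidate weight $1$ spread over its $>\frac{n\ell}{k+1}$ fresh supporters) for the theorem to be fully established.
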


We now move on to proving our ex-ante properties for \noisygjcr{}.

\subsection{\noisygjcr{} is Neutral and Monotone}
We first observe that \noisygjcr{} is neutral, which follows immediately from its definition as all decisions made by the algorithm depend only on the approval sets of the voters. %

\begin{observation}
    \noisygjcr{} is ex-ante neutral. 
\end{observation}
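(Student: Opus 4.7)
The plan is to prove neutrality by exhibiting a coupling of the executions of \noisygjcr{} on $E$ and $E'$ under which, at every step, the $E$-execution's current committee is mapped by $\sigma$ to the $E'$-execution's current committee. This will immediately yield $\Pr[f(E) = W] = \Pr[f(E') = \sigma(W)]$ for every $W \subseteq C$, from which the required equality of selection probabilities follows.

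I would proceed by induction over the joint step counter of the outer loop (epoch $\ell$) and inner loop (round $r$) of \Cref{alg:gjcr-noisy}. Let $W_t$ and $W'_t$ denote the current committees at step $t$ on inputs $E$ and $E'$ respectively; the invariant to maintain is that $W'_t = \sigma(W_t)$ with probability one under the coupling. For the inductive step, assume $W'_t = \sigma(W_t)$. The key observation is that every quantity the algorithm uses to decide its action at step $t+1$ depends only on the approval profile up to relabeling: by the hypothesis $\sigma(A_i) = A'_i$ for all $i \in N$ we obtain $N_c = N'_{\sigma(c)}$, and $\abs{A_i \cap W_t} = \abs{\sigma(A_i) \cap \sigma(W_t)} = \abs{A'_i \cap W'_t}$ for every voter $i$. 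Hence the counts match, $\scl = n'_{\sigma(c)\ell}$, so the candidate set $\LW$ on the $E$-side is mapped by $\sigma$ to the corresponding set on the $E'$-side, and the sampling probabilities on \cref{line:GJCR-noisy-sample} satisfy $P_c$ in the $E$-execution equals $P_{\sigma(c)}$ in the $E'$-execution (the normalizing denominator and the cap $g_\ell(\frac{n\ell}{k})$ match since they depend only on the same profile-dependent sums). Using a shared random source to select a candidate under these matched distributions, including matching probability of the ``no candidate'' outcome $c = \bot$, we add $c$ to $W_t$ iff we add $\sigma(c)$ to $W'_t$, preserving the invariant $W'_{t+1} = \sigma(W_{t+1})$.

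There is essentially no obstacle here; the argument is a straightforward bookkeeping induction, relying only on the fact that \noisygjcr{} treats candidates anonymously through the quantities $\scl$, $g_\ell$, and the fixed thresholds $\frac{n\ell}{k+1}$ and $\frac{n\ell}{k}$, all of which are manifestly invariant under the bijection $\sigma$. The only mild point worth flagging in the write-up is the matching of the $\bot$ probability, which is immediate from the matching of the denominators in the definition of $P_c$.
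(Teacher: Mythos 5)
Your argument is correct and is exactly the reasoning the paper has in mind: the paper states neutrality as an immediate consequence of the fact that every decision of \noisygjcr{} depends only on the approval sets (via $\scl$, $\LW$, and the profile-invariant thresholds), and your coupling induction is simply a careful formalization of that one-line justification. Nothing is missing; your write-up is just more explicit than the paper's.
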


We now turn towards proving ex-ante monotonicity of \noisygjcr{}, which requires more effort. 
Let $E$ and $E'$ be two elections that differ only in the ballot of one voter, call them $v$, who approves one additional candidate $c^*$ in $E'$.
We aim to show that the selection probability for $c^*$ is at least as high in $E'$ as in $E$. Our proof proceeds via an analysis of the intermediate states of \Cref{alg:gjcr-noisy}. 
First, we ``flatten'' the two for-loops by defining for a given epoch $\ell$ and round $r$ the index $T\defeq k(k-\ell) + r$. Then, we define $\alg(E,T)$ to be the sequence of the first $T$ samples of \noisygjcr{}, done in \cref{line:GJCR-noisy-sample}. Note that this sequence typically contains multiple $\bot$. For any sequence $s$, we also define the corresponding (partial) committee by $W_s \defeq \{c \in s : c \neq \bot\}$. Lastly, we define $\alg(E)$ to be the sequence of length $k^2$ of the algorithm's selections upon its termination. In particular, $\alg(E)_{T}$ is then the element selected at iteration $T$. 

We can now formulate the following lemma, which states that under the assumption that \noisygjcr{} has made the same decisions for election $E$ as for election $E'$ up to some point, monotonicity for $c^*$ holds for the next sample of the algorithm. The proof carefully argues about the potential changes of the witness set $\mathcal{L}$, the variable $\scsl$, and the denominator in \Cref{line:GJCR-noisy-sample}.

\begin{restatable}{lemma}{lemgcjrmonnext} \label{lem:gjcr-mon-next}
    For any $T\in \{0, \ldots, k^2-1\}$ and any sequence $s$ of length $T$ with $c^* \not\in s$, it holds that
    \begin{align}
        \prob{d = \alg(E')_{T+1}\given \alg(E',T)=s} &\leq \prob{d = \alg(E)_{T+1}\given \alg(E,T)=s} \text{for all } d \neq c^*, \notag \\
        \prob{\bot = \alg(E')_{T+1}\given \alg(E',T)=s} &\leq \prob{\bot = \alg(E)_{T+1}\given \alg(E,T)=s}, \text{and} \notag \\
        \prob{c^* = \alg(E')_{T+1}\given \alg(E',T)=s} &\geq \prob{c^* = \alg(E)_{T+1}\given \alg(E,T)=s}. \notag 
    \end{align}
\end{restatable}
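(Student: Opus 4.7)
The plan is to compare the distributions sampled in \cref{line:GJCR-noisy-sample} at iteration $T+1$ across $E$ and $E'$, conditioned on the shared prefix $s$. Since $c^* \notin s$, the partial committee $W \defeq W_s$ is identical under both elections, and the epoch $\ell$ and round $r$ depend only on $T$, so both runs are at the same iteration. The only change between $E$ and $E'$ is that voter $v$ has added $c^*$ to their approval set in $E'$. Because $c^* \notin W$, this gives $\abs{A_v' \cap W} = \abs{A_v \cap W}$, and so $\sdl' = \sdl$ for every $d \neq c^*$, while $\scsl' - \scsl \in \{0,1\}$, taking value $1$ iff $\abs{A_v \cap W} < \ell$.

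If $\scsl' = \scsl$, the sampling distributions coincide and all three inequalities hold as equalities. Otherwise $\scsl' = \scsl + 1$, and I would abbreviate $S \defeq \sum_{d \in \LW \setminus \{c^*\}} g_\ell(\sdl)$, $a \defeq g_\ell(\scsl)$, $b \defeq g_\ell(\scsl + 1)$, and $T \defeq g_\ell(\frac{\ell n}{k})$. The denominator of the sampling expression under $E$ is $\max(\Sigma, T)$ with $\Sigma = S + a \cdot \1[c^* \in \LW]$, and under $E'$ it is $\max(S + b, T)$. Because $g_\ell$ is strictly increasing, $b > a$, and so in either sub-case (whether or not $c^*$ was already in $\LW$) the denominator weakly grows from $E$ to $E'$.

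This monotone growth handles the first two inequalities at once. For $d \neq c^*$, the numerator $g_\ell(\sdl)$ is unchanged and the denominator is weakly larger under $E'$, yielding $P_d^{E'} \leq P_d^E$. For $\bot$, one has $P_\bot = \max\bigl(0,\; 1 - \sum_{c \in \LW} g_\ell(\scl)/T\bigr)$, which is non-increasing in the inner sum; since the inner sum weakly grows from $E$ to $E'$, the $\bot$ inequality follows.

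The main obstacle is the inequality for $c^*$ itself, because both the numerator and denominator change, and the denominator has a $\max$. If $c^* \notin \LW$, then $P_{c^*}^E = 0$ and the claim is immediate; so assume $c^* \in \LW$. I would then show that $\phi(x) \defeq g_\ell(x)/\max(S + g_\ell(x),\, T)$ is non-decreasing in $x$ by analyzing the two branches of the $\max$ separately: on $\{x : S + g_\ell(x) \leq T\}$, $\phi(x) = g_\ell(x)/T$ is increasing; on $\{x : S + g_\ell(x) \geq T\}$, $\phi(x) = 1 - S/(S + g_\ell(x))$ is also increasing; and the two expressions agree at the transition point $S + g_\ell(x) = T$, so $\phi$ is continuous and non-decreasing overall. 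Applied at $x = \scsl$ and $x = \scsl + 1$, this gives $P_{c^*}^{E'} \geq P_{c^*}^E$ and completes the proof. The delicate part is really only reconciling the two branches of the $\max$, which this piecewise monotonicity argument sidesteps cleanly.
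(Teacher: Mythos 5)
Your proposal is correct and follows essentially the same route as the paper's proof: fix the shared prefix $s$, observe that only $\scsl$ can change (upward by at most one) while all $\sdl$ for $d \neq c^*$ are unchanged, and then argue the three inequalities from the monotone growth of the numerator and denominator in \cref{line:GJCR-noisy-sample}. Your piecewise-monotonicity argument for $\phi(x) = g_\ell(x)/\max(S + g_\ell(x),\, T)$ is simply a more explicit rendering of the paper's one-line claim that the numerator for $c^*$ grows at least as fast as the denominator, so there is nothing substantive to reconcile.
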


Applying \Cref{lem:gjcr-mon-next} inductively over the decision tree of \noisygjcr{} proves it is monotone.

\begin{restatable}{theorem}{gcjrmonotone} \label{thm:gjcr-mon}
    \noisygjcr{} is ex-ante monotone. 
\end{restatable}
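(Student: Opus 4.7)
The plan is to deduce the theorem from \Cref{lem:gjcr-mon-next} by a short induction along the decision tree of \noisygjcr{}. The monotonicity statement only compares the marginal probabilities $\pi_{c^*}$ under $E$ and $E'$, but it is more convenient to prove a stronger per-history comparison: every partial history of the sampling process that avoids $c^*$ is at least as likely under $E$ as under $E'$. Once this is established, summing over complete histories that miss $c^*$ immediately yields the desired inequality on $\pi_{c^*}$.

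Concretely, I will prove by induction on $T \in \{0, 1, \ldots, k^2\}$ the invariant
\begin{equation*}
    \prob{\alg(E,T) = s} \;\geq\; \prob{\alg(E',T) = s} \quad \text{for every length-$T$ sequence $s$ with $c^* \notin s$.}
\end{equation*}
The base case $T = 0$ is trivial since both sides equal $1$ for the empty sequence. For the inductive step, fix $s$ of length $T+1$ with $c^* \notin s$ and write $s = s' \cdot x$ with $x \neq c^*$ (so $x$ is either a candidate different from $c^*$ or the symbol $\bot$). Factoring
\begin{equation*}
    \prob{\alg(E,T+1) = s} \;=\; \prob{\alg(E,T) = s'} \cdot \prob{x = \alg(E)_{T+1} \given \alg(E,T) = s'},
\end{equation*}
and analogously for $E'$, the first factors are comparable by the inductive hypothesis (since $c^* \notin s'$), while the second factors are comparable by the first two inequalities of \Cref{lem:gjcr-mon-next}, applied either with $d = x$ or in the $\bot$-case. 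Since all quantities are non-negative, the product inequality follows.

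Evaluating the invariant at $T = k^2$ and summing over all complete sequences $s$ with $c^* \notin s$ gives $\prob{c^* \notin \alg(E)} \geq \prob{c^* \notin \alg(E')}$, which rearranges to the required $\pi_{c^*}(f(E)) \leq \pi_{c^*}(f(E'))$. The only subtlety worth flagging is that a naive induction directly on $\prob{c^* \in \alg(\cdot, T)}$ does not close: the ``$\geq$'' on $E$-probabilities of non-$c^*$ transitions supplied by the lemma pushes the inductive comparison in opposite directions when combined with the ``$\geq$'' on the probability that $c^*$ has already been chosen. The per-history invariant sidesteps this by aligning all inequalities in the same direction. Beyond this observation, all the substantive work is carried by \Cref{lem:gjcr-mon-next}, so no further obstacle remains.
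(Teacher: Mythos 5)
Your proof is correct, but it takes a genuinely different route from the paper's. The paper performs a \emph{backward} induction, from the leaves of the decision tree to the root, on the conditional probability $\prob{c^* \in \alg(E) \given \alg(E,T) = s}$ that $c^*$ is eventually selected given a $c^*$-free partial history $s$; its inductive step needs all three inequalities of \Cref{lem:gjcr-mon-next}, together with an explicit bookkeeping argument (the $\eps_u$ terms) showing that whatever single-step probability mass the non-$c^*$ outcomes lose when passing from $E$ to $E'$ is transferred to $c^*$, whose continuation probability is $1$. You instead run a \emph{forward} induction on the unconditional probability of each $c^*$-free partial history, using only the first two inequalities of the lemma and the elementary fact that termwise domination of non-negative factors is preserved under products, and then conclude by a complement argument over complete $c^*$-free sequences. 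Your version is shorter, uses strictly less of \Cref{lem:gjcr-mon-next}, and avoids the mass-transfer computation entirely; the paper's version has the mild advantage of directly producing the stronger per-node statement that the conditional inclusion probability of $c^*$ increases at every node of the tree, not just at the root. (One inessential technicality in your write-up: when $\prob{\alg(E',T) = s'} = 0$ the conditional factor for $E'$ is undefined, but the target inequality then holds trivially since its right-hand side is $0$; the paper's proof glosses over the analogous point.)
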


We now turn to continuity guarantees for \noisygjcr{}. 

\subsection{\noisygjcr{} is Continuous} \label{sec:gjcr-stable}

Now, let $E$ and $E'$ be two elections that differ in the approvals of voter $v$, and parameterize the difference by $\Delta_v \defeq \abs{A_v \oplus A'_v}$. Consider a sequence $s$ of length $T$. We define $\cN_s$ to be the \emph{next candidate distribution}, i.e. the probability distribution over candidates $c \in C$, defined by 
\[
    \cN_s(c) = \prob{c = \alg(E)_{T+1} \given \alg(E,T) = s}, \text{ and}
\] 
define $\cN_s'$ analogously for election $E'$. The next lemma bounds the TV distance of $\cN_s$ and $\cN'_s$.

\begin{lemma} \label{lem:gjcr-stability-per-step}
    Let $E$ and $E'$ be two elections that differ only in the ballot of voter $v$, and let $\cN_s, \cN_s'$ and $\Delta_v$ be defined as described above. 
    For any partial sequence of samples $s$ in \Cref{alg:gjcr-noisy}, it holds that  
    \begin{equation} \label{eq:gjcr-noisy-round-continuity}
        \dtv{\cN_s}{\cN_s'} \leq \gamma \cdot a_\ell + 4 \cdot \exp\left(\log(\Delta_v) -a_{\ell}\frac{n\ell}{k(k+1)} + a_{\ell} \right).
    \end{equation}
    for some universal constant $\gamma$ (in particular, $\gamma = 4(2e-1)$).
\end{lemma}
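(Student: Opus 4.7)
The plan is to bound $\dtv{\cN_s}{\cN_s'} \leq \sum_{c \in C} |P_c - P_c'|$ directly, leveraging the fact that $|P_\bot - P_\bot'| \leq \sum_c |P_c - P_c'|$. The invariant driving every estimate is that $|n_{c\ell} - n'_{c\ell}| \leq 1$ for each candidate $c$, since the counts in $E$ and $E'$ agree outside of voter $v$'s own contribution to $N_{c\ell}$. This immediately gives $g_\ell(n'_{c\ell})/g_\ell(n_{c\ell}) \in [e^{-a_\ell}, e^{a_\ell}]$, and in regimes where voter $v$'s ``under-represented'' status $[|A_v \cap W_s| < \ell]$ agrees between $E$ and $E'$, the normalizer ratio $M'/M$ also sits in $[e^{-a_\ell}, e^{a_\ell}]$, because the fallback value $g_\ell(n\ell/k)$ is identical in the two elections.

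I would then partition the candidates into \emph{interior} candidates $c \in \cL \cap \cL'$ and \emph{boundary} candidates $c \in \cL \oplus \cL'$, and handle each contribution separately. For interior $c$, chaining the two ratio bounds gives $P_c'/P_c \in [e^{-2a_\ell}, e^{2a_\ell}]$, so $|P_c - P_c'| \leq (e^{2a_\ell} - 1) P_c$; summing and using $\sum_c P_c \leq 1$ controls the interior piece by $e^{2a_\ell} - 1$, which linearizes into the first term $\gamma a_\ell$ once the $|P_\bot - P_\bot'|$ slack and the extra factor from the status-flip handling are absorbed. For boundary $c$, exactly one of $P_c, P_c'$ is nonzero, and that one has $n_{c\ell}$ just above the quasi-witness threshold, so $n_{c\ell} \leq n\ell/(k+1) + 1$ and
\[
    P_c \;\leq\; \frac{g_\ell(n\ell/(k+1) + 1)}{g_\ell(n\ell/k)} \;=\; \exp\!\left(-a_\ell \cdot \tfrac{n\ell}{k(k+1)} + a_\ell\right).
\]
Once the number of boundary candidates is bounded by $\Delta_v$, summing delivers a boundary contribution of at most $2\Delta_v \exp(-a_\ell n\ell/(k(k+1)) + a_\ell)$, matching the second term of the lemma.

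The main obstacle is the case analysis on the representation statuses $[|A_v \cap W_s| < \ell]$ and $[|A'_v \cap W_s| < \ell]$, needed both to bound the number of boundary candidates by $\Delta_v$ and to validate the $M'/M$ ratio bound. When both statuses agree, the differences $n_{c\ell} - n'_{c\ell}$ are supported on $A_v \oplus A'_v$, so at most $\Delta_v$ candidates can flip $\cL$-membership, and every argument above goes through. The status-flip case is the delicate one, because voter $v$'s contribution to $N_{c\ell}$ turns on or off for every $c \in A_v$ (or $A_v'$) at once, and a naive count would blow up to $|A_v|$ affected candidates. My plan here is to factor the transition through an intermediate profile $E''$ whose ballot $A_v''$ modifies $A_v$ only inside $W_s$ so that $|A_v'' \cap W_s| = \ell$ exactly; since $|A_v' \cap W_s| - |A_v \cap W_s| \leq \Delta_v$, this uses at most $\Delta_v$ swaps inside $W_s$. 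The subsequent hop $E'' \to E'$ keeps $v$ above the threshold on both sides, so the counts coincide and $\cN_s'' = \cN_s'$ exactly, reducing the task to bounding $\dtv{\cN_s}{\cN_s''}$; a direct analysis on this smaller transition charges each boundary candidate against one of the at-most-$\Delta_v$ $W_s$-swaps, paying only a constant factor that is absorbed into $\gamma$ and the leading coefficient $4$ in the stated bound. Combining the interior and boundary pieces then yields exactly the claimed inequality.
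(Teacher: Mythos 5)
Your overall architecture is the paper's: bound $\dtv{\cN_s}{\cN_s'}$ by $2\sum_{c}|P_c-P_c'|$, use $|\scl-\sclp|\le 1$, split into interior candidates ($\LW\cap\LW'$) and boundary candidates ($\LW\oplus\LW'$), linearize the interior contribution to $O(a_\ell)$ (the paper does this additively via its quotient-difference observation rather than via multiplicative ratio bounds, which is cosmetic — though note your claim that the normalizer ratio lies in $[e^{-a_\ell},e^{a_\ell}]$ is not literally true, since candidates leaving $\LW$ remove their entire weight from the sum; the loss is exponentially small relative to the fallback $g_\ell(n\ell/k)$ and is absorbable, but it needs a sentence), and bound each boundary candidate's probability by $g_\ell(n\ell/(k+1)+1)/g_\ell(n\ell/k)$. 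All of that matches.

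The genuine gap is exactly the step you flagged as delicate and then dispatched by assertion: charging the boundary candidates to the at-most-$\Delta_v$ swaps in the status-flip case. Your reduction through $E''$ correctly makes $\cN_s''=\cN_s'$, but the hop $E\to E''$ still contains the flip of the predicate $[|A_v\cap W_s|<\ell]$, and at the moment that predicate flips, $v$ is removed from (or added to) $\Scl$ \emph{simultaneously for every} $c\in A_v\setminus W_s$ — so the number of candidates crossing the $\tfrac{n\ell}{k+1}$ threshold in that single hop is bounded only by $|A_v|$, not by $\Delta_v$. No constant-factor charging to the $\le\Delta_v$ swaps inside $W_s$ can recover the $\Delta_v$ coefficient: one swap can eject $\Omega(m)$ candidates from $\LW$, each with $\scl\in(\tfrac{n\ell}{k+1},\tfrac{n\ell}{k+1}+1]$, and once their combined weight $\sum_c g_\ell(\scl)$ exceeds the fallback $g_\ell(\tfrac{n\ell}{k})$ these candidates carry total probability $1$ under $\cN_s$ while carrying probability $0$ under $\cN_s''$ — e.g.\ with $\ell=1$, $W_s=\{c^*\}$, $c^*\in A_v'\setminus A_v$, and $\Theta(n)$ candidates each approved by $v$ plus a common bloc of exactly $\lfloor\tfrac{n}{k+1}\rfloor$ unrepresented voters. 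So the "constant factor absorbed into $\gamma$ and the leading $4$" does not exist, and the proposal is incomplete at this step. You should be aware that the paper's own proof does not engage with this case either: it asserts $|\LW\oplus\LW'|\le\Delta_v$ on the grounds that $\scl\neq\sclp$ forces $c\in A_v\oplus A_v'$, which ignores precisely the status-flip effect you identified (the claim is immediate only when $[|A_v\cap W_s|<\ell]=[|A_v'\cap W_s|<\ell]$). You have located a real soft spot in the argument, but your proposal does not close it.
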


\noindent
This proof (and others) invoke a particular consequence of the triangle inequality:
\begin{observation} \label{fac:quotientdiff}
    For any $a, a' \geq 0$ and any $b, b' > 0$, it holds that $\abs{\frac{a'}{b'} - \frac{a}{b}} \leq a' \abs{\frac{1}{b'} - \frac{1}{b}} + \frac{1}{b} \abs{a' - a}$.
\end{observation}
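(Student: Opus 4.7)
The plan is to prove the inequality by a standard add-and-subtract (telescoping) argument followed by the triangle inequality. Specifically, I would rewrite the difference $\frac{a'}{b'} - \frac{a}{b}$ by inserting the mixed term $\frac{a'}{b}$, yielding the exact identity
\[
    \frac{a'}{b'} - \frac{a}{b} \;=\; \left(\frac{a'}{b'} - \frac{a'}{b}\right) + \left(\frac{a'}{b} - \frac{a}{b}\right) \;=\; a'\left(\frac{1}{b'} - \frac{1}{b}\right) + \frac{1}{b}(a' - a).
\]

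From here I would apply the triangle inequality on the right-hand side to get
\[
    \left|\frac{a'}{b'} - \frac{a}{b}\right| \;\leq\; \left|a'\right| \cdot \left|\frac{1}{b'} - \frac{1}{b}\right| + \left|\frac{1}{b}\right| \cdot |a' - a|,
\]
and then use the sign hypotheses $a' \geq 0$ and $b > 0$ to drop the absolute values on $a'$ and $1/b$, giving exactly the claimed bound. The assumption $b' > 0$ is only needed to ensure that the quotient $a'/b'$ is well-defined (so that all quantities in sight make sense).

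There is essentially no main obstacle here — the whole argument is two lines. The only thing worth flagging is that the decomposition is not symmetric in the primed and unprimed variables: one could equally well insert $\frac{a}{b'}$ instead of $\frac{a'}{b}$, which would yield the dual inequality with the roles of $a,a'$ and $b,b'$ swapped. The form chosen in the statement is the one that will be convenient in the applications (e.g., the proof of \Cref{lem:gjcr-stability-per-step}), where $a'$ and $b$ correspond to quantities one wants to keep as ``leading'' factors in the resulting continuity bound.
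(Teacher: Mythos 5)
Your proof is correct and matches the paper's (implicit) argument: the paper states this observation without a written proof, describing it only as ``a particular consequence of the triangle inequality,'' which is precisely the add-and-subtract decomposition via the mixed term $\frac{a'}{b}$ that you carry out. Nothing further is needed.
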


\begin{proof}[Proof of \Cref{lem:gjcr-stability-per-step}]
    We begin with an observation about the regime of $a_\ell$ for which this statement requires proof.
    \begin{observation} \label{obs:a-ell-small-wlog}
        If $a_\ell \geq 1$, then \Cref{lem:gjcr-stability-per-step} is trivially satisfied (e.g., with $\gamma=1$).
    \end{observation}
    This holds because $\dtv{\cdot}{\cdot} \leq 1$. Therefore, we will assume that $a_{\ell} \leq 1$ holds in the following.

    We first establish a pair of useful bounds and notation. In the following we assume that the sequence $s$ of length $T$ is fixed and all of the following notation is under the assumption that the algorithm is in step $T+1$ and $\alg(E,T) = \alg(E',T) = s$, e.g., we denote by $\scl$ ($\sclp$, respectively) the variable in step $T+1$ of \Cref{alg:gjcr-noisy} when running it for input $E$ ($E'$, respectively). 
    Then, 
    \begin{equation} \label{eq:gjcr-delta-vals-general}
        \scl - 1 \leq \sclp \leq \scl + 1, 
    \end{equation}
    which holds because the only voter that can change membership in $\Scl$ when going from $E$ to $E'$ is voter $v$. 
    From \eqref{eq:gjcr-delta-vals-general}, the definition of $g_\ell$, and $a_\ell \leq 1$ (\Cref{obs:a-ell-small-wlog}) we have that for $c\in C$,
    \begin{equation} \label{eq:gjcr-gdelta-bound}
        \abs{g_\ell(\sclp)- g_\ell(\scl)} \leq  g_\ell(\scl) \cdot \max(1 - e^{-a_\ell}, \:e^{a_\ell} - 1)  \leq g_\ell(\scl) \cdot a_\ell \cdot (e-1).
    \end{equation}
    We will also apply the fact that 
    \begin{equation}
    |\LW \oplus \LW'| \leq \Delta_v, \label{eq:delta}
    \end{equation}
     which is due to the fact that $c \in \LW \oplus \LW'$ implies $\scl \neq \sclp$, which in turn implies that voter $v$ votes differently for $c$ in the two elections $E$ and $E'$. For notational convenience, we define 
    \begin{equation}
    G \defeq \max\left(\sum_{d \in \LW} g_\ell(\sdl), \: g_\ell\left(\frac{n\ell}{k}\right)\right)
    \end{equation}
    to be the denominator of $P_c$ in \cref{line:GJCR-noisy-sample}, and define $G'$ analogously for $\sdlp$ and $\LW'$. We observe that \begin{equation}
        |G' - G| \leq \sum_{c \in \LW' \cap \LW} \abs{g_\ell(\sclp) - g_\ell(\scl)} + \sum_{c \in \LW' \oplus \LW} \max(g_\ell(\sclp),g_\ell(\scl)). \label{eq:Gdifference}
    \end{equation}

    {We are now ready to bound the difference between $\cN_s$ and $\cN_s'$. Since $P_\bot = 1 - \sum_{c \in C} P_c$, to bound $\dtv{\cN_s}{\cN_s'}$ we may incur a factor of two and focus on $c\in C$. Using this, we have

    \begin{align}
        \frac{1}{2}\dtv{\cN_s}{\cN_s'}
        &\leq  \sum_{c \in C} \abs{P_c' - P_c} %
        =  \sum_{c \in \LW' \cup \LW} \abs{P_c' - P_c} \notag \\
        &=  \sum_{c \in \LW' \cap \LW} \abs{\frac{g_\ell(\sclp)}{G'} - \frac{g_\ell(\scl)}{G}} +  \sum_{c \in \LW' \oplus \LW} \max(P_c', \: P_c). \notag
        \intertext{We now apply \Cref{fac:quotientdiff} to the first sum, which yields} 
        &\leq \sum_{c \in \LW' \cap \LW} g_\ell(\sdlp) \abs{\frac{1}{G'} - \frac{1}{G}} + \frac{1}{G} \sum_{c \in \LW' \cap \LW} \abs{g_\ell(\sclp) - g_\ell(\scl)} +  \sum_{c \in \LW' \oplus \LW} \max(P_c', \: P_c). \notag \\ 
        \intertext{For the next inequality, we rewrite the first sum as $\sum_{c \in \LW' \cap \LW} \frac{g_\ell(\sclp)}{G'} \frac{\abs{G' - G}}{G}$, which is at most $ \frac{\abs{G' - G}}{G}$ since $\sum_{c \in \LW' \cap \LW} g_\ell(\sclp) \leq G'$. We also upper bound the second sum, which will become helpful later on. We get }
        &\leq \frac{\abs{G' - G}}{G} + \frac{1}{G} \sum_{c \in \LW' \cap \LW} \abs{g_\ell(\sclp) - g_\ell(\scl)} + \sum_{c \in \LW' \oplus \LW} \frac{\max(g_\ell(\scl),g_\ell(\sclp))}{\min(G,G')}. \notag \\ 
        \intertext{Next, we apply our previously established upper bound on $|G'-G|$ from \eqref{eq:Gdifference} and directly combine these terms with the right-hand-side of the expression.}
        &\leq 2\left(\frac{1}{G} \cdot \sum_{c \in \LW' \cap \LW} \abs{g_\ell(\sclp) - g_\ell(\scl)} + \sum_{c \in \LW' \oplus \LW} \frac{\max(g_\ell(\scl),g_\ell(\sclp))}{\min(G,G')} \right). \notag \\ 
        \intertext{Then, we apply \eqref{eq:gjcr-gdelta-bound} to the first sum. Moreover, we upper bound each summand of the second sum by the same upper bound, using the following  observation: When $c \in \LW' \oplus \LW$, then $\max(\scl,\sclp) \leq \frac{n\ell}{k+1} + 1$, which holds since \scl{} and \sclp{} can only differ by $1$. Also, $\min(G,G') \geq g_\ell(\frac{n\ell}{k})$. Lastly, by \eqref{eq:delta}, we know that the number of summands is bounded by $|\LW'\oplus\LW| \leq \Delta_v$. This yields }
        &\leq 2 \left((e-1) \cdot a_\ell \cdot \sum_{c \in \LW'\cap \LW} \frac{g_{\ell}(\scl)}{G} + \Delta_v \frac{g_{\ell}\left(\frac{n\ell}{k+1}+ 1\right) }{g_{\ell}(\frac{n\ell}{k})} \right), \notag \\ 
        \intertext{where we can upper bound the sum by $1$ again and rewrite the right-hand-side by applying the definition of $g_{\ell}$. This yields}
        & \leq 2\left( (e-1) \cdot a_{\ell} + \exp\left(\log(\Delta_v) - a_{\ell} \frac{n\ell}{k(k+1)} + a_{\ell} \right) \right), \notag
    \end{align}    
    which, after multiplying the entire inequality by the factor of $2$, yields the claim.}
    \end{proof}

We now discuss the optimal choice of $a_\ell$. 
We have to carefully balance 
the terms in \Cref{lem:gjcr-stability-per-step} to provide the smallest continuity upper bound within each step. Larger choices of $a_\ell$ increase the first term but decrease the second, and vise versa. Within epoch $\ell$ of \Cref{alg:gjcr-noisy}, we choose
\begin{equation} \label{eq:gjcr-a-defn}
    a_\ell \defeq \frac{k(k+1)}{n\ell}\log\left(n \cdot \Delta \right),
\end{equation} 
where $\Delta$ is our assumed upper bound on $\Delta_v$ for all single-voter approval changes.

Recall that by \Cref{obs:a-ell-small-wlog}, the lemma is trivially true for all $a_\ell \geq 1$. Therefore, we assume in the following that $a_\ell \leq 1$. The second term of \eqref{eq:gjcr-noisy-round-continuity} then becomes  
\begin{align*}
    \exp\left(\log(\Delta_v) -a_\ell\frac{n\ell}{k(k+1)} + a_\ell \right) & \leq \exp \left(\log(\Delta) -a_\ell\frac{n\ell}{k(k+1)} + a_\ell \right) \\
    &=\exp \left( \log(\Delta) - \log\left(n \cdot \Delta \right) + a_\ell \right) \\
    &= \exp\left(\log\left(\frac{1}{n} \right) + a_\ell \right) = \frac{1}{n} \cdot e^{a_\ell}\\
    & \leq \frac{e}{n} \leq a_\ell \cdot e.
\end{align*}
Plugging this into \Cref{lem:gjcr-stability-per-step} yields a bound of $4(2e -1) a_{\ell}$  on the TV-distance of $\cN_s$ and $\cN'_s$.

 Now, for some election $E$, let $\cS_E$ be the distribution over the random (complete) sequence $\alg(E)$.

\begin{restatable}{theorem}{gcjrsequencestable} \label{lem:noisy-GJCR-sequence-stable}
The sequence distribution induced by \noisygjcr{} for $a_{\ell}$ defined in \Cref{eq:gjcr-a-defn} satisfies 
$\eps$-$\Delta$-TV-stability for
\[
    \eps = \gamma \cdot \frac{k^3}{n} \cdot \log k \cdot \left(\log n + \log \Delta \right),\qquad \text{for some universal constant }\gamma.
\]
\end{restatable}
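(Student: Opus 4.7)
The plan is to upgrade the per-step continuity bound in \Cref{lem:gjcr-stability-per-step} to a bound on $\dtv{\cS_E}{\cS_{E'}}$ via a standard step-by-step coupling. Formally, I construct a coupling $\pi^*$ of $\cS_E$ and $\cS_{E'}$ inductively as follows: having coupled the first $T$ samples so that $(\alg(E)_1, \ldots, \alg(E)_T) = (\alg(E')_1, \ldots, \alg(E')_T) = s$ on some event, invoke the Coupling Lemma (\Cref{lem:coupling}) to couple the $(T+1)$st samples from $\cN_s$ and $\cN_s'$ so that they disagree with probability exactly $\dtv{\cN_s}{\cN_s'}$; on the complementary event that the two sequences have already diverged, extend arbitrarily. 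Since the branching of \noisygjcr{} (which epoch $\ell$ and round $r$ we are in, and which partial committee $W$ has been built) is determined by the sequence of samples drawn so far, the two coupled processes remain in lockstep as long as they agree. Then by the Coupling Lemma and a union bound over the $k^2$ steps,
\[
\dtv{\cS_E}{\cS_{E'}} \;\leq\; \Pr_{\pi^*}[\alg(E) \neq \alg(E')] \;\leq\; \sum_{T=1}^{k^2} \mathbb{E}_{\pi^*}\!\left[\dtv{\cN_{s_T}}{\cN_{s_T}'}\right],
\]
where $s_T$ denotes the common prefix of length $T$ on the no-divergence event.

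For each step $T$ falling in epoch $\ell$, \Cref{lem:gjcr-stability-per-step} combined with the choice $a_\ell = \frac{k(k+1)}{n\ell}\log(n\Delta)$ from \eqref{eq:gjcr-a-defn} yields $\dtv{\cN_{s_T}}{\cN_{s_T}'} \leq 4(2e-1)\, a_\ell$. When $a_\ell \leq 1$ this is exactly the computation immediately preceding the theorem statement, while when $a_\ell \geq 1$ the TV distance is at most $1 \leq 4(2e-1)\, a_\ell$ trivially, so the bound holds uniformly. Since epoch $\ell$ contains exactly $k$ rounds, summing the per-step bound over all $k^2$ steps gives
\[
\dtv{\cS_E}{\cS_{E'}} \;\leq\; 4(2e-1) \cdot k \cdot \sum_{\ell=1}^k a_\ell \;=\; 4(2e-1) \cdot \frac{k^2(k+1)}{n} \log(n\Delta) \cdot H_k,
\]
where $H_k = \sum_{\ell=1}^k 1/\ell = O(\log k)$. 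Using $\log(n\Delta) = \log n + \log \Delta$ yields
\[
\dtv{\cS_E}{\cS_{E'}} \;=\; O\!\left(\frac{k^3}{n} \log k \cdot (\log n + \log \Delta)\right),
\]
which is the claimed $\eps$ for an appropriate universal constant $\gamma$.

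The main obstacle is essentially structural rather than analytic: the hard work was done in \Cref{lem:gjcr-stability-per-step}, so what remains is to verify carefully that the inductive coupling is well-defined across both the outer epoch loop and the inner round loop, and that the epoch and round indices in the two coupled processes stay synchronized until the first divergence (which they do because $\ell$ and $r$ in \Cref{alg:gjcr-noisy} depend only on the loop counters, not on the samples drawn). A minor technicality is that $a_\ell \geq 1$ may occur for small $\ell$, but \Cref{obs:a-ell-small-wlog} together with the trivial TV bound of $1$ handles this within a constant factor, so the summation above carries through without modification.
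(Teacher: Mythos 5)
Your proposal is correct and follows essentially the same route as the paper's proof: both construct a coupling of $\cS_E$ and $\cS_{E'}$ by sampling each successive step from the optimal coupling $\cM_s$ of $\cN_s$ and $\cN_s'$, union bound the divergence probability over the $k^2$ steps, and apply \Cref{lem:gjcr-stability-per-step} with the choice of $a_\ell$ from \eqref{eq:gjcr-a-defn} to bound each term by $O(a_\ell)$, summing to $O\left(\frac{k^3}{n}\log k \cdot (\log n + \log\Delta)\right)$. Your explicit attention to why the epoch/round indices stay synchronized and to the $a_\ell \geq 1$ edge case matches the paper's treatment via \Cref{obs:a-ell-small-wlog}.
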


\begin{proof}[Proof Sketch of \Cref{lem:noisy-GJCR-sequence-stable}]
    We translate the per-step upper bound from \Cref{lem:gjcr-stability-per-step}, which is $\mathcal{O}(a_{\ell})$ for our choice of $a_\ell$ into an upper bound for $\dtv{\mathcal{S}_E}{\mathcal{S}_{E'}}$. 
    The per-step upper bound implies that for each $s$ and given that the instantiations of the algorithm with inputs $E$ and $E'$ have both arrived in state $s$, $\mathcal{O}(a_\ell)$ is an upper bound for the probability that the sequences will diverge at this point. 
    In order to upper bound the overall probability that the sequences diverge at some point, we union bound over the $k^2$ steps of \noisygjcr{}. 
    To make this argumentation precise we deploy statistical coupling, which is introduced in \Cref{sec:prelims}.
\end{proof}

Before formalizing the direct implications of \Cref{lem:noisy-GJCR-sequence-stable}, we interpret the provided bound. Recall that $\Delta$ measures how much approvals a single voter can change in the definition of TV-stability;
for the model in which only one approval changes, we have $\eps = O(\frac{k^3}{n} \log k \log n)$, while if $m$ approvals may change then we have $\eps = O(\frac{k^3}{n} \log k \log(mn))$. We state two corollaries which are direct consequences of \Cref{lem:noisy-GJCR-sequence-stable}. First, let $f$ be \noisygjcr{}. It might not be surprising that bounding the TV distance of the sequence distribution has direct implications for the TV distance of the committee distribution $\mathcal{D}_{f(E)}$ and the continuity of the selection probabilities. 
We get:
\begin{corollary} \label{lem:noisy-GJCR-stable}
The committee distribution of \noisygjcr{} for $a_{\ell}$ defined in \Cref{eq:gjcr-a-defn}
    is $\eps$-$\Delta$-TV-stable for $\eps = \gamma \cdot \frac{k^3}{n} \cdot \log k \cdot \left(\log n + \log \Delta \right)$, for some constant $\gamma$.
\end{corollary}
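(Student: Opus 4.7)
The plan is to obtain this corollary directly from \Cref{lem:noisy-GJCR-sequence-stable} by a single application of the post-processing property of total variation distance. The first step is to observe that the committee returned by \noisygjcr{} is a deterministic function of the length-$k^2$ sequence $\alg(E)$ produced by the $k^2$ executions of \cref{line:GJCR-noisy-sample}: namely the map $h \colon s \mapsto W_s = \{c \in s : c \neq \bot\}$ that forgets the ordering and discards $\bot$ entries. Thus, if $f$ denotes \noisygjcr{}, then the committee distribution $\mathcal{D}_{f(E)}$ is precisely the pushforward $h_* \mathcal{S}_E$ of the sequence distribution $\mathcal{S}_E$.

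Next, I will invoke the post-processing property: for any function $h \colon \Omega \to \Omega'$ and any two distributions $\mu, \nu$ on $\Omega$, $\dtv{h_*\mu}{h_*\nu} \leq \dtv{\mu}{\nu}$. This is immediate from \Cref{def:tv-dist}, since every event $R \subseteq \Omega'$ pulls back to an event $h^{-1}(R) \subseteq \Omega$ with matching probability under each pushforward, so the supremum defining the left-hand side ranges over a subcollection of the events defining the right-hand side. This is the same closure property the paper already invoked to pass from TV-stability to candidate continuity in \Cref{obs:marginalseasier}.

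Applying post-processing with $h$ as above, $\mu = \mathcal{S}_E$, and $\nu = \mathcal{S}_{E'}$ for any pair of elections differing at a single voter by at most $\Delta$ approvals yields
\[
    \dtv{\mathcal{D}_{f(E)}}{\mathcal{D}_{f(E')}} \;\leq\; \dtv{\mathcal{S}_E}{\mathcal{S}_{E'}} \;\leq\; \gamma \cdot \tfrac{k^3}{n} \cdot \log k \cdot (\log n + \log \Delta),
\]
which is the desired stability bound. Since the entire technical work already sits inside \Cref{lem:noisy-GJCR-sequence-stable}, no genuine obstacle remains; the only point to verify carefully is that $h$ is a fixed function independent of $E$, which is immediate from the definition of $W_s$ (it depends only on the sequence, not on the approval profile that produced it).
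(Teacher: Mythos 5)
Your proof is correct and matches the paper's intent: the paper states this corollary as a ``direct consequence'' of \Cref{lem:noisy-GJCR-sequence-stable} without spelling out the argument, and the intended justification is exactly the post-processing step you make explicit (the committee is the deterministic, profile-independent image $W_s = \{c \in s : c \neq \bot\}$ of the sequence, and TV distance is non-increasing under pushforward). Nothing further is needed.
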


Lastly, as a direct consequence of \Cref{lem:noisy-GJCR-stable}, we have by \Cref{obs:marginalseasier} that
\begin{corollary} \label{thm:noisy-GJCR-continuous}
     The voting rule \noisygjcr{} for $a_{\ell}$ defined in \Cref{eq:gjcr-a-defn} is $L$-$\Delta$-continuous for $L= \gamma \cdot \frac{k^3}{n} \cdot \log k \cdot \left(\log n + \log \Delta \right)$, for some constant $\gamma$.
\end{corollary}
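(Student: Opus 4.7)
The statement is an immediate chaining of two results already established in the paper, so my plan is essentially a two-step compose-and-apply argument rather than any new technical work.

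First, I would invoke \Cref{lem:noisy-GJCR-stable}, which asserts that the committee distribution $\cD_{f(E)}$ induced by \noisygjcr{} (with the $a_\ell$ set as in \eqref{eq:gjcr-a-defn}) is $\eps$-$\Delta$-TV-stable for $\eps = \gamma \cdot \tfrac{k^3}{n}\cdot \log k \cdot(\log n + \log \Delta)$, with $\gamma$ a universal constant. Concretely, for any two elections $E,E'$ differing only in the approvals of a single voter $v$ with $\abs{A_v \oplus A_v'} \leq \Delta$, this yields $\dtv{\cD_{f(E)}}{\cD_{f(E')}} \leq \eps$.

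Second, I would apply \Cref{obs:marginalseasier}, which states that any $\eps$-$\Delta$-TV-stable rule is automatically $\eps$-$\Delta$-continuous. The point is that each selection probability $\pi_c(f(E)) = \prob{c \in f(E)}$ is the expectation of the indicator of the event $R_c = \{W \in \W : c \in W\} \subseteq \W$, and by the definition of total variation distance (\Cref{def:tv-dist}) one has
\begin{equation*}
\abs{\pi_c(f(E')) - \pi_c(f(E))} = \abs{\cD_{f(E')}(R_c) - \cD_{f(E)}(R_c)} \leq \dtv{\cD_{f(E))}}{\cD_{f(E'))}}.
\end{equation*}
Chaining the two bounds gives $\abs{\pi_c(f(E')) - \pi_c(f(E))} \leq \eps$ for every $c \in C$, which is exactly the desired $L$-$\Delta$-continuity with $L = \gamma \cdot \tfrac{k^3}{n}\cdot \log k \cdot(\log n + \log \Delta)$.

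No obstacle is expected here: all the heavy lifting has been done in proving the per-step bound (\Cref{lem:gjcr-stability-per-step}), the choice of $a_\ell$ in \eqref{eq:gjcr-a-defn}, and the coupling-based aggregation in \Cref{lem:noisy-GJCR-sequence-stable}, whose committee-distribution version is \Cref{lem:noisy-GJCR-stable}. The corollary is just the observation that candidate continuity is a post-processing projection of the committee distribution, and so inherits TV-stability with the same constant.
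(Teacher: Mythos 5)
Your proposal is correct and matches the paper exactly: the paper also obtains this corollary by chaining \Cref{lem:noisy-GJCR-stable} (TV-stability of the committee distribution) with \Cref{obs:marginalseasier} (TV-stability implies continuity). Your inline justification that $\abs{\pi_c(f(E')) - \pi_c(f(E))}$ is bounded by the TV distance via the event $R_c = \{W \in \W : c \in W\}$ is precisely the content of the paper's proof of \Cref{obs:marginalseasier}.
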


\subsection{Continuity and Stability Lower Bounds} \label{sec:lower-bounds}

We establish lower bounds on the continuity of any proportional rule, as well as on our own rules specifically.
For establishing the former bound, the key idea is to construct an election with $k+1$ candidates, each supported by $\frac{n}{k} - \frac{n}{k^2} < \frac{n}{k+1}$ voters.
For any rule, at least one candidate has a selection probability below \( \frac{k}{k+1} \).
By increasing that candidates' approvals to \( \frac{n}{k} \), a proportional rule must select them due to consistency with \jr{}, causing their probability to rise by \( \Omega(\frac{1}{k}) \) with just \( \frac{n}{k^2} \) additional approvals---thus proving the lower bound.

\begin{restatable}{theorem}{JRcandcontLB}
\label{thm:JR-cand-cont-lb}
    The continuity of any rule satisfying ex-post \jr{} is $\Omega\left(\frac{k}{n}\right)$, even when $\Delta = 1$. 
\end{restatable}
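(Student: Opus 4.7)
The plan is to follow the outline given in the paragraph preceding the theorem: exhibit two elections $E$ and $E'$ that differ by a chain of single-approval changes from about $n/k^2$ voters, and show that for any ex-post \jr{} rule $f$, the selection probability of some distinguished candidate $c^*$ jumps by $\Omega(1/k)$ between them. Dividing by the number of approval changes and invoking the triangle inequality along the chain then forces $L = \Omega(k/n)$.

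First I would set up $E$ on a candidate set $C \supseteq \{c_1,\dots,c_{k+1}\}$ by partitioning the voters into $k+1$ disjoint groups of size $\frac{n(k-1)}{k^2} = \frac{n}{k} - \frac{n}{k^2}$ (assuming divisibility; rounding is absorbed into constants), with the $i$th group approving exactly $c_i$, plus a residual pool of $\frac{n}{k^2}$ voters approving nothing. Any further candidates in $C$ are dummies with empty support. Since every support group is strictly smaller than the \jr{} threshold $\frac{n}{k}$, the axiom imposes no constraint on the inclusion of any $c_i$ in $E$. Using $\sum_{c \in C} \pi_c(f(E)) = \mathbb{E}[|f(E)|] \le k$ and pigeonhole over the $k+1$ special candidates, I obtain a $c^* \in \{c_1,\dots,c_{k+1}\}$ with $\pi_{c^*}(f(E)) \le \frac{k}{k+1}$.

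Next I would construct $E'$ from $E$ by redirecting each of the $\frac{n}{k^2}$ null voters to approve only $c^*$. In $E'$ the supporters of $c^*$ form an exactly $\frac{n}{k}$-large group whose only approved candidate is $c^*$, so any committee omitting $c^*$ leaves this group entirely unrepresented and violates \jr{}. Ex-post \jr{} therefore forces $c^* \in W$ for every $W$ in the support of $f(E')$, giving $\pi_{c^*}(f(E')) = 1$, and hence $\pi_{c^*}(f(E')) - \pi_{c^*}(f(E)) \ge \frac{1}{k+1}$.

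Finally I would chain the single-approval changes. Enumerate the null voters and let $E^{(0)} = E, E^{(1)}, \dots, E^{(n/k^2)} = E'$ be the sequence of elections in which $E^{(t+1)}$ differs from $E^{(t)}$ in exactly one approval of exactly one voter (changing from $\emptyset$ to $\{c^*\}$). Each consecutive pair is a valid instance of $L$-$1$-continuity, so the triangle inequality yields $\frac{n}{k^2} \cdot L \;\ge\; |\pi_{c^*}(f(E')) - \pi_{c^*}(f(E))| \;\ge\; \frac{1}{k+1}$, i.e., $L = \Omega(k/n)$. The only real delicacy is the choice of group size $\frac{n(k-1)}{k^2}$: it must be small enough to leave every group strictly below the \jr{} threshold in $E$, yet large enough that exactly $\frac{n}{k^2}$ extra supporters suffice to push $c^*$'s group to the threshold in $E'$. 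Once this balance is set, the averaging step and the triangle inequality do the rest.
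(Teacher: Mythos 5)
Your proposal is correct and follows essentially the same route as the paper's proof: the same $(k+1)$-group construction with $\frac{n}{k}-\frac{n}{k^2}$ supporters each and $\frac{n}{k^2}$ null voters, the same pigeonhole argument giving $\pi_{c^*}\le \frac{k}{k+1}$, and the same chaining of $\frac{n}{k^2}$ single-approval changes to force $\pi_{c^*}=1$ via \jr{}. No gaps.
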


The lower bound from \Cref{thm:JR-cand-cont-lb} implies the same lower bound for any class of rules which is a superset of rules satisfying \jr{}, e.g., rules satisfying \pjr{}, \ejr{}, \ejrp{}, core stability or priceability \cite[Figure 4.1]{lackner23multi}.
In particular, our rules
satisfy \jr{} ex post,
but we can also directly establish a stronger continuity lower bound for them.
The approach is similar to above, but this time we construct an election with $k+1$ \emph{pairs} of candidates.
This forces our rule to split probability \( \frac{k}{(k+1)} \) among a pair of candidates, hence some candidate has probability at most $0.5$.
After altering the election by \( \frac{n}{k^2} \) approvals, the probability of that candidate has to increase by $0.5$.

\begin{restatable}{theorem}{noisyContLB}
\label{thm:noisy-rules-cand-cont-lb}
    The continuity of \noisygjcr{}
    is $\Omega\left(\frac{k^2}{n}\right)$, even when $\Delta = 1$. 
\end{restatable}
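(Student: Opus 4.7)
I would refine the approach of \Cref{thm:JR-cand-cont-lb} by combining $k+1$ twin pairs of candidates (to exploit the symmetry afforded by neutrality) with the specific exponential weights $g_\ell$ used by \noisygjcr{}. Concretely, let $E$ consist of $k+1$ disjoint voter groups $N_1,\dots,N_{k+1}$ of size $\lceil n/(k+1)\rceil$ and, for each $j$, a twin pair $(c_j, c_j')$ both approved exactly by $N_j$ (any divisibility slack is absorbed by a handful of voters with empty ballots, which does not affect the asymptotics). Form $E'$ from $E$ by choosing $\delta := \lceil n/k\rceil - \lceil n/(k+1)\rceil = \Theta(n/k^2)$ voters outside $N_1$ and letting each additionally approve $c_1$. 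The symmetric structure of $E$, together with the neutrality of \noisygjcr{} (\Cref{def:candidate-neut}), implies all $2(k+1)$ candidates share a common selection probability $p$ in $E$; since $\sum_c \pi_c(E) \le \mathbb{E}[|W|] \le k$, this yields $p \le k/(2(k+1)) < 1/2$.

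The crux of the plan is to show $\pi_{c_1}(E') \ge 3/4$. I would first observe that no candidate is a quasi-witness at epochs $\ell \geq 2$ in $E'$, since even the largest support $\lceil n/k\rceil$ falls below the threshold $2n/(k+1)$ for $k \geq 2$ and $n$ large; hence $W = \emptyset$ at the start of epoch $\ell = 1$. In the first round of epoch $\ell=1$, $c_1$ is a true witness while the remaining $2k+1$ candidates are mere quasi-witnesses with support $\lceil n/(k+1)\rceil$. Substituting $a_1 = \frac{k(k+1)}{n}\log n$ (from \eqref{eq:gjcr-a-defn} with $\Delta=1$) into the weight function gives ratio $g_1(\lceil n/k\rceil)/g_1(\lceil n/(k+1)\rceil) = \exp\bigl(a_1(\lceil n/k\rceil-\lceil n/(k+1)\rceil)\bigr) = \Theta(n)$, so that
\[
    \pi_{c_1}(E') \;\ge\; \frac{g_1(\lceil n/k\rceil)}{g_1(\lceil n/k\rceil) + (2k+1)\,g_1(\lceil n/(k+1)\rceil)} \;\geq\; 1 - O(k/n) \;\geq\; \tfrac{3}{4}
\]
for $n$ sufficiently large.

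Finally, I would interpolate between $E$ and $E'$ by a sequence $E = E^{(0)}, \dots, E^{(\delta)} = E'$ in which each step adds $c_1$ to the ballot of one further voter (so each step has $\Delta = 1$). Because $|\pi_{c_1}(E')-\pi_{c_1}(E)| \ge 1/4$, the triangle inequality forces some single-voter change to shift $\pi_{c_1}$ by at least $\tfrac{1}{4\delta} = \Omega(k^2/n)$, yielding the bound. The most delicate step is the weight-ratio computation in $E'$: the precise choice of $a_\ell$ in \eqref{eq:gjcr-a-defn}, calibrated to deliver the $\widetilde{O}(k^3/n)$ stability upper bound, is exactly what makes $c_1$'s weight $\Theta(n)$ times each other quasi-witness's weight. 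Without this exact balancing one would only recover the generic $\Omega(k/n)$ bound of \Cref{thm:JR-cand-cont-lb}.
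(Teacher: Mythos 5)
Your proposal is correct and follows essentially the same route as the paper's proof: $k+1$ twin pairs of identically-supported candidates force some candidate's selection probability below $1/2$, and $\Theta(n/k^2)$ added approvals then push that probability to (nearly) $1$, so some single-approval change must move it by $\Omega(k^2/n)$. The only real difference is the endgame: the paper sizes the groups at $\frac{n}{k}-\frac{n}{k^2}$ (reserving $\frac{n}{k^2}$ empty-ballot voters to supply the new approvals) so that after the change every competitor falls strictly below the $\frac{n\ell}{k+1}$ threshold and drops out of $\cL$ entirely, giving $P_{c^*}=1$ exactly with no appeal to the exponential weights, whereas you keep the competitors as quasi-witnesses and bound the softmax ratio via the explicit choice of $a_1$ --- which works (modulo the minor bookkeeping that $k+1$ groups of size $\lceil n/(k+1)\rceil$ overfill $N$, so you must round down), but the calibration of $a_\ell$ is not actually needed for the bound.
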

We remark that the same lower bound holds for \noisygjcrcap{}, an algorithm we introduce in the next section. Because $\eps$-TV stability implies $\eps$-continuity by \Cref{obs:marginalseasier}, our continuity lower bounds imply lower bounds for $\eps$-TV stability.

\section{Improved Continuity via Relaxations} \label{sec:more-stability}

In \Cref{sec:rando-rules}, we saw how to attain $\Tilde{O}(\frac{k^3}{n})$-continuity in conjunction with \ejrp{}. We also showed that no rule satisfying ex-post JR can achieve better than \(\Omega(\frac{k}{n})\)-continuity. Motivated by this gap, we explore ways to improve continuity guarantees by relaxing some of our other desiderata.

First, in \Cref{sec:gjcr-slack}, we present a simple adaptation of \noisygjcr{} that achieves \(\alpha\)-EJR+ while improving continuity guarantees. For instance, setting \(\alpha = 1 - 1/\log(k)\) leads to \(\tilde{O}(\frac{k^2}{n})\)-continuity. Second, we explore the spectrum between \jr{} and \ejrp{}, as introduced in \Cref{sec:prelims}. Here, we propose another adaptation of \noisygjcr{} that achieves -- for example -- a continuity improvement by a factor of $\log(k)$ when relaxing \ejrp{} to \jr{}. Finally, we pick up an idea discussed at the beginning of \Cref{sec:rando-rules}, namely, to randomize over all \ejrp{} committees weighted by a measure of proportionality (in our case the \emph{PAV} score). This yields a continuity guarantee that improves upon the one presented for \noisygjcr{} when the number of approvals changed by the voter (denoted by $\Delta$) is small. On the other hand, this algorithm is not polynomial-time.

A summary of our results and their corresponding bounds can be found in \Cref{tab:results}. To stay close to our motivation of candidate fairness, we state all of our stability results in this section in terms of candidate continuity. However, all of these results are achieved via proving the corresponding TV-stability bounds, which we state in \Cref{app:more-stability}.

\subsection{\texorpdfstring{A Relaxed \noisygjcr{} for $\alpha$-\ejrp{}}{A Relaxed \noisygjcr{} for a-\ejrp{}}}\label{sec:gjcr-slack}

We introduce a variant of \noisygjcr{}, parameterized by \(\alpha \in (0,1]\), that satisfies \(\alpha\)-\ejrp{}. The only modification compared to \Cref{alg:gjcr-noisy} is in \Cref{line:GJCR-noisy-sample}, where we replace the probability with \[P_c \defeq \frac{g_\ell(\scl)}{\max\left(\sum_{d \in \LW} g_\ell(\sdl), \: g_\ell\left(\frac{n\ell}{\alpha k}\right)\right)},\]  
where the difference is the inclusion of \(\alpha\) in the denominator. We refer to this variant as \noisygjcrslack{} and provide its formal definition in \Cref{app:gjcr-slack}. The following theorem follows from arguments that are closely parallel to those used for \noisygjcr{}.  

\begin{restatable}{theorem}{gcjrSlackThm} \label{thm:gjcr-slack-correct}
    \noisygjcrslack{}{} for $\alpha \in (0,1]$ satisfies 
    ex-post $\alpha$-\ejrp{}, ex-ante neutrality, and ex-ante monotonicity.
\end{restatable}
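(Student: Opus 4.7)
My plan is to adapt the corresponding proofs for \noisygjcr{} to the relaxed algorithm \noisygjcrslack{}, handling each of the three claims in turn. For ex-post $\alpha$-\ejrp{}, I will mirror the contradiction argument underlying \Cref{thm:gjcr-noisy-correct}, threaded through the relaxed threshold $\tfrac{n\ell}{\alpha k}$. Suppose for contradiction that the returned committee $W$ fails $\alpha$-\ejrp{}, witnessed by $c^* \in C \setminus W$ and $\ell^* \in [k]$ with $|\{i \in N_{c^*} : |A_i \cap W| < \ell^*\}| \geq \tfrac{n\ell^*}{\alpha k}$. Since this quantity is monotonically non-increasing in the current committee, the same inequality persists at the beginning of, and throughout, epoch $\ell^*$. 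Because $\alpha \leq 1$ implies $\tfrac{n\ell^*}{\alpha k} > \tfrac{n\ell^*}{k+1}$, the candidate $c^*$ sits in $\LW$ at every round of epoch $\ell^*$ for which $c^* \notin W$; and since $c^*$ alone contributes at least $g_{\ell^*}(\tfrac{n\ell^*}{\alpha k})$ to $\sum_{d \in \LW} g_{\ell^*}(n_{d\ell^*})$, the revised denominator
\[
    \max\Bigl(\textstyle\sum_{d \in \LW} g_{\ell^*}(n_{d\ell^*}),\ g_{\ell^*}\bigl(\tfrac{n\ell^*}{\alpha k}\bigr)\Bigr)
\]
equals the sum. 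Consequently $P_\bot = 0$ at every such round, forcing a candidate to be added in each of the $k$ rounds of epoch $\ell^*$. Combined with the voter-budget accounting that underpins \Cref{thm:gjcr-noisy-correct} (bounding $|W| \leq k$ and tracking the supporters consumed by each forced addition), this yields the contradiction: for any $\alpha \leq 1$, the supporter count $\tfrac{n\ell^*}{\alpha k}$ demanded by the $\alpha$-witness $c^*$ cannot coexist with the supporters consumed by the $k$ forced additions in epoch $\ell^*$.

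For ex-ante neutrality, the argument is by inspection: every quantity governing the sampling in \Cref{line:GJCR-noisy-sample}—the set $\LW$, the counts $n_{d\ell}$, and the constant $g_\ell(\tfrac{n\ell}{\alpha k})$—is invariant under bijective renaming of candidates. For ex-ante monotonicity, I plan to reprise the per-step argument in \Cref{lem:gjcr-mon-next}. The only structural change from \noisygjcr{} to \noisygjcrslack{} is that the constant $g_\ell(\tfrac{n\ell}{\alpha k})$ replaces $g_\ell(\tfrac{n\ell}{k})$ inside the $\max$ in the denominator; crucially, this constant does not depend on the election. As a result, the case analysis in \Cref{lem:gjcr-mon-next}—which tracks how $n_{c^*\ell}$, membership of $c^*$ in $\LW$, and both terms inside the $\max$ evolve when a voter in $E'$ additionally approves $c^*$—applies essentially unchanged, delivering the per-step monotonicity inequalities for all three cases ($c^*$, $d \neq c^*$, and $\bot$). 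Induction over the decision tree, as in \Cref{thm:gjcr-mon}, then supplies ex-ante monotonicity.

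The main obstacle lies in the $\alpha$-\ejrp{} argument: carefully transporting the voter-budget accounting through the relaxed threshold $\tfrac{n\ell}{\alpha k}$ to derive the final contradiction is the most delicate point, especially in verifying that the $k$ forced additions in epoch $\ell^*$ indeed exhaust enough voter budget to rule out the hypothesized $\alpha$-witness for every $\alpha \leq 1$. Once this accounting is in place, neutrality is automatic and monotonicity is a direct transposition from \noisygjcr{} since the $\alpha$-dependent modification to the denominator is election-independent.
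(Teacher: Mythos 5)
Your proposal is correct and follows essentially the same route as the paper: the $\alpha$-\ejrp{} claim is reduced to the observation that any surviving $\frac{\ell}{\alpha}$-large witness keeps the denominator equal to the sum (so $P_\bot = 0$ and $k$ candidates are forced in that epoch), after which the unchanged price-system accounting from \Cref{thm:gjcr-noisy-correct} bounds the remaining underrepresented voters by $\frac{n\ell}{k+1} < \frac{n\ell}{\alpha k}$, yielding the contradiction; neutrality is by inspection and monotonicity follows because the only change is an election-independent constant in the denominator of $P_c$, so \Cref{lem:gjcr-mon-next} and the induction of \Cref{thm:gjcr-mon} apply verbatim. The paper merely phrases the first part as a two-case analysis (some round gives $\bot$ positive probability vs.\ all $k$ rounds select) rather than your direct contradiction, which is an immaterial difference.
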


Increasing the denominator in $P_c$ helps us to interpolate between situations where we have few witnesses for \ejrp{} violations (who thus receive high selection probabilities), and situations where none of these candidates are witnesses. 

\begin{restatable}{theorem}{thmgjcrslackcontinuous} \label{thm:noisy-GJCR-slack-continuous}
    \noisygjcrslack{} for $\alpha \in (0,1]$ and some $a_{\ell}$ (defined in the appendix) is $L$-$\Delta$-continuous for some constant $\gamma$ and
    \[
        L= \gamma \cdot \frac{k^2}{n (1-\alpha + \frac{2-\alpha}{k})} \cdot (\log k) \cdot \left(\log n + \log \Delta \right).
    \]
\end{restatable}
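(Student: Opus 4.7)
The plan is to follow the exact blueprint used for \noisygjcr{} in \Cref{sec:gjcr-stable}, with modifications localized to the change in the denominator of the sampling probability $P_c$. Specifically, I will (i) prove a per-step bound on the TV distance $\dtv{\cN_s}{\cN_s'}$ of the next-candidate distributions (the direct analogue of \Cref{lem:gjcr-stability-per-step}); (ii) choose $a_\ell$ to balance the two resulting terms; (iii) couple the $k^2$ sampling steps as in \Cref{lem:noisy-GJCR-sequence-stable} to derive TV-stability of the full sequence distribution; and finally (iv) pass to the committee distribution and invoke \Cref{obs:marginalseasier} to convert TV-stability into the stated candidate continuity.

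For step (i), the only algorithmic change is that the denominator $G$ in $P_c$ uses the threshold $g_\ell(n\ell/(\alpha k))$ in place of $g_\ell(n\ell/k)$, while $\LW$ is unchanged. Inspection of the proof of \Cref{lem:gjcr-stability-per-step} shows that this threshold enters only through the lower bound on $\min(G,G')$ used to control the contribution of candidates in $\LW \oplus \LW'$. With the new threshold, each such $c$ contributes at most
\[
    \frac{\max(g_\ell(\scl),\, g_\ell(\sclp))}{\min(G, G')}
    \;\le\; \frac{g_\ell\!\left(\tfrac{n\ell}{k+1}+1\right)}{g_\ell\!\left(\tfrac{n\ell}{\alpha k}\right)}
    \;=\; \exp\!\left(-a_\ell\!\left(\tfrac{n\ell}{\alpha k} - \tfrac{n\ell}{k+1} - 1\right)\right).
\]
All other estimates in the proof of \Cref{lem:gjcr-stability-per-step} transfer verbatim (the inequality \eqref{eq:gjcr-delta-vals-general}, the gradient bound \eqref{eq:gjcr-gdelta-bound}, and the triangle manipulation based on \Cref{fac:quotientdiff}), yielding a per-step bound of the form
\[
    \dtv{\cN_s}{\cN_s'} \;\le\; \gamma_1 \cdot a_\ell \;+\; 4 \cdot \exp\!\left(\log \Delta_v - a_\ell\!\left(\tfrac{n\ell}{\alpha k} - \tfrac{n\ell}{k+1}\right) + a_\ell\right).
\]

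For step (ii), a one-line simplification gives $\tfrac{n\ell}{\alpha k} - \tfrac{n\ell}{k+1} = \tfrac{n\ell\,(k(1-\alpha)+1)}{\alpha k(k+1)}$, so the choice $a_\ell \defeq \tfrac{\alpha k(k+1)}{n\ell\,(k(1-\alpha)+1)} \cdot (\log n + \log \Delta)$ forces the exponential term to be $O(1/n) = O(a_\ell)$ and leaves a per-step bound of $O(a_\ell)$. Step (iii) then couples the $k$ rounds of each epoch and sums, contributing $k \cdot a_\ell$ per epoch; summing over $\ell \in [k]$ invokes the harmonic sum $\sum_{\ell=1}^k 1/\ell = O(\log k)$, producing an overall sequence-level TV bound of $O\!\left(\tfrac{\alpha k^2(k+1)}{n\,(k(1-\alpha)+1)}\, \log k \cdot (\log n + \log \Delta)\right)$. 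Post-processing via \Cref{obs:marginalseasier} then yields candidate continuity with the same rate.

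The chief technical obstacle is the clean algebraic rewriting required to match the stated denominator. One has to verify that $\tfrac{\alpha(k+1)}{k(1-\alpha)+1}$ behaves like $\tfrac{k}{1-\alpha + (2-\alpha)/k}$ up to a universal constant uniformly over $\alpha \in (0,1]$: in the regime where $\alpha$ is bounded away from $1$ the factor degenerates to the desired $\tfrac{1}{1-\alpha}$-type improvement, while as $\alpha \to 1$ the $(2-\alpha)/k = \Theta(1/k)$ term keeps the bound from collapsing and recovers the $\widetilde{O}(k^3/n)$ guarantee of \Cref{thm:noisy-GJCR-continuous} (thereby also confirming that $a_\ell$ satisfies $a_\ell \le 1$ in the relevant regime, so \Cref{obs:a-ell-small-wlog} need not be invoked trivially). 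All remaining ingredients -- the per-step inequality, the coupling-based union bound over the $k^2$ samples, and the post-processing step -- are direct transcriptions of the arguments for \noisygjcr{}.
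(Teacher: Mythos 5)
Your proposal follows the paper's proof essentially step for step: the paper proves exactly your per-step lemma (its Lemma~\ref{lem:gjcr-slack-stability-per-step}, whose exponent $a_\ell\frac{n\ell}{k(k+1)}(\frac{k+1}{\alpha}-k)$ coincides with your $a_\ell(\frac{n\ell}{\alpha k}-\frac{n\ell}{k+1})$), then chooses $a_\ell$ to kill the exponential term, runs the identical coupling/union-bound argument over the $k^2$ samples, and finishes with \Cref{obs:marginalseasier}. The only divergence is your choice of $a_\ell$, and it introduces one incorrect intermediate claim. The paper first applies $\frac{1}{\alpha}\ge 2-\alpha$ to bound $\frac{k+1}{\alpha}-k\ge k\left(1-\alpha+\frac{2-\alpha}{k}\right)$ and then sets $a_\ell=\frac{k+1}{n\ell\left(1-\alpha+\frac{2-\alpha}{k}\right)}\log(n\Delta)$, which is always $\Omega\left(\frac{\log(n\Delta)}{n}\right)$; you instead take the exact minimizer $a_\ell=\frac{\alpha k(k+1)}{n\ell(k(1-\alpha)+1)}\log(n\Delta)$, which tends to $0$ as $\alpha\to 0$. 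Consequently your assertion that the residual exponential term is ``$O(1/n)=O(a_\ell)$'' fails for small $\alpha$ (it is $O(1/n)$ but can exceed $a_\ell$ by an unbounded factor), and relatedly your claim that $\frac{\alpha(k+1)}{k(1-\alpha)+1}$ matches $\frac{k}{1-\alpha+(2-\alpha)/k}$ up to a universal constant holds only in the direction $\lesssim$, not $\gtrsim$. This is repairable without changing your structure: either carry the extra $O(1/n)$ per step, which after summing over the $k^2$ steps contributes $O(k^2/n)=O(L)$ for the stated $L$, or inflate $a_\ell$ as the paper does. With that patch your bound is in fact at most a constant times the stated $L$ (and sharper for small $\alpha$), so the theorem follows; but as written the step equating $1/n$ with $O(a_\ell)$ does not hold uniformly over $\alpha\in(0,1]$.
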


For $\alpha = 1$, we recover our bound from \Cref{thm:noisy-GJCR-continuous} and when choosing $\alpha \leq 1 - 1/\log k$ this implies $\tilde O \left(\frac{k^2}{n}\right)$-continuity.

\subsection{A Relaxed \noisygjcr{} for JR} \label{sec:noisy-greedy-JR}

In the definition of \ejrp{} we have a collection of constraints, one for every $\ell \in [k]$. Our second approach to relaxing \ejrp{} is by asking for representation for groups that are $\ell$-large for $\ell \leq \ell_{max}$.

\begin{definition}[$\ell_{\max}$-capped \ejrp{}] \label{def:ejrp-capped}
    A committee $W$ provides \emph{$\ell_{\max}$-capped $\alpha$-\ejrp{}} if
    for every $\ell \in [\ell_{\max}]$
    there is no candidate $c \in C \setminus W$ and an 
    $\ell$-large subset of voters $N' \subseteq N$,
    such that
    \[
        c \in \bigcap_{i \in N'} A_i
        \quad \text{and} \quad
        \forall_{i \in N'} \; |A_i \cap W| < \ell.
    \]
\end{definition}

When $\ell_{\text{max}} = k$, we recover the definition for EJR+ and when $\ell_{\text{max}} = 1$, we recover the definition of JR. There exists a simple adaptation of \noisygjcr{} which satisfies $\ell_{\max}$-capped EJR+, namely, we simply start the algorithm in epoch $\ell_{\text{max}}$ instead of $k$. We refer to this algorithm as \noisygjcrcap{} and formalize it in \Cref{app:gjcr-capped}. In particular, we refer to \noisygjcrcap{} with $\lm=1$ as \noisygreedy{}, since it provides \jr{}.\footnote{\noisygreedy{} can be seen as a randomized variant of an algorithm known as \textsc{seqCC} or \textsc{GreedyCC}~\cite{lackner23multi,ElkindFIMSS24}.}

\begin{restatable}{theorem}{gcjrCappedThm}
    \noisygjcrcap{} for $\lm \in [k]$ satisfies 
    ex-post \lm-capped \ejrp{}, ex-ante neutrality, and ex-ante monotonicity.
\end{restatable}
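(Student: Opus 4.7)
The plan is to adapt, essentially verbatim, the three corresponding results for \noisygjcr{} from \Cref{sec:rando-rules}. The only structural difference between the two algorithms is that \noisygjcrcap{} omits the epochs $\ell \in \{\lm+1, \ldots, k\}$; within each executed epoch the sampling rule in \cref{line:GJCR-noisy-sample} is identical. Since we begin from $W = \emptyset$ in both algorithms, starting \noisygjcrcap{} at epoch $\lm$ is equivalent to running \noisygjcr{} and unconditionally declining to select in every epoch above $\lm$, a useful mental model for transporting arguments.

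For ex-post $\lm$-capped \ejrp{}, I would mirror the proof of \Cref{thm:gjcr-noisy-correct}. That argument shows that for each epoch $\ell$ processed by the algorithm, no true witness $c \notin W$ with $\scl \geq \frac{n\ell}{k}$ can survive the $k$ rounds of that epoch: whenever a true witness exists, the denominator in $P_c$ equals $\sum_{d \in \LW} g_\ell(\sdl)$ (not the threshold term $g_\ell(\frac{n\ell}{k})$), so every remaining witness is sampled with strictly positive probability, and a Brill--Peters-style supporter-accounting argument drives $\scl$ below the threshold by the end of the epoch. Since \noisygjcrcap{} executes every epoch $\ell \in [\lm]$ in exactly the same fashion, and since $\lm$-capped \ejrp{} only demands the guarantee for $\ell \in [\lm]$, this yields the axiom directly. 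The skipped high-$\ell$ epochs do no harm: the invariant at the end of epoch $\ell$ is stated purely in terms of the current $W$ and the input profile, so it does not appeal to whether any larger epoch was executed.

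Ex-ante neutrality is immediate and mirrors the observation stated for \noisygjcr{}: every decision of the algorithm is a function of the approval sets via the quantities $\scl$ and $\LW$, with no reference to candidate identities, so any candidate-renaming bijection $\sigma$ induces the corresponding bijection on output committees with preserved probabilities.

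For ex-ante monotonicity, I would reuse \Cref{lem:gjcr-mon-next} verbatim and then invoke the inductive argument from \Cref{thm:gjcr-mon}. The per-step lemma is a statement about one application of \cref{line:GJCR-noisy-sample} conditioned on a prefix $s$ with $c^* \notin s$; it depends only on the current $\ell$, the partial committee $W_s$, and the altered approval of voter $v$, none of which are affected by which epochs the outer loop traverses. Inducting over the decision tree of \noisygjcrcap{}, which now has depth $k \cdot \lm$ rather than $k^2$, propagates the three per-step inequalities to the full run and yields $\pi_{c^*}(f(E')) \geq \pi_{c^*}(f(E))$. The main item requiring care is the first claim, namely confirming that nothing in the \ejrp{} argument secretly uses the large-$\ell$ epochs; once one notes that the end-of-epoch invariant is purely local in $\ell$, both the neutrality and monotonicity claims follow with no additional work.
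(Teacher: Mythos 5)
Your proposal is correct and follows essentially the same route as the paper: the paper likewise proves all three properties by observing that \noisygjcrcap{} is \noisygjcr{} restricted to epochs $\ell \in [\lm]$, reusing the price-system/witness-elimination argument of \Cref{thm:gjcr-noisy-correct} epoch-by-epoch for the $\lm$-capped axiom, citing the definition for neutrality, and rerunning \Cref{lem:gjcr-mon-next} plus the induction of \Cref{thm:gjcr-mon} over the shorter sequence of length $k\cdot\lm$ for monotonicity. The only cosmetic difference is that the paper explicitly notes that the $\abs{W}\leq k$ bound carries over via the same pricing argument, which your proposal subsumes under ``mirror the proof.''
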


Here, the continuity benefit is more subtle; relaxing \ejrp{} to JR yields an $\log(k)$ improvement. 

\begin{restatable}{theorem}{thmgjcrcappedcontinuous} \label{thm:noisy-GJCR-capped-continuous}
      \noisygjcrcap{} for $\lm \in [k]$ and $a_{\ell}$ defined in \Cref{eq:gjcr-a-defn} is $L$-$\Delta$-continuous for some constant $\gamma$ and
    \[
         L= \gamma \cdot \frac{k^3}{n} \cdot (\log (\lm)+1) \cdot \left(\log n + \log \Delta \right).
    \]

\end{restatable}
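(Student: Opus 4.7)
The plan is to replicate the argument used for \noisygjcr{} in \Cref{lem:noisy-GJCR-sequence-stable} almost verbatim, with the only difference that the outer loop of \noisygjcrcap{} ranges over $\ell \in \{1,\dots,\lm\}$ rather than $\ell \in \{1,\dots,k\}$. This truncation is exactly what trades the $\log k$ factor in \Cref{thm:noisy-GJCR-continuous} for the $\log(\lm)+1$ factor here, via replacing the $k$-th harmonic number $H_k$ with $H_{\lm}$.

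First, I would argue that the per-step TV distance bound from \Cref{lem:gjcr-stability-per-step} transfers unchanged to \noisygjcrcap{}. The proof of that lemma is entirely local to a single iteration: it only uses the current (partial) committee $W$, the current epoch index $\ell$, and the counts $\scl,\sclp$ to bound $\dtv{\cN_s}{\cN_s'}$; it never references which epochs, if any, preceded the current one. Since each step of \noisygjcrcap{} has exactly the same sampling distribution as the corresponding step of \noisygjcr{}, retaining the choice $a_\ell = \frac{k(k+1)}{n\ell}\log(n\Delta)$ from \eqref{eq:gjcr-a-defn} yields the per-step TV bound $\dtv{\cN_s}{\cN_s'} = O(a_\ell)$, exactly as in the proof of \Cref{thm:noisy-GJCR-continuous}.

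Next, I would build the statistical coupling as in the proof sketch of \Cref{lem:noisy-GJCR-sequence-stable}, jointly sampling the two execution sequences on $E$ and $E'$ so that they agree at each step with probability at least $1 - O(a_\ell)$, and union bound over all iterations. Since \noisygjcrcap{} performs at most $k$ inner rounds in each of its $\lm$ epochs, this gives
\[
\dtv{\cS_E}{\cS_{E'}} \;\leq\; \gamma' \cdot k \sum_{\ell=1}^{\lm} a_\ell \;=\; \gamma' \cdot \frac{k^2(k+1)}{n}\log(n\Delta) \sum_{\ell=1}^{\lm}\frac{1}{\ell},
\]
for some constant $\gamma'$. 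Bounding $\sum_{\ell=1}^{\lm} 1/\ell = H_{\lm} \leq \log(\lm)+1$ and pushing the TV bound through the projection to committees yields the same bound on $\dtv{\cD_{f(E)}}{\cD_{f(E')}}$. Finally, \Cref{obs:marginalseasier} converts this $\eps$-$\Delta$-TV-stability into the desired $L$-$\Delta$-continuity with $L = \gamma\cdot\frac{k^3}{n}\cdot(\log(\lm)+1)\cdot(\log n + \log \Delta)$.

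The main obstacle is conceptual rather than technical: I need to verify that the per-step analysis of \noisygjcr{} really is unaffected by starting the outer loop at $\lm$ instead of $k$. Because that analysis is phrased only in terms of the current epoch index $\ell$ and the local state (the counts $\scl$ and the set $\LW$), and because truncating to $\ell \leq \lm$ simply omits later epochs without modifying the behavior within the retained ones, no piece of the argument breaks. The continuity improvement is therefore a clean consequence of the smaller range of $\ell$ in the union bound.
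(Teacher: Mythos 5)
Your proposal is correct and follows essentially the same route as the paper's proof: the paper likewise observes that the per-step bound of \Cref{lem:gjcr-stability-per-step} is local to a single iteration and carries over unchanged, then reruns the coupling and union bound over the $k\cdot\lm$ iterations to replace $H_k$ with $H_{\lm}\leq \log(\lm)+1$, and finally invokes \Cref{obs:marginalseasier}. No gaps.
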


\subsection{Softmax Proportional Approval Voting (\noisypav)} \label{sec:exp-PAV}

We introduce a second rule that can either be seen as a randomized version of the well-established rule \emph{proportional approval voting (PAV)} \cite{Thie95a}, or as carefully randomizing over EJR+ committees via a weighting function dependent on the PAV score. 
As before, let $g(x) \defeq e^{a x}$ for some $a$ to be determined. 
Let $\pav{W} \defeq \sum_{i \in N} \sum_{r = 1}^{\abs{A_i \cap W}}r^{-1}$ denote the PAV score of committee $W$. %

\begin{algorithm}[H]
\caption{\noisypav{}}
\label{alg:exp-PAV}
$\W_E \gets \left\{W \in \binom{C}{k}: W \text{ satisfies \ejrp{} for } E\right\}$\;
Sample $W \sim \W_E$ with probability $P_W \defeq \frac{g(\pav{W})}{\sum_{W' \in \W_E} g(\pav{W'})}$ \;
\Return $W$\;
\end{algorithm}

While \noisypav{} trivially satisfies \ejrp{} and neutrality by definition, proving its monotonicity requires more arguments. The proof works for \emph{any} increasing $g(h)$. 

\begin{restatable}{theorem}{noisyPAVmonotone} \label{thm:exp-PAV-monotonicity}
    \noisypav{} 
    satisfies 
    ex-post \ejrp{}, ex-ante neutrality, and ex-ante monotonicity.
\end{restatable}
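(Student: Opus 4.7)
Ex-post \ejrp{} and neutrality are essentially by construction, so my plan is to dispatch both in one line and focus the argument on ex-ante monotonicity. Ex-post \ejrp{} is immediate because \noisypav{} samples only from $\W_E$, the set of \ejrp{} committees. Neutrality follows because $\pav{W}$ and membership in $\W_E$ depend only on the combinatorial structure of the approvals, so relabeling candidates induces the corresponding relabeling of committees and preserves their weights $g(\pav{W})$.

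For monotonicity, let $E$ and $E'$ differ only in that voter $v$ additionally approves a single candidate $c^* \notin A_v$, so $A_v' = A_v \cup \{c^*\}$. I would partition $\W_E = \W_E^+ \cup \W_E^-$ based on whether $c^* \in W$, and likewise partition $\W_{E'}$. Writing $X \defeq \sum_{W \in \W_E^+} g(\pav{W})$, $Y \defeq \sum_{W \in \W_E^-} g(\pav{W})$, and defining $X', Y'$ analogously using $\pavp{W}$ on $\W_{E'}^\pm$, the goal reduces to $\frac{X'}{X'+Y'} \geq \frac{X}{X+Y}$, equivalently $X' Y \geq X Y'$. My plan is to obtain this by proving the stronger pair of inequalities $X' \geq X$ and $Y' \leq Y$.

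The heart of the argument is to relate $\W_E$ and $\W_{E'}$ via (i) $\W_E^+ \subseteq \W_{E'}^+$ and (ii) $\W_{E'}^- \subseteq \W_E^-$. I would prove each by contrapositive case analysis on an \ejrp{} witness $(c, \ell, N')$. For (i), take $W$ with $c^* \in W$ satisfying \ejrp{} in $E$; any would-be witness in $E'$ must have $c \neq c^*$ (since $c \notin W$ yet $c^* \in W$), so the approvals for $c$ are identical in both elections, and $|A_v \cap W| = |A_v' \cap W| - 1 < \ell$, meaning the same triple already witnesses a violation in $E$, a contradiction. For (ii), take $W$ with $c^* \notin W$ satisfying \ejrp{} in $E'$; any would-be witness $(c, \ell, N')$ in $E$ either has $c \neq c^*$, in which case approvals for $c$ and all overlaps $|A_i \cap W|$ carry over to $E'$ unchanged (because $c^* \notin W$ leaves $|A_v \cap W|$ unchanged), or $c = c^*$, in which case $v \notin N'$ (since $c^* \notin A_v$ in $E$), and again the witness carries over to $E'$ verbatim. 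The hard part will be bookkeeping these cases carefully; each is elementary on its own, but the subtlety in (ii) --- that forcing $c = c^*$ yields $v \notin N'$ and therefore leaves the witness intact under the voter change --- is easy to miss.

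To close, I would compare PAV scores across the two elections. For $W \in \W_E^-$, $\pavp{W} = \pav{W}$ since $c^* \notin W$, so by (ii) we get $Y' \leq \sum_{W \in \W_E^-} g(\pav{W}) = Y$. For $W \in \W_E^+$, voter $v$'s marginal contribution adds $\tfrac{1}{|A_v \cap W| + 1}$ to the PAV score, giving $\pavp{W} \geq \pav{W}$, and then (i) together with the monotonicity of $g$ yields $X' \geq \sum_{W \in \W_E^+} g(\pavp{W}) \geq X$. Combined with $Y \geq 0$, these give $X' Y \geq X Y \geq X Y'$ as desired. Notably, the argument uses only that $g$ is (strictly) increasing, which matches the generality flagged before the theorem statement.
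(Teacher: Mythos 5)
Your proposal is correct and follows essentially the same route as the paper: both arguments hinge on the inclusions $\W_E^+ \subseteq \W_{E'}^+$ and $\W_{E'}^- \subseteq \W_E^-$ together with the fact that the PAV score (and hence $g$) is unchanged for committees omitting $c^*$ and non-decreasing for committees containing it. The only difference is organizational---you cross-multiply to $X'Y \geq XY'$ and prove $X' \geq X$, $Y' \leq Y$ separately, while the paper chains fraction inequalities directly, and the paper asserts the set inclusions in one sentence (they are proved in detail for \textsc{Uniform\ejrp{}}) where you verify them inline.
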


We now address the continuity of \noisypav{}. 
\begin{restatable}{theorem}{expPAVcontinuous}
\label{thm:exp-pav-continuous}
    \noisypav{} for some parameter $a$ (defined in the appendix) is $L$-$\Delta$-continuous for for some constant $\gamma$ and $L= \gamma \cdot \frac{k^2}{n} \cdot \hat\delta \cdot \log\left(\frac{m^kn}{k^2}\right)$, where $\hat\delta \defeq \harm(\min(\abs{\Delta}, \:k)) \leq \log k + 1$. 
\end{restatable}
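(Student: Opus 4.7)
The plan is to prove $\eps$-$\Delta$-TV-stability of \noisypav{} for $\eps \leq L$, which by \Cref{obs:marginalseasier} yields the claimed $L$-$\Delta$-continuity. Fix elections $E$ and $E'$ that differ only on voter $v$'s ballot with $\abs{A_v \oplus A'_v} \leq \Delta$; let $h_W^E$ and $h_W^{E'}$ denote the PAV scores of $W$ under $E$ and $E'$ respectively, and write the sample probability as $P_W = \tilde w(W)/Z$, with $\tilde w(W) \defeq g(h_W^E)\cdot \id{W \in \W_E}$ and $Z \defeq \sum_W \tilde w(W)$ (analogously $\tilde w'$, $Z'$). Starting from the identity $P_W-P'_W = \tfrac{\tilde w(W)-\tilde w'(W)}{Z} + P'_W\cdot\tfrac{Z'-Z}{Z}$ together with $\abs{Z-Z'}\leq \sum_W\abs{\tilde w(W)-\tilde w'(W)}$, I obtain
\begin{equation*}
\dtv{\cD_{f(E)}}{\cD_{f(E')}} \;\leq\; \frac{\sum_W \abs{\tilde w(W)-\tilde w'(W)}}{Z},
\end{equation*}
and then split the numerator over the ``shared'' set $\W_E \cap \W_{E'}$ and the ``flipping'' set $\W_E \triangle \W_{E'}$.

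The first key ingredient is a PAV-sensitivity bound: for every committee $W$, $\abs{h_W^E - h_W^{E'}}\leq \hat\delta$. This follows because only voter $v$'s term in the PAV score changes, $\abs{A_v\cap W}$ and $\abs{A'_v\cap W}$ differ by at most $\min(\Delta,k)$, and the resulting harmonic-number difference telescopes to at most $\hat\delta = \harm(\min(\Delta,k))$. With this in hand, each shared committee $W \in \W_E \cap \W_{E'}$ contributes $\abs{\tilde w(W)-\tilde w'(W)} = g(h_W^E)\cdot\abs{1 - e^{a(h_W^{E'}-h_W^E)}} \leq 2a\hat\delta\cdot g(h_W^E)$ whenever $a\hat\delta \leq 1$, so the total shared-committee contribution to the TV bound is $O(a\hat\delta)$.

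The heart of the proof is bounding the flipping contribution $M/Z$, where $M \defeq \sum_{W\in\W_E\triangle\W_{E'}} \max(g(h_W^E), g(h_W^{E'}))$. I will leverage the classical fact that every PAV-maximizing committee satisfies \ejrp{}: the PAV-maximizer $W^\star_E$ for $E$ lies in $\W_E$, so $Z\geq g(h_{W^\star_E}^E)$; combined with $g(h_W^E)/g(h_{W^\star_E}^E) = e^{-a(h_{W^\star_E}^E - h_W^E)}$, the crude bound $\abs{\W_E\triangle\W_{E'}}\leq \binom{m}{k}\leq m^k$, and a PAV gap $G$ between the optimum and any flipping committee, one obtains $M/Z \leq m^k\cdot e^{-aG}$ up to constants. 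Setting $a = \Theta(k^2/n)\cdot \log(m^k n/k^2)$ then balances the two contributions: the shared term is $O(a\hat\delta) = L$, while with $G \gtrsim n/k^2$ the flipping term is $\lesssim k^2/n = O(L)$. The main obstacle is establishing the required PAV gap---that any $W\in \W_E \triangle \W_{E'}$ satisfies $h_{W^\star_E}^E - h_W^E \gtrsim n/k^2$. This must be derived from the combinatorial structure of \ejrp{} violations: an $\ell$-large group whose maximum representation falls below $\ell$ implies a PAV-improving swap of magnitude $\Omega(n/k^2)$ that such a committee fails to realize, which (together with the fact that $W^\star_E$ leaves no such improvement on the table) yields the required gap.
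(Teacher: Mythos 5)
Your proposal is correct and follows essentially the same route as the paper: prove $\eps$-$\Delta$-TV-stability and invoke \Cref{obs:marginalseasier}, decompose the TV distance over the shared committees $\W_E \cap \W_{E'}$ and the flipping committees $\W_E \oplus \W_{E'}$, control the shared part via the harmonic sensitivity $\hat\delta$ of the PAV score, and kill the flipping part with the $\binom{m}{k}\leq m^k$ counting bound and a PAV gap of roughly $n/k^2$ for committees that enter or leave the \ejrp{} set, finally choosing $a = \Theta\left(\frac{k^2}{n}\log\frac{m^k n}{k^2}\right)$ to balance the two. The ``main obstacle'' you flag is exactly the paper's \Cref{prop:EJRp-buffer}, which it obtains as a corollary of the known pivot argument of \citet[Theorem 1]{aziz18complexity} (any committee whose PAV score is within $n/k^2$ of the optimum satisfies \ejrp{}), i.e., precisely the $\Omega(n/k^2)$-improving-swap argument you sketch, up to the benign $O(a\hat\delta)$ loss from transferring the gap between $E$ and $E'$.
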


For interpreting the above stated bound, we can rewrite it as 
    $O\left( \frac{k^3}{n} \cdot \hat \delta \cdot \log m + \frac{k^2}{n} \cdot \hat \delta \cdot \log n \right)$.
Comparing this to the stability of \noisygjcr{}, we observe that when $\log n = \Theta(\log m)$, the asymptotic stability guarantees for \noisygjcr{} and \noisypav{} are equivalent. However when $\Delta = O(1)$, our bound for \noisypav{} is more stable
by a factor of $\Theta(\log k)$.

\section{Differentially Private Committee Voting} \label{sec:dp-abc}

As mentioned in \Cref{sec:prelims}, the stronger guarantee of $(\eps, \delta)$-$\Delta$-max-KL stability with $\Delta = m$ corresponds to the concept of $(\eps,\delta)$-differential privacy (DP) \cite{dwork06calibrating}.
If a randomized function satisfies $(\eps,\delta)$-differential privacy then its behavior is similar for similar inputs. 
If each part of an input dataset is comprised of individual agents' data, a differentially private mechanism limits the amount of information an adversary can learn about any subset of agents' data from mechanism queries, even when equipped with the rest of the input.
In the study of randomized voting rules, DP has been motivated by the possibility of providing voters with stringent guarantees as to the privacy of their votes \cite{mohsin22learning,david23local}, as well as its possible implications for strategyproofness and manipulation-robustness \cite{lee15efficient,tao22local}.

The question of whether approval-based committee voting rules can satisfy differential privacy guarantees was first asked by \citet{li24differentially}.
They design rules which are $(\eps, 0)$-$m$-max-KL-stable.
However they observe no such rules satisfy \jr{} or stronger proportionality guarantees, and so they introduce novel relaxations of these proportionality axioms that their rules then satisfy. 
We discuss their and other possible approaches to `pure' DP in \Cref{sec:pure-DP-via-relaxations}. We first consider our rules.

\subsection{The Max-KL Stability of \noisypav{} and \noisygjcr{}}
To begin, it turns out that \noisypav{} can provide a range of max-KL stability guarantees.

\begin{restatable}{theorem}{noisyPAVmaxKLstable}
\label{thm:exp-pav-kl-stable}
    For any $a \geq 0$, \noisypav{} (\Cref{alg:exp-PAV}) satisfies $(\eps, \delta)$-$\Delta$-max-KL stability for $\eps = 2 a \hat\delta$ and $\delta = \exp\left(k \log m - a\frac{n}{k^2} + 2a \hat\delta \right)$, where $\hat\delta \defeq \harm(\min(\abs{\Delta}, \:k)) \leq \log k + 1$.
\end{restatable}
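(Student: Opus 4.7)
My plan for proving \Cref{thm:exp-pav-kl-stable} has four parts: (i) a pointwise bound on how much a single committee's PAV score $\pav{W}$ changes between $E$ and $E'$, (ii) a local-search swap lemma showing that any committee violating \ejrp{} admits a single-swap improvement of at least $n/k^2$ in PAV, (iii) using (i) and (ii) to bound the probability that $f(E)$ outputs a committee in $\W_E\setminus\W_{E'}$, and (iv) assembling these into the max-KL bound. The main obstacle is the swap argument in (ii), which requires carefully balancing the gain of adding the \ejrp{} witness against the loss of removing the least-contributing incumbent.

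For (i), when moving from $E$ to $E'$, only voter $v$'s summand $\harm(|A_v\cap W|)$ in the PAV score changes; since $\big||A_v'\cap W|-|A_v\cap W|\big|\leq \min(\Delta,k)$ and $|\harm(a)-\harm(b)|\leq \harm(|a-b|)$ by term-by-term comparison, the PAV score of any committee changes by at most $\hat\delta$. Thus the weights $g(\pav{W})$ computed under $E$ and $E'$ differ by at most a multiplicative factor of $e^{a\hat\delta}$.

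For (ii), if $W$ has an \ejrp{} witness $(c^*,\ell,N')$ with $|N'|\geq \ell n/k$ and $|A_i\cap W|<\ell$ for all $i\in N'$, then adding $c^*$ to $W$ gains at least $|N'|/\ell\geq n/k$ in PAV, since each $i\in N'$ contributes $1/(|A_i\cap W|+1)\geq 1/\ell$. Inside $W\cup\{c^*\}$, the marginal contributions of all $k+1$ members sum to at most $n$, so those of the $k$ members of $W$ sum to at most $n-n/k$, and hence some $c'\in W$ has marginal contribution at most $n/k-n/k^2$. Swapping this $c'$ for $c^*$ yields a size-$k$ committee whose PAV score exceeds $\pav{W}$ by at least $n/k-(n/k-n/k^2)=n/k^2$.

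For (iii), each $W\in T_E\defeq\W_E\setminus\W_{E'}$ violates \ejrp{} in $E'$, so applying (ii) in $E'$ and translating back via (i) produces a size-$k$ committee $W'$ whose PAV score in $E$ exceeds $\pav{W}$ by at least $n/k^2-2\hat\delta$. Since PAV satisfies \ejrp{} \cite{brill23robust}, the PAV-maximizer for $E$ is in $\W_E$ and has PAV score at least that of $W'$; hence $g(\pav{W})\leq e^{-an/k^2+2a\hat\delta}Z_E$ (where $Z_E$ is the normalizer of $f(E)$), and summing over $|T_E|\leq m^k$ gives $\Pr[f(E)\in T_E]\leq m^k e^{-an/k^2+2a\hat\delta}=\delta$. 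For (iv), for any $R\subseteq\W$ write $\Pr[f(E)\in R]\cdot Z_E = \sum_{W\in R\cap\W_E\cap\W_{E'}} g(\pav{W}) + \sum_{W\in R\cap T_E} g(\pav{W}) \leq e^{a\hat\delta}\Pr[f(E')\in R]\,Z_{E'}+\delta Z_E$; then bound $Z_{E'}/Z_E\leq e^{a\hat\delta}(1+O(\delta))$ by using (i) on $\W_E\cap\W_{E'}$ and applying (iii) symmetrically to control the mass on $T_{E'}$. Dividing by $Z_E$ yields the claimed $(\eps,\delta)$-max-KL bound, with the $O(\delta)$ slack absorbed into the stated constants.
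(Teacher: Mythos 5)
Your proposal follows essentially the same route as the paper's proof: split $R$ according to $\W_E \cap \W_{E'}$ versus the symmetric difference, use the fact that any committee violating \ejrp{} has PAV score at least $n/k^2$ below the maximum, translate between $\pav{W}$ and $\pavp{W}$ via the per-voter bound $\hat\delta$, and control the ratio of normalizers. Your part (ii) is a correct from-scratch derivation of what the paper simply invokes as \Cref{prop:EJRp-buffer} (a consequence of \citet{aziz18complexity}), and parts (i) and (iii) match \Cref{lem:g-diff-bound} and the argument of \Cref{lem:pav-stable-support-diff-bound}. The one place where your bookkeeping falls short of the stated theorem is the normalizer comparison in (iv): bounding the mass of $\W_{E'}\setminus\W_E$ against $Z_{E'}$ itself gives the self-referential inequality $Z_{E'} \le e^{a\hat\delta} Z_E + \delta Z_{E'}$, hence $Z_{E'}/Z_E \le e^{a\hat\delta}/(1-\delta)$, and your final multiplicative factor becomes $e^{2a\hat\delta}/(1-\delta)$ rather than $e^{2a\hat\delta}$. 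Since the theorem claims $\eps = 2a\hat\delta$ exactly and there is no free constant to absorb this into (the factor $1/(1-\delta)$ is unbounded as $\delta \to 1$, and even for small $\delta$ it inflates $\eps$ by an additive $\log\frac{1}{1-\delta}$), this is a genuine, if minor, shortfall. The paper avoids it by bounding each $g(\pavp{W})$ for $W \in \W_{E'}\setminus\W_E$ directly against the $E$-normalizer $G = Z_E$ (using $\pav{W} \le \pav{}^{*} - n/k^2$ for such $W$ and $Z_E \ge g(\pav{}^{*})$), which yields $G'/G \le e^{a\hat\delta} + m^k e^{-an/k^2 + a\hat\delta}$ and lets the extra term be folded into the additive $\delta$ without touching $\eps$; substituting that single step into your argument recovers the theorem exactly. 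Everything else in your outline is sound.
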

Recall that $a$ is a free parameter in \noisypav{} which specifies the `temperature' in the sampling from $\W_E$ in \Cref{alg:exp-PAV}.
We chose a particular value of $a$ in \Cref{sec:exp-PAV} in order to optimize TV stability; however no part of the subsequent proof depends on the choice of $a \geq 0$ and \Cref{thm:exp-pav-kl-stable} holds for any choice.
For instance, via $a$ near the value chosen in \Cref{thm:exp-pav-continuous}: 
\begin{restatable}{corollary}{noisyPAVmaxKLstableCorrolary}
\label{cor:exp-pav-kl-stable}
    \noisypav{} is $(\eps, \delta)$-$m$-max-KL-stable for $\eps = 2 a (\log k + 1) = \tilde O\left((k + \kappa)\frac{k^2}{n}\right)$ and $\delta \leq n^{-\kappa}$, when we choose $a = 2\frac{k^2}{n} \left( k \log m + \kappa \log n\right)$ and for any $\kappa > 0$.
\end{restatable}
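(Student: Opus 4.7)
The plan is to derive this corollary by a direct algebraic substitution into Theorem~\ref{thm:exp-pav-kl-stable}, which already provides an $(\eps, \delta)$-$\Delta$-max-KL stability guarantee for \emph{every} choice of $a \ge 0$. The free parameter $a$ appears on both sides of the bound---it scales $\eps = 2a\hat\delta$ while appearing as a $-a\,n/k^2$ term inside the exponent defining $\delta$---so the game is to choose $a$ large enough that $-a\,n/k^2$ drives $\delta$ below $n^{-\kappa}$, yet small enough to keep $\eps$ tight.

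For the $\eps$ bound, I would first specialize to $\Delta = m$, giving $\hat\delta \le \log k + 1$. Plugging the proposed $a = 2\frac{k^2}{n}(k\log m + \kappa \log n)$ into $\eps = 2a\hat\delta$ yields $\eps \le 4\frac{k^2}{n}(k\log m + \kappa \log n)(\log k + 1)$, which absorbs into the claimed $\tilde O\!\left((k + \kappa)\frac{k^2}{n}\right)$ after hiding logarithmic factors. For the $\delta$ bound, the key observation is that this $a$ satisfies $a\,\frac{n}{k^2} = 2(k\log m + \kappa \log n)$, so the $-a\,n/k^2$ term in the exponent from Theorem~\ref{thm:exp-pav-kl-stable} cancels the $k\log m$ and leaves an additional $-k\log m - 2\kappa \log n$. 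The bound then reduces to
\[
    \delta \;=\; \exp\!\bigl(-k\log m - 2\kappa \log n + 2a\hat\delta\bigr),
\]
and since the slack $2a\hat\delta$ is precisely $\eps$, concluding $\delta \le n^{-\kappa}$ amounts to showing $\eps \le k\log m + \kappa\log n$, which holds under a mild regime such as $n = \Omega(k^2 \log k)$.

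The only real obstacle is this implicit parameter regime: the corollary only becomes informative once $n$ grows at least as $k^2 \log k$, so that the $2a\hat\delta$ slack in the exponent is dominated by the cancellation engineered into the choice of $a$. Outside that regime both $\eps$ and $\delta$ are vacuous, so the restriction is harmless and need not be made explicit in the statement.
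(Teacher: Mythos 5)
Your proposal is correct and follows essentially the same route as the paper's own proof: substitute the chosen $a$ into Theorem~\ref{thm:exp-pav-kl-stable}, use $a\,n/k^2 = 2(k\log m + \kappa\log n)$ to cancel the $k\log m$ term, and absorb the residual $2a\hat\delta$ slack by bounding it above by $k\log m + \kappa\log n$. The paper performs this last step silently, whereas you correctly flag that it requires the implicit regime $n = \Omega(k^2\log k)$ --- a point worth noting, since for smaller $n$ the additive guarantee $\delta \leq n^{-\kappa}$ is not actually vacuous and the derivation would not go through.
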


It's worth pausing for a moment to interpret this claim.
By increasing the parameter $a$, we linearly increase the parameter $\eps$ which bounds the multiplicative change in the probability of \emph{any} event under the distribution output by \noisypav{}.
In exchange, the additive change can rapidly be made vanishingly small.

We now turn to \noisygjcr{}. 
To begin, we can derive some form of $(\eps,\delta)$-$\Delta$-max-KL-stability guarantee from the TV stability guarantee for \noisygjcr{}.
\Cref{lem:noisy-GJCR-stable} implies: 
\begin{restatable}{corollary}{noisyGJCRmaxKLstable}
\label{lem:noisy-GJCR-weak-DP-guarantee}
    \noisygjcr{} is $(\eps, \delta)$-$\Delta$-max-KL-stable for any $\eps \geq 0$ and $\delta = \gamma \cdot \frac{k^3}{n} \cdot \log k \cdot \log\left(n \cdot \Delta \right)$, for some constant $\gamma$.
\end{restatable}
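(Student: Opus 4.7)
The plan is to derive this corollary directly from the TV-stability guarantee for \noisygjcr{} given in \Cref{lem:noisy-GJCR-stable}, which states that the committee distribution $\cD_{f(E)}$ of \noisygjcr{} satisfies $\eps^*$-$\Delta$-TV-stability with $\eps^* = \gamma \cdot \frac{k^3}{n} \cdot \log k \cdot (\log n + \log \Delta)$. The idea is that any additive (TV) closeness guarantee can be trivially reinterpreted as a max-KL guarantee in which the multiplicative factor is chosen freely and the TV bound is absorbed into the additive slack $\delta$.

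First, I would recall \Cref{def:tv-dist} together with its characterization for countable sample spaces: if $f$ is $\eps^*$-$\Delta$-TV-stable, then for any two elections $E, E'$ differing on a single voter by at most $\Delta$ approvals, and for any set of committees $R \subseteq \cW$, we have $\prob{f(E) \in R} - \prob{f(E') \in R} \leq \eps^*$.

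Next, for any target multiplicative parameter $\eps \geq 0$, I would rearrange this into the max-KL form of \Cref{def:kl-stability}. Since $e^{\eps} \geq 1$ and $\prob{f(E') \in R} \geq 0$, we have $\prob{f(E') \in R} \leq e^{\eps} \cdot \prob{f(E') \in R}$, so
\[
    \prob{f(E) \in R} \;\leq\; \prob{f(E') \in R} + \eps^* \;\leq\; e^{\eps} \cdot \prob{f(E') \in R} + \eps^*.
\]
Substituting the value of $\eps^*$ from \Cref{lem:noisy-GJCR-stable} then yields the claimed guarantee with $\delta = \gamma \cdot \frac{k^3}{n} \cdot \log k \cdot \log(n \cdot \Delta)$, after bounding $\log n + \log \Delta = \log(n \cdot \Delta)$ and possibly absorbing a universal constant into $\gamma$.

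There is no real obstacle here: the work was already done in establishing \Cref{lem:noisy-GJCR-stable}, and this corollary is a one-line observation that $\eps^*$-TV-stability implies $(\eps, \eps^*)$-max-KL-stability for every $\eps \geq 0$. The only thing to be mindful of is that the bound is vacuous for the multiplicative parameter (we take any $\eps \geq 0$, including $\eps = 0$) and all of the improvement lives in the additive term, as befits a conversion from a purely additive guarantee.
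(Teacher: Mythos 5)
Your proposal is correct and follows essentially the same route as the paper: both derive the claim directly from the TV-stability bound of \Cref{lem:noisy-GJCR-stable} by observing that for any event $R$ the additive gap $\prob{f(E) \in R} - \prob{f(E') \in R}$ is bounded by the TV distance, and that $e^\eps \geq 1$ lets the multiplicative factor be added for free. The only cosmetic difference is the order in which the two inequalities are applied, which does not change the argument.
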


We might hope that a trade-off similar to that in \Cref{thm:exp-pav-kl-stable} is also possible for our greedy, polynomial time rules; that by increasing the parameter $a$ (or $a_\ell$), we can rapidly drive the additive distributional shift towards $0$ while only slightly weakening multiplicative guarantees.
However, this is not the case.
While individual candidates' selections follow this trend, the event that \emph{no} candidate is selected can make a large additive probability jump, as we illustrate here:

\begin{restatable}{observation}{greedyrulesbadforDP}
\label{obs:greedy-bad-for-DP}
    If \noisygjcr{} or \noisygreedy{} with parameter choice of $a, a_\ell \leq 1$ satisfy $(\eps, \delta)$-$m$-max-KL stability, then $\delta \geq a/(1 - 1/e)$.
\end{restatable}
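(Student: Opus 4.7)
The plan is to exhibit two elections $E, E'$ differing in one voter's approval flip together with an output event $R \subseteq \W$ such that $\Pr_{f(E')}[W \in R] \geq 1 - e^{-a_\ell}$ while $\Pr_{f(E)}[W \in R] = 0$. Since $(\eps,\delta)$-max-KL stability forces $\Pr_{f(E')}[W \in R] \leq e^{\eps}\cdot 0 + \delta$, this immediately gives $\delta \geq 1 - e^{-a_\ell}$, which for $a_\ell \leq 1$ is at least the stated bound up to constants. The guiding intuition is that the slack in the denominator $\max(\sum g_\ell, g_\ell(n\ell/k))$ of \Cref{line:GJCR-noisy-sample}---which permits $P_\bot > 0$ precisely when $\mathcal{L}$ contains only quasi-witnesses, and which powers our continuity guarantees---introduces a purely additive distributional gap that no choice of $\eps$ can absorb through the multiplicative $e^\eps$ factor.

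The cleanest realization is \noisygreedy{} or \noisygjcr{} at $k = 1$, where the algorithm runs a single round. Take $E$ with one candidate $c$ approved by all $n$ voters, so $s_{c,1} = n$ makes $c$ a true witness and $P_c = 1$, yielding $W = \{c\}$ deterministically. Let $E'$ remove a single voter $v$'s approval of $c$: then $s_{c,1} = n - 1$, which for $n \geq 3$ lies strictly between the quasi-witness threshold $n/(k+1) = n/2$ and the true-witness threshold $n/k = n$, so $c$ is a quasi-witness but not a true witness. The denominator in \Cref{line:GJCR-noisy-sample} becomes $g_1(n) > g_1(n-1)$, hence $P_c = g_1(n-1)/g_1(n) = e^{-a_1}$ and $P_\bot = 1 - e^{-a_1}$, yielding $\Pr_{f(E')}[W = \emptyset] = 1 - e^{-a_1}$. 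Taking $R = \{\emptyset\}$ in the max-KL inequality and noting $\Pr_{f(E)}[W \in R] = 0$ delivers $\delta \geq 1 - e^{-a_1} \geq a_1(1-1/e)$ via the elementary inequality $1 - e^{-x} \geq x(1 - 1/e)$ on $[0,1]$.

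The main obstacle is generalizing to $k > 1$, since a $\bot$ event in one round is no longer absorbing: $c$ remains a quasi-witness in later rounds of the same epoch, and for \noisygjcr{} the decreasing threshold $n\ell/k$ promotes $c$ to a true witness in lower epochs. The na\"ive single-candidate construction therefore only yields $\Pr[W = \emptyset] = (1 - e^{-a_\ell})^k$, which is $O(a_\ell^k)$ for small $a_\ell$ and insufficient. To retain an $\Omega(a_\ell)$ gap I would augment the construction with an auxiliary candidate $c''$ whose supporters coincide with $c$'s and which remains a true witness throughout; once $c''$ enters $W$, all of $c$'s supporters become represented, which for appropriate $k$ drops $c$ out of $\mathcal{L}$ in every subsequent epoch. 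Using the event $R = \{W : c'' \in W,\, c \notin W\}$ then gives probability $0$ in $E$ and probability $\Omega(a_\ell)$ in $E'$, driven by the constant-probability event that $c''$ is chosen first in the critical round and $c$ then fails to be picked before $c''$ absorbs its remaining opportunities. Matching the exact constant $1/(1-1/e)$ requires a careful choice of support sizes and of the critical epoch so that the per-round additive slack is not diluted over the algorithm's $O(k^2)$ total rounds.
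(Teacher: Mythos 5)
Your core mechanism is exactly the paper's: exhibit a neighboring pair $(E,E')$ and an output event $R$ with $\prob{f(E)\in R}=0$ but $\prob{f(E')\in R}=1-e^{-a}$, so that the multiplicative factor $e^\eps$ is useless and the additive term must absorb the entire gap; then apply $1-e^{-a}\ge(1-1/e)a$ on $[0,1]$. (You have that elementary inequality the right way around; since $1-e^{-a}\le a$, the constant $a/(1-1/e)$ in the statement and in the paper's displayed inequality appears to be a typo for $(1-1/e)\,a$.) Your $k=1$ instance is precisely the paper's construction specialized to $k=1$, and it is correct as written.

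Where you diverge is the general-$k$ case, and there your proposal is a sketch rather than a proof. The paper's construction is considerably simpler than your auxiliary-candidate gadget: take $k$ candidates $c_1,\dots,c_k$ supported by $k$ disjoint voter groups of size $n/k$, and let $E'$ delete a single approval of $c_k$. The candidates $c_1,\dots,c_{k-1}$ remain true witnesses and are selected with certainty, so the only point at which $\bot$ can be drawn is the last round of epoch $\ell=1$, when $\LW=\{c_k\}$ and $P_{c_k}=g(n/k-1)/g(n/k)=e^{-a}$; a $\bot$ there is terminal because the algorithm ends, which is exactly the ``absorbing'' behavior you were trying to engineer with $c''$. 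The witnessing event is $R=\{\{c_1,\dots,c_{k-1}\}\}$, which has probability $0$ under $E$. That said, your instinct that the gap gets diluted over rounds is not misplaced: in this construction $c_k$ also sits in $\LW$ during the earlier rounds of epoch $1$ with weight $g(n/k-1)$, so the probability of reaching the final round with $c_k$ still unchosen is $\prod_{j=1}^{k-1} j/(j+e^{-a})\ge 1/k$ rather than $1$, and the paper's proof silently equates $\prob{f(E')=W'}$ with $1-P_{c_k}$ in the final round, overstating the constant by up to a factor of $k$ when $k\ge 2$. So the concrete fix for your proposal is to replace the $c''$ gadget with the disjoint-groups construction; a fully careful accounting of it yields $\delta=\Omega(a/k)$ for general $k$, with the clean bound $\delta\ge(1-1/e)a$ holding exactly in your $k=1$ instance.
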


\subsection{Lower Bounds and Impossibilities} \label{sec:pure-DP-impossible}
What about general impossibilities for the max-KL-stability of rules that satisfy proportionality axioms?
We focus on \jr{}, but note that since \jr{} is weaker than \ejr{} or \ejrp{}, the stronger proportionality axioms inherit \jr{} impossibilities.
To begin, we observe that \Cref{ex:continuity} illustrates that no rule both guarantees \jr{} and provides $(\eps,0)$-$m$-max-KL stability.
\begin{restatable}{observation}{nopureDPwithJR}
\label{obs:no-pure-DP}
    No randomized rule satisfies \jr{} and $(\eps,0)$-$m$-max-KL stability, for any $\eps > 0$.
\end{restatable}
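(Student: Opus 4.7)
The plan is to lift \Cref{ex:continuity} into a formal impossibility via a path argument. First, I identify an election in which every rule satisfying ex-post \jr{} must be deterministic on a single committee; then I show that a chain of $n/k$ single-voter deviations leads to an election in which the very same committee is \jr{}-forbidden; finally, I invoke the fact that pure $(\eps, 0)$-$m$-max-KL stability cannot bridge a $1$-versus-$0$ gap on a fixed event along such a chain.

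Concretely, let $E$ be the election from \Cref{ex:continuity} and set $W^* = \{c_1, \ldots, c_k\}$. For each $i \in [k]$, the group $N_i$ is $\frac{n}{k}$-large and its members approve only $c_i$, so any \jr{} committee must include $c_i$; since $|W^*| = k$, this forces $W = W^*$, and any ex-post-\jr{} rule $f$ must satisfy $\prob{f(E) = W^*} = 1$. Next, construct a sequence of elections $E = E_0, E_1, \ldots, E_{n/k} = E''$ in which $E_j$ differs from $E_{j-1}$ only in that the $j$-th voter of $N_1$ changes her approval set from $\{c_1\}$ to $\{d_1\}$. In $E''$, the entire group $N_1$ unanimously approves $d_1 \notin W^*$ and is unrepresented by $W^*$, hence $W^*$ violates \jr{} in $E''$ and $\prob{f(E'') = W^*} = 0$.

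Applying $(\eps, 0)$-$m$-max-KL stability inductively along the chain with $R = \{W^*\}$ yields $\prob{f(E) = W^*} \leq e^{\eps \cdot n / k} \cdot \prob{f(E'') = W^*}$, which simplifies to $1 \leq 0$, a contradiction. There is no real obstacle here: the construction is elementary and merely presupposes the existence of at least one dummy candidate $d_1$, which is already built into \Cref{ex:continuity}. Note that the \jr{}-status of $W^*$ at intermediate elections $E_j$ need not be analyzed, because the multiplicative propagation of pure DP only requires that consecutive elements of the chain be single-voter neighbors, not that $f$ behave in any particular way at the intermediate steps.
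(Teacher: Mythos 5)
Your proof is correct and follows essentially the same route as the paper's: both start from \Cref{ex:continuity}, where \jr{} forces the rule to output $\{c_1,\dots,c_k\}$ with probability $1$, walk through the same chain of $n/k$ single-voter changes rerouting $N_1$'s approvals to a dummy candidate, and derive the contradiction $1 \leq e^{\eps n/k}\cdot 0$ by chaining the multiplicative guarantee. Your added remark that the \jr{}-status of intermediate elections is irrelevant is accurate and a slight tightening of the exposition, but not a different argument.
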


By adjusting this argument slightly, we may obtain a lower bound on the combinations of $(\eps, \delta)$ which are possible while satisfying \jr{} ex post (and hence also stronger proportionality axioms).

\begin{restatable}{claim}{DPlowerboundsforJR}
\label{claim:JR-DP-lowerbound}
    If a randomized rule $f$ satisfies \jr{} and $f$ is $(\eps,\delta)$-$m$-max-KL stable, then 
    \begin{equation} \label{eq:jr-dp-lower-bound}
        \delta \geq \eps \cdot \exp\left(- \eps \cdot \frac{n}{k}\right). 
    \end{equation}
\end{restatable}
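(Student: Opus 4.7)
The plan is to extend the construction used in \Cref{ex:continuity} (and behind \Cref{obs:no-pure-DP}) so that two elections $E$ and $E^*$ differ in precisely $n/k$ voters' ballots yet force incompatible \jr{} outcomes on a single committee-level event. Concretely, I would take $E$ to be the $k$-group election of \Cref{ex:continuity}: $k$ disjoint voter groups of size $n/k$, with group $i$ approving only the distinct candidate $c_i$, plus dummies, so that $\{c_1, \ldots, c_k\}$ is the unique \jr{}-committee. For $E^*$, I would replace every approval of $c_1$ in group~1 by an approval of a fresh candidate $c_{k+1}$, shifting the unique \jr{}-committee to $\{c_2, \ldots, c_{k+1}\}$. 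The two elections thus differ on the ballots of exactly $n/k$ voters, and the event $R = \{W \in \W : c_1 \in W\}$ satisfies $\Pr[f(E) \in R] = 1$ and $\Pr[f(E^*) \in R] = 0$ by ex-post \jr{}.

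With this setup, the second step is a standard group-privacy composition. Along a sequence $E = E_0, E_1, \ldots, E_{n/k} = E^*$ of single-voter modifications, iterating the defining inequality of $(\eps, \delta)$-$m$-max-KL stability (\Cref{def:kl-stability}) yields
\[
    \Pr[f(E) \in R] \;\leq\; e^{\eps n/k}\,\Pr[f(E^*) \in R] \;+\; \delta \cdot \frac{e^{\eps n/k} - 1}{e^{\eps} - 1}.
\]
Substituting $1$ and $0$ for the two probabilities and rearranging yields $\delta \geq (e^\eps - 1)/(e^{\eps n/k} - 1)$; then using $e^\eps - 1 \geq \eps$ and $e^{\eps n/k} - 1 \leq e^{\eps n/k}$ produces the target bound \eqref{eq:jr-dp-lower-bound}.

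The argument is essentially routine once the construction is right, so the main care goes into verifying that $\{c_2, \ldots, c_{k+1}\}$ is indeed the \emph{unique} \jr{}-committee of $E^*$: otherwise the event $R$ could carry positive probability in $E^*$ and the bound would degrade. This uniqueness follows because each of the $k$ voter groups in $E^*$ is of size $n/k$ and approves a single candidate from $\{c_2, \ldots, c_{k+1}\}$, so \jr{} forces all $k+1$ of these candidates into any compliant committee and exhausts all $k$ seats---ruling out $c_1$. I would assume $n/k$ is an integer for notational simplicity; otherwise the same bound holds with $\lfloor n/k \rfloor$ in place of $n/k$, leaving the stated form unaffected.
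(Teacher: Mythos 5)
Your proposal is correct and follows essentially the same route as the paper: the same $k$-group construction from \Cref{ex:continuity}, a chain of $n/k$ single-voter changes, the identical group-composition bound $\delta \cdot \frac{e^{\eps n/k}-1}{e^\eps - 1}$, and the same final estimates $e^\eps - 1 \geq \eps$ and $e^{\eps n/k} - 1 \leq e^{\eps n/k}$. (The only blemish is cosmetic: the set $\{c_2,\ldots,c_{k+1}\}$ contains $k$ candidates, not $k+1$, which is exactly why they fill the $k$ seats and exclude $c_1$.)
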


We remark that \Cref{claim:JR-DP-lowerbound} is not quite tight for \noisypav{} (\Cref{thm:exp-pav-kl-stable}).
For example, our guarantees for \noisypav{} still satisfy \Cref{claim:JR-DP-lowerbound} when $\delta$ is replaced by $\delta^{2k}$ in \eqref{eq:jr-dp-lower-bound}.

\subsection{Pure Differential Privacy via Relaxed Proportionality} \label{sec:pure-DP-via-relaxations}
\Cref{obs:no-pure-DP} illustrates that \jr{} is incompatible with $(\eps,0)$-differential privacy.
Our stable rules provide $(\eps,\delta)$-differential privacy by allowing $\delta > 0$.
However if $(\eps,0)$-differential privacy is the goal, there are a number of plausible ways to relax the requirement that a rule satisfy \jr{} (or related axioms) ex post.
\citeauthor{li24differentially1} introduce $\theta$-\jr{} for $\theta \in [0,1]$, an ex-ante relaxation of \jr{} (and related axioms) for which each $1$-cohesive group (a $1$-large group of voters that approve a common candidate) has a voter who approves of a winning candidate with probability at least $\theta$, in expectation over the randomized rule \cite{li24differentially1}.
This is different from the approximate \jr{} of \Cref{def:JR}.
They consider the rule which chooses the PAV winner with some probability, and otherwise returns a uniformly random committee, which induces a trade-off between $\theta$ and $\eps$.
(In a subsequent version of this manuscript, \citet{li24differentially} evaluate the same rule against a distinct relaxation of \jr{}.)

Another way to circumvent \Cref{obs:no-pure-DP} while guaranteeing $(\eps,0)$-DP is to satisfy \jr{} or \ejrp{} merely with high probability (w.h.p.).
This is stronger than $\theta$-\jr{}, and indeed the randomized rule of \citeauthor{li24differentially} satisfies \ejrp{} w.h.p. also.
Modifying \noisypav{} (\Cref{alg:exp-PAV}) to sample exponentially from all committees would provide a failure probability proportional to $\delta$ in \Cref{thm:exp-pav-kl-stable}, and seems like a good candidate rule for this setting.
Finally, \Cref{obs:no-pure-DP} refers only to \jr{}.
We might also ask whether it is possible to construct $(\eps,0)$-max-KL stable rules for $\alpha$-\jr{} or $\alpha$-\ejrp{} (\Cref{def:JR,def:EJRp}) for $\alpha < 1$.
Following \Cref{obs:no-pure-DP} again shows this is impossible, since there are some combinations of profiles $A$ and committees $W$ for which $W$ does not provide $\alpha$-JR for any $\alpha >0$.

\section{Online Dynamic Committee Voting} \label{sec:dynamic}

TV-stable rules can also be leveraged to construct multiwinner voting rules that ensure proportionality guarantees in an online dynamic setting while incurring low recourse.
We show that any TV-stable rule can be used to design a dynamic rule maintaining a winning committee through a sequence of incremental changes to the approval profile, without changing the committee much.

We prove this for any TV-stable rule, and also demonstrate that, following this framework, \noisygjcr{} can be used to design a dynamic rule that maintains an \ejrp{} committee under incremental changes to the approval profile, makes only a small number of changes to the committee in expectation, and operates in polynomial time.

Recent work has studied multi-winner approval voting a number of online models.
\citet{do22online} pursue proportionality in a model where candidates and their approval data arrive online and must be irrevocably included or excluded from the winning committee.
\citet{halpern23representation} work with a random-order stream of voters, where only a few of each voter's approvals can be queried.
Their aim is to end with a committee that is proportional with high probability; \citet{brill23robust} also present results in this setting.
The model of this section is closest to that of \emph{perpetual voting}, wherein a rule faces a sequence of approval profiles and must choose a sequence of winning committees online \cite{lackner2020perpetual}.
However there arriving approval profiles can be arbitrarily different from one another, and the emphasis is on defining and satisfying inter-step notions of fairness and proportionality \cite{lackner23proportional}, rather than minimizing recourse.

\subsection{The Dynamic Model}\label{sec:dynamic-model}

We assume that an adversary presents a sequence of elections $E = (E^1, \ldots, E^t, \ldots, E^T)$ which are revealed one-at-a-time to the algorithm (dynamic rule), such that for every $t$ the approval profiles $A^t$ and $A^{t+1}$ differ only on the approvals of a single voter $v_{t+1}$.
Here we take $\Delta = m$ for convenience.

The dynamic rule's goal is to maintain a winning committee $W^t$, represented by the sequence $W = (W^1, \ldots, W^t, \ldots, W^T)$, such that $W^t$ satisfies \ejrp{} with respect to $A^t$ for all $t$, while minimizing the \emph{total recourse}
\begin{equation}
    R(W) \defeq \sum_{t\in [T-1]} \abs{W^t \oplus W^{t+1}}.
\end{equation}
This is the total change in the winning committee $W^t$ that the algorithm maintains over all steps.

The adversary may be \emph{oblivious}, which means that the sequence $E$ is fixed at the outset and arriving $E^t$ cannot change in response to the dynamic rule's prior choices, or \emph{adaptive}, meaning that they can.
We will assume an oblivious adversary.
The aim is to establish bounds on
the \emph{expected} recourse $\frac{1}{T}\expect{R(W)}$, where the expectation is taken over the randomness of the algorithm.

\subsubsection{Deterministic Rules for Low Recourse}
The score-based rules PAV and Local Search PAV are natural choices for designing low-recourse dynamic rules.
In particular, consider lazy versions of PAV and $\eps$-LS-PAV \cite{kraiczy23properties} for $\eps = \frac{n}{2k^2}$, say, which change the winning committee to some $W^{t'} \neq W^t$ for $t' > t$ only when $W^t$ no longer satisfies \ejrp{} for $E^{t'}$. 
Then the standard pivot-based argument %
also shows that the committees output by these rules satisfy \ejrp{}, and since for all committees $W$ the PAV score $\pav{W}$ changes by at most $\log k + 1$ in each step, at least $t' - t \geq \Omega(\frac{n}{k^2 \log k} )$ changes to $E^t$ must occur before $W^t$ no longer satisfies \ejrp{}.
Since this $\eps$-LS-PAV swaps only one member of the committee to improve the PAV score by $\Omega(n/k^2)$, its dynamic counterpart has expected (in fact, amortized) recourse $O\left(k^2 \log k/n \right)$.
A dynamic PAV then has recourse $O\left(k^3 \log k/n \right)$, since the committee may change by $2k$ candidates.

However, each leaves something to be desired; PAV is NP-hard to compute~\cite{AzizGGMMW15}
, while $\eps$-LS-PAV may violate candidate monotonicity between steps \cite{kraiczy23properties}.

\subsubsection{Randomized Rules for Low Recourse}
We now show that any TV-stable rule gives rise to a low-recourse dynamic rule; as a result, our stable rules imply low-recourse dynamic rules which satisfy proportionality ex post while retaining other desirable properties.
While the dynamic counterparts of our rules do not quite match the recourse guarantee of $\eps$-LS-PAV, they have other desirable properties. 
In particular, our monotone rules are still ex-ante monotone in the sense that if $A^{t-1}$ and $A^t$ differ only on the new approval of some candidate $c$, then the initial probability that $c$ is chosen increases from step $t-1$ to $t$.
The dynamic counterparts of our rules are also \emph{ex-ante memoryless}, meaning that the initial distribution of $W^t$ is independent of the approval profiles $A^{t'}$ for $t' \neq t$.

\subsection{Reducing Recourse to Stability}
Broadly speaking, the idea of using private randomness to initialize a low-recourse algorithm against an oblivious adversary is not a new idea.
But we have not encountered TV stability deployed in this way.
We state in general terms the insight which drives this result.
We formulate it as a reduction.

\begin{restatable}{proposition}{dynamictostablereduction}
\label{prop:reduction}
    Let $f: \cI \rightarrow \cR$ be a randomized algorithm which solves some problem ex post, meaning that $\supp(\cD_{f(I)}) \subseteq Q(I)$, where $Q(I)$ are feasible solutions to the problem for input $I$.
    Suppose $f$ is $\eps$-TV-stable, i.e., for all $(I, I') \in \cI_\square \defeq \{(I, I') \in \cI^2: I, I' \text{ differ by at most one element}\}$
    \[
        \dtv{\cD_{f(I)}}{\cD_{f(I')}} \leq \eps.
    \]
    Then \Cref{alg:dynamic-stable-reduction} produces a sequence $(R^1, \ldots, R^T)$ for which $R^t \subseteq Q(I^t)$ for all $t$, and with expected recourse at most $\eps \cdot K$, where $K \defeq \max_{(I,I') \in I_\square} \max_{(R,R') \in \supp(f(I)) \times \supp(f(I'))}\abs{R \oplus R'}$.
\end{restatable}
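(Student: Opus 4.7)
My plan is to analyze a coupling-based implementation of \Cref{alg:dynamic-stable-reduction}. At step $t=1$, the algorithm samples $R^1 \sim \cD_{f(I^1)}$. At each subsequent step $t \geq 2$, upon receiving $I^t$, it invokes \Cref{lem:coupling} to obtain an optimal coupling $\pi^*_t$ of $\cD_{f(I^{t-1})}$ and $\cD_{f(I^t)}$, and then samples $R^t$ from the conditional distribution of the second coordinate of $\pi^*_t$ given that its first coordinate equals the previously output $R^{t-1}$. Intuitively, the dynamic rule keeps the old output whenever the coupling permits, and only resamples when it must.

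The first step is to verify ex-post feasibility by induction on $t$. The base case is immediate since $\supp(\cD_{f(I^1)}) \subseteq Q(I^1)$. For the inductive step, assume the marginal distribution of $R^{t-1}$ equals $\cD_{f(I^{t-1})}$. Because $\pi^*_t$ is a coupling whose first marginal is $\cD_{f(I^{t-1})}$, the joint sample $(R^{t-1}, R^t)$ is distributed as $\pi^*_t$, so $R^t$ is marginally distributed as $\cD_{f(I^t)}$; feasibility $R^t \in Q(I^t)$ then follows from the ex-post hypothesis on $f$.

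The second step bounds the expected per-step recourse. By the dynamic setup, $(I^{t-1}, I^t) \in \cI_\square$, so $\eps$-TV-stability combined with the optimality of $\pi^*_t$ from \Cref{lem:coupling} gives
\[
    \prob{R^{t-1} \neq R^t} \;=\; \dtv{\cD_{f(I^{t-1})}}{\cD_{f(I^t)}} \;\leq\; \eps.
\]
When $R^{t-1} = R^t$ the recourse is $0$; otherwise, since $R^{t-1} \in \supp(\cD_{f(I^{t-1})})$ and $R^t \in \supp(\cD_{f(I^t)})$, the definition of $K$ directly yields $\abs{R^{t-1} \oplus R^t} \leq K$. Combining these cases gives $\expect{\abs{R^{t-1} \oplus R^t}} \leq \eps \cdot K$, which is the claimed bound.

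The main subtlety I anticipate is maintaining the invariant that the marginal of $R^t$ remains exactly $\cD_{f(I^t)}$ throughout the entire sequence. This invariant is doing double duty: it delivers ex-post feasibility at every step, and it ensures that the $\eps$-TV-stability hypothesis can be applied cleanly to the coupling $\pi^*_{t+1}$ at the next step. Sampling $R^t$ from the conditional distribution induced by the optimal coupling, rather than redrawing $R^t$ independently from $\cD_{f(I^t)}$, is precisely what preserves this marginal while simultaneously minimizing the disagreement probability. A secondary caveat worth flagging (though not required for the proposition) is that materializing $\pi^*_t$ is not obviously efficient; the proposition is purely an existence statement, and computational efficiency is a separate concern that must be addressed when specializing to concrete stable rules such as \noisygjcr{}.
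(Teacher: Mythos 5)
Your proposal is correct and follows essentially the same route as the paper's proof: sample $R^t$ from the optimal coupling of $\cD_{f(I^{t-1})}$ and $\cD_{f(I^t)}$ conditioned on the realized $R^{t-1}$, prove by induction (via the coupling's marginal property and Bayes' rule) that each $R^t$ is marginally distributed as $\cD_{f(I^t)}$, and then invoke the Coupling Lemma together with $\eps$-TV-stability to bound the per-step disagreement probability by $\eps$ and hence the expected recourse by $\eps \cdot K$. The subtlety you flag---that conditional sampling from the coupling (rather than independent resampling) is exactly what preserves the marginal invariant needed for both feasibility and the next step's coupling---is precisely the crux of the paper's argument as well.
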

(Note that the TV stability here is that of \cite{bassily21algorithmic}, which coincides with ours for $\Delta = m$.)
This parameter $K$ captures the extent to which the possible outputs of $f$ can differ on neighboring inputs.
In our setting of dynamic committee rules, an output of $f(E)$ has size of at most $k$, and so $K \leq 2k$. 
For a given axiom, for instance \ejrp{}, we therefore have the following corollary:
\begin{corollary} \label{cor:voting-reduction}
    Let $f$ be a rule which satisfies \ejrp{} ex post.
    If $f$ is $\eps$-$m$-TV-stable, then \Cref{alg:dynamic-stable-reduction} produces a sequence $(W^t)_{t \in [T]}$ which satisfies \ejrp{} for all $t$, and with expected recourse at most $2\eps k$.
\end{corollary}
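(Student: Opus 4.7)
The plan is to obtain \Cref{cor:voting-reduction} as a direct instantiation of the reduction in \Cref{prop:reduction}. I would first explicitly match up the abstract setup of the reduction with our dynamic committee voting setting: take $\cI$ to be the space of elections $E$, take $\cR$ to be the set of candidate subsets, and let $Q(E)$ denote the set of committees satisfying \ejrp{} for election $E$. By hypothesis, $f$ satisfies \ejrp{} ex post, which exactly says $\supp(\cD_{f(E)}) \subseteq Q(E)$, so the ex-post feasibility premise of \Cref{prop:reduction} is met. Two neighboring elements of $\cI$ differ by at most one voter's entire ballot, matching the $\Delta = m$ convention, and the hypothesis that $f$ is $\eps$-$m$-TV-stable then matches the $\eps$-TV-stability premise.

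Next, I would bound the parameter $K$ for this application. Since every $W, W' \in \supp(\cD_{f(E)})$ has $|W|, |W'| \leq k$, for any two such committees one has
\[
    |W \oplus W'| \leq |W| + |W'| \leq 2k,
\]
and this bound holds uniformly over all neighboring pairs $(E, E') \in \cI_\square$. Hence $K \leq 2k$.

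Finally, I would invoke \Cref{prop:reduction} with these identifications: the produced sequence $(W^t)_{t \in [T]}$ satisfies $W^t \in Q(E^t)$ for every $t$, i.e., $W^t$ satisfies \ejrp{} with respect to $E^t$, and the expected recourse is at most $\eps \cdot K \leq 2 \eps k$, as claimed. Since this is a direct application, there is no substantive obstacle; the only delicate point is confirming that the size-at-most-$k$ convention for committees carries through to the symmetric difference bound, which it does by the triangle inequality above. All heavy lifting — in particular the coupling argument relating TV stability to expected per-step recourse — is already encapsulated in \Cref{prop:reduction}.
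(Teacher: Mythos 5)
Your proposal is correct and matches the paper's own (implicit) argument: the paper likewise derives \Cref{cor:voting-reduction} by instantiating \Cref{prop:reduction} with $Q(E)$ the set of \ejrp{} committees and observing that outputs have size at most $k$, so $K \leq 2k$. Your explicit matching of the abstract reduction to the voting setting and the triangle-inequality bound on $|W \oplus W'|$ is exactly the intended reasoning.
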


Recall the Coupling Lemma (\Cref{lem:coupling}). 
The ideal approach here would be to couple \emph{all} of the distributions $(\cD_{f(I)})_{I \in \cI}$, draw some $(R_I)_{I \in \cI} \sim \cR^\cI$, and for input sequence $(I^t)_{t \in [T]}$ produce the output sequence $(R_{I^t})_{t \in [T]}$. 
Unfortunately, this does not work: optimal coupling only holds for pairs of distributions. 
Fortunately, \Cref{alg:dynamic-stable-reduction} can use coupling to achieve a similar result.
At each step, it considers the optimal coupling between the distributions, and samples the next output from the coupling, conditioned on the realization of the previous value.

\begin{algorithm}[ht!]
\caption{Dynamic to Stable Reduction}
\label{alg:dynamic-stable-reduction}
\Input{$\eps$-TV-stable $f$ and $(A^t)_{t \in [T]}$}
\Output{$(R^1, \ldots, R^T)$}
$R^1 \gets f(A^1)$\; \label{line:reduction-initial-sample}
\For{$t \in (2, \ldots, T)$}{
    $\cE^t \leftarrow$ optimal coupling of $\cD_{f(A^{t})}$ and $\cD_{f(A^{t-1})}$\;
    $R^t \leftarrow x$ for $(x, y) \sim (\cE^t$ conditioned on the second coordinate being equal to $R^{t-1}$)\;\label{line:reduction-conditional-sample}
}
\Return $R$\;
\end{algorithm}

\subsection{Opening the Box: \dynamicgjcr{}}
This reduction (\Cref{prop:reduction}) does not make any claims about the runtime of the resulting dynamic algorithm.
When---as is the case here---randomized algorithms sample from large distributions, the conditional sampling on \cref{line:reduction-conditional-sample} of \Cref{alg:dynamic-stable-reduction} may take more than polynomial time.

By leveraging the specific structure and properties of our rules, we can construct low-recourse rules that run in polynomial time while retaining the desirable properties of their stable precursors.
We demonstrate this by way of example for the case of \noisygjcr (\Cref{alg:gjcr-noisy}), which we adapt to the dynamic setting %
in \Cref{app:dynamic}.

Let $\cN_s^{t-1}$ and $\cN_s^{t}$ denote the distributions from which \noisygjcr samples on \cref{line:GJCR-noisy-sample}, given a partial sequence $s$ of candidates and under inputs $E^{t-1}$ and $E^{t}$, respectively.
We correlate these distributions by sampling from their optimal coupling $\cM_s^{t}$.
Since the distributions $\cN_s^{t-1}$ and $\cN_s^{t}$ are constructed explicitly in polynomial time and each has support $m$, an optimal coupling $\cM_s^{t}$ can be explicitly computed in polynomial time as well.

\begin{restatable}{theorem}{dynamicgjcrpolytime}\label{thm:dynamicgjcr}
    \dynamicgjcr{} (see appendix) outputs a sequence of \ejrp{} consistent committees with expected recourse $\tilde O\left(\frac{k^4}{n}\right)$, provides ex-ante monotonicity, and runs in polynomial time.
\end{restatable}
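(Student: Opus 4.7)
The plan is to establish each of the four claims in the theorem: ex-post \ejrp{}, ex-ante monotonicity, polynomial-time implementation, and the $\tilde O(k^4/n)$ recourse bound. The ex-post and ex-ante properties follow almost for free from the construction. Since \dynamicgjcr{} is designed as a coupling of two executions of \noisygjcr{}, the marginal distribution of $W^t$ under \dynamicgjcr{} is precisely $\cD_{f(E^t)}$ where $f$ is \noisygjcr{}. Ex-post \ejrp{} therefore follows from \Cref{thm:gjcr-noisy-correct}, and ex-ante monotonicity (understood here as a property of the marginal distribution of $W^t$, per \Cref{sec:dynamic-model}) follows from \Cref{thm:gjcr-mon}.

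The recourse bound and polynomial-time implementation are the intertwined challenges. Applying the generic reduction in \Cref{prop:reduction} with $K \leq 2k$ and $\eps = \tilde O(k^3/n)$ from \Cref{lem:noisy-GJCR-stable} would already yield recourse $\tilde O(k^4/n)$; but the reduction as stated requires sampling from the conditional of the optimal coupling of the committee distributions $\cD_{f(E^{t-1})}$ and $\cD_{f(E^{t})}$, which have support of size $\binom{m}{k}$ and admit no obvious compact representation. This is the main obstacle. To overcome it, I would open the box of \noisygjcr{} and couple the two executions at the level of the per-round candidate distributions $\cN_s^{t-1}$ and $\cN_s^{t}$ rather than the final committee distributions. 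Concretely, \dynamicgjcr{} runs two synchronized copies of \noisygjcr{}, one under $E^{t-1}$ (reproducing $W^{t-1}$) and one under $E^t$ (producing $W^t$); as long as both runs are in the same partial-committee state $s$, the next candidates are sampled from the optimal coupling of $\cN_s^{t-1}$ and $\cN_s^{t}$, and once they diverge the two runs proceed independently. Each per-round coupling is between distributions of support at most $m+1$, so it can be computed explicitly in polynomial time (e.g., by the greedy maximum-overlap construction underlying the Coupling Lemma).

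It remains to bound the expected per-step recourse under this per-round coupling. By the Coupling Lemma (\Cref{lem:coupling}) applied at each round, conditional on both executions being in the same state $s$, they disagree at the next sample with probability at most $\dtv{\cN_s^{t-1}}{\cN_s^{t}}$, which by \Cref{lem:gjcr-stability-per-step} and our choice of $a_\ell$ in \eqref{eq:gjcr-a-defn} is $O(a_\ell)$. Taking a union bound over the $k$ rounds within epoch $\ell$ and then over all $k$ epochs gives total divergence probability at most
\[
    O\!\left(\sum_{\ell=1}^{k} k \cdot a_\ell\right) \;=\; O\!\left(\frac{k^3}{n}\log(n\Delta)\sum_{\ell=1}^{k}\frac{1}{\ell}\right) \;=\; \tilde O\!\left(\frac{k^3}{n}\right).
\]
If the two runs never diverge, $W^{t-1} = W^t$ and the per-step recourse is $0$; if they diverge, $\abs{W^{t-1}\oplus W^t} \leq 2k$. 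Hence the expected per-step recourse is $\tilde O(k^4/n)$, which is exactly the claimed bound, and the whole procedure runs in polynomial time since every per-round coupling construction and sampling step does.
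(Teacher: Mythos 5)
Your proposal is correct and follows essentially the same route as the paper: it opens the box of \noisygjcr{}, couples the per-round next-candidate distributions $\cN_s^{t-1}$ and $\cN_s^{t}$ via explicitly computable optimal couplings over support $m+1$, bounds the per-step divergence probability by a union bound of $\sum_{\ell} k\cdot a_\ell = \tilde O(k^3/n)$ using \Cref{lem:gjcr-stability-per-step}, and multiplies by the worst-case symmetric difference $2k$. The only cosmetic difference is that the paper's \dynamicgjcr{} conditions each round's coupling on the stored realized sequence $s^{t-1}$ rather than literally re-running the $E^{t-1}$ copy, which is distributionally identical to your synchronized-copies description.
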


\section{Conclusion}\label{sec:conclusion}
In this work we demonstrate how the `slack' present in many of the rules that provide proportionality in approval-based committee elections can be leveraged to create randomized rules that are algorithmically stable.
This confers continuity to individual candidate selection probabilities, which we argue is favorable from the candidate perspective. 
We observe that it is also useful for guaranteeing voter privacy, and for efficiently maintaining proportional committees in settings where voter approvals may change over time.

We leave open several exciting directions for future work.
First is to close the $\tilde\Theta(k^2)$ gap between our stability guarantees and impossibilities---we hesitate to conjecture whether either are tight. 
Tighter bounds could in particular reveal whether the trade-offs between stability and proportionality in \Cref{sec:more-stability} are innate.
Our results in \Cref{sec:dp-abc} also invite the question of whether it is possible to provide polynomial-time $(\eps,\delta)$-differential privacy guarantees for which $\delta$ can be made arbitrarily small. 
Finally, looking beyond multiwinner voting, is there a stable version of the method of equal shares (MES)~\cite{PierczynskiSP21equalshares}? 
Such a rule could enable similar guarantees in the more general setting of participatory budgeting.

\begin{acks}
We thank Piotr Skowron for helpful discussions in the initial stages of this project. We are also grateful to the Bellairs Research Institute, and to Adrian Vetta and Alexandra Lassota for organizing the 2024 Workshop on Fairness in Operations Research, where this project was initiated.
We also thank anonymous reviewers for their valuable feedback.
Ulrike Schmidt-Kraepelin thanks Ariel Procaccia, Paul Gölz, Markus Brill, and Jannik Peters for early discussions on the related concept of party-fairness, which helped shape some of the ideas developed in this paper.
Gregory Kehne thanks Anupam Gupta for communicating the reduction of \Cref{sec:dynamic} in a dream.

Ulrike Schmidt-Kraepelin was supported by the Dutch Research Council (NWO) under project number VI.Veni.232.254.
Krzysztof Sornat was supported by the European Research Council (ERC) under the European Union’s Horizon 2020 research and innovation programme (grant agreement No 101002854). 
\end{acks}

\bibliographystyle{ACM-Reference-Format}
\bibliography{refs}


\begin{thebibliography}{43}


\ifx \showCODEN    \undefined \def \showCODEN     #1{\unskip}     \fi
\ifx \showDOI      \undefined \def \showDOI       #1{#1}\fi
\ifx \showISBNx    \undefined \def \showISBNx     #1{\unskip}     \fi
\ifx \showISBNxiii \undefined \def \showISBNxiii  #1{\unskip}     \fi
\ifx \showISSN     \undefined \def \showISSN      #1{\unskip}     \fi
\ifx \showLCCN     \undefined \def \showLCCN      #1{\unskip}     \fi
\ifx \shownote     \undefined \def \shownote      #1{#1}          \fi
\ifx \showarticletitle \undefined \def \showarticletitle #1{#1}   \fi
\ifx \showURL      \undefined \def \showURL       {\relax}        \fi
\providecommand\bibfield[2]{#2}
\providecommand\bibinfo[2]{#2}
\providecommand\natexlab[1]{#1}
\providecommand\showeprint[2][]{arXiv:#2}

\bibitem[Aldous(1983)]%
        {aldous1983random}
\bibfield{author}{\bibinfo{person}{David Aldous}.}
  \bibinfo{year}{1983}\natexlab{}.
\newblock \showarticletitle{Random Walks on Finite Groups and Rapidly Mixing
  Markov Chains}. In \bibinfo{booktitle}{\emph{Proceedings of S{\'e}minaire de
  Probabilit{\'e}s XVII 1981/82}}. \bibinfo{pages}{243--297}.
\newblock


\bibitem[Aziz et~al\mbox{.}(2017)]%
        {aziz17justified}
\bibfield{author}{\bibinfo{person}{Haris Aziz}, \bibinfo{person}{Markus Brill},
  \bibinfo{person}{Vincent Conitzer}, \bibinfo{person}{Edith Elkind},
  \bibinfo{person}{Rupert Freeman}, {and} \bibinfo{person}{Toby Walsh}.}
  \bibinfo{year}{2017}\natexlab{}.
\newblock \showarticletitle{Justified Representation in Approval-based
  Committee Voting}.
\newblock \bibinfo{journal}{\emph{Soc. Choice Welf.}} \bibinfo{volume}{48},
  \bibinfo{number}{2} (\bibinfo{year}{2017}), \bibinfo{pages}{461--485}.
\newblock


\bibitem[Aziz et~al\mbox{.}(2018)]%
        {aziz18complexity}
\bibfield{author}{\bibinfo{person}{Haris Aziz}, \bibinfo{person}{Edith Elkind},
  \bibinfo{person}{Shenwei Huang}, \bibinfo{person}{Martin Lackner},
  \bibinfo{person}{Luis {S{\'{a}}nchez{-}Fern{\'{a}}ndez}}, {and}
  \bibinfo{person}{Piotr Skowron}.} \bibinfo{year}{2018}\natexlab{}.
\newblock \showarticletitle{On the Complexity of Extended and Proportional
  Justified Representation}. In \bibinfo{booktitle}{\emph{Proceedings of the
  32nd {AAAI} Conference on Artificial Intelligence, {AAAI} 2018}}.
  \bibinfo{pages}{902--909}.
\newblock


\bibitem[Aziz et~al\mbox{.}(2015)]%
        {AzizGGMMW15}
\bibfield{author}{\bibinfo{person}{Haris Aziz}, \bibinfo{person}{Serge
  Gaspers}, \bibinfo{person}{Joachim Gudmundsson}, \bibinfo{person}{Simon
  Mackenzie}, \bibinfo{person}{Nicholas Mattei}, {and} \bibinfo{person}{Toby
  Walsh}.} \bibinfo{year}{2015}\natexlab{}.
\newblock \showarticletitle{Computational Aspects of Multi-Winner Approval
  Voting}. In \bibinfo{booktitle}{\emph{the 2015 International Conference on
  Autonomous Agents and Multiagent Systems, {AAMAS} 2015}}.
  \bibinfo{pages}{107--115}.
\newblock


\bibitem[Aziz et~al\mbox{.}(2023)]%
        {aziz2023best}
\bibfield{author}{\bibinfo{person}{Haris Aziz}, \bibinfo{person}{Xinhang Lu},
  \bibinfo{person}{Mashbat Suzuki}, \bibinfo{person}{Jeremy Vollen}, {and}
  \bibinfo{person}{Toby Walsh}.} \bibinfo{year}{2023}\natexlab{}.
\newblock \bibinfo{booktitle}{\emph{Best-of-Both-Worlds Fairness in Committee
  Voting}}.
\newblock \bibinfo{type}{{T}echnical {R}eport}.
  \bibinfo{institution}{arxiv.org/2303.03642}.
\newblock


\bibitem[Aziz and Shah(2021)]%
        {aziz2021participatory}
\bibfield{author}{\bibinfo{person}{Haris Aziz} {and} \bibinfo{person}{Nisarg
  Shah}.} \bibinfo{year}{2021}\natexlab{}.
\newblock \showarticletitle{Participatory Budgeting: {M}odels and Approaches}.
\newblock \bibinfo{journal}{\emph{Pathways Between Social Science and
  Computational Social Science: Theories, Methods, and Interpretations}}
  (\bibinfo{year}{2021}), \bibinfo{pages}{215--236}.
\newblock


\bibitem[Bassily et~al\mbox{.}(2021)]%
        {bassily21algorithmic}
\bibfield{author}{\bibinfo{person}{Raef Bassily}, \bibinfo{person}{Kobbi
  Nissim}, \bibinfo{person}{Adam~D. Smith}, \bibinfo{person}{Thomas Steinke},
  \bibinfo{person}{Uri Stemmer}, {and} \bibinfo{person}{Jonathan~R. Ullman}.}
  \bibinfo{year}{2021}\natexlab{}.
\newblock \showarticletitle{Algorithmic Stability for Adaptive Data Analysis}.
\newblock \bibinfo{journal}{\emph{{SIAM} J. Comput.}} \bibinfo{volume}{50},
  \bibinfo{number}{3} (\bibinfo{year}{2021}).
\newblock


\bibitem[Beimel et~al\mbox{.}(2022)]%
        {beimel2022dynamic}
\bibfield{author}{\bibinfo{person}{Amos Beimel}, \bibinfo{person}{Haim Kaplan},
  \bibinfo{person}{Yishay Mansour}, \bibinfo{person}{Kobbi Nissim},
  \bibinfo{person}{Thatchaphol Saranurak}, {and} \bibinfo{person}{Uri
  Stemmer}.} \bibinfo{year}{2022}\natexlab{}.
\newblock \showarticletitle{Dynamic Algorithms Against an Adaptive Adversary:
  {G}eneric Constructions and Lower Bounds}. In
  \bibinfo{booktitle}{\emph{Proceedings of the 54th Annual {ACM} {SIGACT}
  Symposium on Theory of Computing, {STOC} 2022}}. \bibinfo{pages}{1671--1684}.
\newblock


\bibitem[Boehmer et~al\mbox{.}(2024)]%
        {boehmer2024approval}
\bibfield{author}{\bibinfo{person}{Niclas Boehmer}, \bibinfo{person}{Markus
  Brill}, \bibinfo{person}{Alfonso Cevallos}, \bibinfo{person}{Jonas Gehrlein},
  \bibinfo{person}{Luis {S{\'{a}}nchez Fern{\'{a}}ndez}}, {and}
  \bibinfo{person}{Ulrike Schmidt{-}Kraepelin}.}
  \bibinfo{year}{2024}\natexlab{}.
\newblock \showarticletitle{Approval-Based Committee Voting in Practice: {A}
  Case Study of (over-)Representation in the Polkadot Blockchain}. In
  \bibinfo{booktitle}{\emph{Proceedings of the 38th {AAAI} Conference on
  Artificial Intelligence, {AAAI} 2024}}. \bibinfo{pages}{9519--9527}.
\newblock


\bibitem[Boehmer et~al\mbox{.}(2023)]%
        {boehmer2023robustness}
\bibfield{author}{\bibinfo{person}{Niclas Boehmer}, \bibinfo{person}{Piotr
  Faliszewski}, \bibinfo{person}{Lukasz Janeczko}, {and}
  \bibinfo{person}{Andrzej Kaczmarczyk}.} \bibinfo{year}{2023}\natexlab{}.
\newblock \showarticletitle{Robustness of Participatory Budgeting Outcomes:
  {C}omplexity and Experiments}. In \bibinfo{booktitle}{\emph{Proceedings of
  the 16th International Symposium Algorithmic Game Theory, {SAGT} 2023}}.
  \bibinfo{pages}{161--178}.
\newblock


\bibitem[Brill and Peters(2023)]%
        {brill23robust}
\bibfield{author}{\bibinfo{person}{Markus Brill} {and} \bibinfo{person}{Jannik
  Peters}.} \bibinfo{year}{2023}\natexlab{}.
\newblock \showarticletitle{Robust and Verifiable Proportionality Axioms for
  Multiwinner Voting}. In \bibinfo{booktitle}{\emph{Proceedings of the 24th
  {ACM} Conference on Economics and Computation, {EC} 2023}}.
  \bibinfo{pages}{301}.
\newblock


\bibitem[Cevallos and Stewart(2021)]%
        {cevallos2021verifiably}
\bibfield{author}{\bibinfo{person}{Alfonso Cevallos} {and}
  \bibinfo{person}{Alistair Stewart}.} \bibinfo{year}{2021}\natexlab{}.
\newblock \showarticletitle{A Verifiably Secure and Proportional Committee
  Election Rule}. In \bibinfo{booktitle}{\emph{Proceedings of the 3rd {ACM}
  Conference on Advances in Financial Technologies, {AFT} 2021}}.
  \bibinfo{pages}{29--42}.
\newblock


\bibitem[Cheng et~al\mbox{.}(2019)]%
        {cheng19group}
\bibfield{author}{\bibinfo{person}{Yu Cheng}, \bibinfo{person}{Zhihao Jiang},
  \bibinfo{person}{Kamesh Munagala}, {and} \bibinfo{person}{Kangning Wang}.}
  \bibinfo{year}{2019}\natexlab{}.
\newblock \showarticletitle{Group Fairness in Committee Selection}. In
  \bibinfo{booktitle}{\emph{Proceedings of the 2019 {ACM} Conference on
  Economics and Computation, {EC} 2019}}. \bibinfo{pages}{263--279}.
\newblock


\bibitem[David et~al\mbox{.}(2023)]%
        {david23local}
\bibfield{author}{\bibinfo{person}{Bernardo David}, \bibinfo{person}{Rosario
  Giustolisi}, \bibinfo{person}{Victor Mortensen}, {and}
  \bibinfo{person}{Morten Pedersen}.} \bibinfo{year}{2023}\natexlab{}.
\newblock \showarticletitle{Local Differential Privacy in Voting}. In
  \bibinfo{booktitle}{\emph{Proceedings of the Italian Conference on Cyber
  Security, {ITASEC} 2023}}.
\newblock


\bibitem[Do et~al\mbox{.}(2022)]%
        {do22online}
\bibfield{author}{\bibinfo{person}{Virginie Do}, \bibinfo{person}{Matthieu
  Hervouin}, \bibinfo{person}{J{\'{e}}r{\^{o}}me Lang}, {and}
  \bibinfo{person}{Piotr Skowron}.} \bibinfo{year}{2022}\natexlab{}.
\newblock \showarticletitle{Online Approval Committee Elections}. In
  \bibinfo{booktitle}{\emph{Proceedings of the 31st International Joint
  Conference on Artificial Intelligence, {IJCAI} 2022}}.
  \bibinfo{pages}{251--257}.
\newblock


\bibitem[Dwork et~al\mbox{.}(2006)]%
        {dwork06calibrating}
\bibfield{author}{\bibinfo{person}{Cynthia Dwork}, \bibinfo{person}{Frank
  McSherry}, \bibinfo{person}{Kobbi Nissim}, {and} \bibinfo{person}{Adam~D.
  Smith}.} \bibinfo{year}{2006}\natexlab{}.
\newblock \showarticletitle{Calibrating Noise to Sensitivity in Private Data
  Analysis}. In \bibinfo{booktitle}{\emph{Proceedings of the 3rd Theory of
  Cryptography Conference, {TCC} 2006}}. \bibinfo{pages}{265--284}.
\newblock


\bibitem[Elkind et~al\mbox{.}(2024a)]%
        {ElkindFIMSS24}
\bibfield{author}{\bibinfo{person}{Edith Elkind}, \bibinfo{person}{Piotr
  Faliszewski}, \bibinfo{person}{Ayumi Igarashi}, \bibinfo{person}{Pasin
  Manurangsi}, \bibinfo{person}{Ulrike Schmidt{-}Kraepelin}, {and}
  \bibinfo{person}{Warut Suksompong}.} \bibinfo{year}{2024}\natexlab{a}.
\newblock \showarticletitle{The Price of Justified Representation}.
\newblock \bibinfo{journal}{\emph{{ACM} Trans. Economics and Comput.}}
  \bibinfo{volume}{12}, \bibinfo{number}{3} (\bibinfo{year}{2024}),
  \bibinfo{pages}{11:1--11:27}.
\newblock


\bibitem[Elkind et~al\mbox{.}(2024b)]%
        {elkind2024temporal}
\bibfield{author}{\bibinfo{person}{Edith Elkind}, \bibinfo{person}{Svetlana
  Obraztsova}, {and} \bibinfo{person}{Nicholas Teh}.}
  \bibinfo{year}{2024}\natexlab{b}.
\newblock \showarticletitle{Temporal Fairness in Multiwinner Voting}. In
  \bibinfo{booktitle}{\emph{Proceedings of the 38th {AAAI} Conference on
  Artificial Intelligence, {AAAI} 2024}}. \bibinfo{pages}{22633--22640}.
\newblock


\bibitem[Fish et~al\mbox{.}(2024)]%
        {fish2024generative}
\bibfield{author}{\bibinfo{person}{Sara Fish}, \bibinfo{person}{Paul
  G{\"{o}}lz}, \bibinfo{person}{David~C. Parkes}, \bibinfo{person}{Ariel~D.
  Procaccia}, \bibinfo{person}{Gili Rusak}, \bibinfo{person}{Itai Shapira},
  {and} \bibinfo{person}{Manuel W{\"{u}}thrich}.}
  \bibinfo{year}{2024}\natexlab{}.
\newblock \showarticletitle{Generative Social Choice}. In
  \bibinfo{booktitle}{\emph{Proceedings of the 25th {ACM} Conference on
  Economics and Computation, {EC} 2024}}. \bibinfo{pages}{985}.
\newblock


\bibitem[Flanigan et~al\mbox{.}(2023)]%
        {flanigan23smoothed}
\bibfield{author}{\bibinfo{person}{Bailey Flanigan}, \bibinfo{person}{Daniel
  Halpern}, {and} \bibinfo{person}{Alexandros Psomas}.}
  \bibinfo{year}{2023}\natexlab{}.
\newblock \showarticletitle{Smoothed Analysis of Social Choice Revisited}. In
  \bibinfo{booktitle}{\emph{Proceedings of the 19th International Conference
  Web and Internet Economics, {WINE} 2023}}. \bibinfo{pages}{290--309}.
\newblock


\bibitem[Gawron and Faliszewski(2019)]%
        {gawron2019robustness}
\bibfield{author}{\bibinfo{person}{Grzegorz Gawron} {and}
  \bibinfo{person}{Piotr Faliszewski}.} \bibinfo{year}{2019}\natexlab{}.
\newblock \showarticletitle{Robustness of Approval-Based Multiwinner Voting
  Rules}. In \bibinfo{booktitle}{\emph{Proceedings of the 6th International
  Conference on Algorithmic Decision Theory, {ADT} 2019}}.
  \bibinfo{pages}{17--31}.
\newblock


\bibitem[Halpern et~al\mbox{.}(2023)]%
        {halpern23representation}
\bibfield{author}{\bibinfo{person}{Daniel Halpern}, \bibinfo{person}{Gregory
  Kehne}, \bibinfo{person}{Ariel~D. Procaccia}, \bibinfo{person}{Jamie
  Tucker{-}Foltz}, {and} \bibinfo{person}{Manuel W{\"{u}}thrich}.}
  \bibinfo{year}{2023}\natexlab{}.
\newblock \showarticletitle{Representation with Incomplete Votes}. In
  \bibinfo{booktitle}{\emph{Proceedings of the 37th {AAAI} Conference on
  Artificial Intelligence, {AAAI} 2023}}. \bibinfo{pages}{5657--5664}.
\newblock


\bibitem[Jiang et~al\mbox{.}(2020)]%
        {jiang20approximately}
\bibfield{author}{\bibinfo{person}{Zhihao Jiang}, \bibinfo{person}{Kamesh
  Munagala}, {and} \bibinfo{person}{Kangning Wang}.}
  \bibinfo{year}{2020}\natexlab{}.
\newblock \showarticletitle{Approximately Stable Committee Selection}. In
  \bibinfo{booktitle}{\emph{Proceedings of the 52nd Annual {ACM} {SIGACT}
  Symposium on Theory of Computing, {STOC} 2020}}. \bibinfo{pages}{463--472}.
\newblock


\bibitem[Kraiczy and Elkind(2023)]%
        {kraiczy23properties}
\bibfield{author}{\bibinfo{person}{Sonja Kraiczy} {and} \bibinfo{person}{Edith
  Elkind}.} \bibinfo{year}{2023}\natexlab{}.
\newblock \showarticletitle{Properties of Local Search {PAV}}. In
  \bibinfo{booktitle}{\emph{Proceedings of the 9th International Workshop on
  Computational Social Choice, {COMSOC} 2023}}.
\newblock


\bibitem[Lackner(2020)]%
        {lackner2020perpetual}
\bibfield{author}{\bibinfo{person}{Martin Lackner}.}
  \bibinfo{year}{2020}\natexlab{}.
\newblock \showarticletitle{Perpetual Voting: {F}airness in Long-Term Decision
  Making}. In \bibinfo{booktitle}{\emph{Proceedings of the 34th {AAAI}
  Conference on Artificial Intelligence, {AAAI} 2020}}.
  \bibinfo{pages}{2103--2110}.
\newblock


\bibitem[Lackner and Maly(2023)]%
        {lackner23proportional}
\bibfield{author}{\bibinfo{person}{Martin Lackner} {and} \bibinfo{person}{Jan
  Maly}.} \bibinfo{year}{2023}\natexlab{}.
\newblock \showarticletitle{Proportional Decisions in Perpetual Voting}. In
  \bibinfo{booktitle}{\emph{Proceedings of the 37th {AAAI} Conference on
  Artificial Intelligence, {AAAI} 2023}}. \bibinfo{pages}{5722--5729}.
\newblock


\bibitem[Lackner and Skowron(2023)]%
        {lackner23multi}
\bibfield{author}{\bibinfo{person}{Martin Lackner} {and} \bibinfo{person}{Piotr
  Skowron}.} \bibinfo{year}{2023}\natexlab{}.
\newblock \bibinfo{booktitle}{\emph{Multi-Winner Voting with Approval
  Preferences}}.
\newblock \bibinfo{publisher}{Springer}.
\newblock


\bibitem[Lee(2015)]%
        {lee15efficient}
\bibfield{author}{\bibinfo{person}{David~Timothy Lee}.}
  \bibinfo{year}{2015}\natexlab{}.
\newblock \showarticletitle{Efficient, Private, and eps-Strategyproof
  Elicitation of Tournament Voting Rules}. In
  \bibinfo{booktitle}{\emph{Proceedings of the 24th International Joint
  Conference on Artificial Intelligence, {IJCAI} 2015}}.
  \bibinfo{pages}{2026--2032}.
\newblock


\bibitem[Levin and Peres(2017)]%
        {levin17markov}
\bibfield{author}{\bibinfo{person}{David~A Levin} {and} \bibinfo{person}{Yuval
  Peres}.} \bibinfo{year}{2017}\natexlab{}.
\newblock \bibinfo{booktitle}{\emph{Markov chains and mixing times}}.
  Vol.~\bibinfo{volume}{107}.
\newblock \bibinfo{publisher}{American Mathematical Soc.}
\newblock


\bibitem[Li et~al\mbox{.}(2024a)]%
        {li24differentially}
\bibfield{author}{\bibinfo{person}{Zhechen Li}, \bibinfo{person}{Zimai Guo},
  \bibinfo{person}{Lirong Xia}, \bibinfo{person}{Yongzhi Cao}, {and}
  \bibinfo{person}{Hanpin Wang}.} \bibinfo{year}{2024}\natexlab{a}.
\newblock \bibinfo{booktitle}{\emph{Differentially Private Approval-Based
  Committee Voting}}.
\newblock \bibinfo{type}{{T}echnical {R}eport}.
  \bibinfo{institution}{arxiv.org/2401.10122v2}.
\newblock


\bibitem[Li et~al\mbox{.}(2024b)]%
        {li24differentially1}
\bibfield{author}{\bibinfo{person}{Zhechen Li}, \bibinfo{person}{Zimai Guo},
  \bibinfo{person}{Lirong Xia}, \bibinfo{person}{Yongzhi Cao}, {and}
  \bibinfo{person}{Hanpin Wang}.} \bibinfo{year}{2024}\natexlab{b}.
\newblock \bibinfo{booktitle}{\emph{Differentially Private Approval-Based
  Committee Voting}}.
\newblock \bibinfo{type}{{T}echnical {R}eport}.
  \bibinfo{institution}{arxiv.org/2401.10122v1}.
\newblock


\bibitem[Mohsin et~al\mbox{.}(2022)]%
        {mohsin22learning}
\bibfield{author}{\bibinfo{person}{Farhad Mohsin}, \bibinfo{person}{Ao Liu},
  \bibinfo{person}{Pin{-}Yu Chen}, \bibinfo{person}{Francesca Rossi}, {and}
  \bibinfo{person}{Lirong Xia}.} \bibinfo{year}{2022}\natexlab{}.
\newblock \showarticletitle{Learning to Design Fair and Private Voting Rules}.
\newblock \bibinfo{journal}{\emph{J. Artif. Intell. Res.}}
  \bibinfo{volume}{75} (\bibinfo{year}{2022}), \bibinfo{pages}{1139--1176}.
\newblock


\bibitem[Munagala et~al\mbox{.}(2022)]%
        {MSW+22a}
\bibfield{author}{\bibinfo{person}{Kamesh Munagala}, \bibinfo{person}{Yiheng
  Shen}, \bibinfo{person}{Kangning Wang}, {and} \bibinfo{person}{Zhiyi Wang}.}
  \bibinfo{year}{2022}\natexlab{}.
\newblock \showarticletitle{Approximate Core for Committee Selection via
  Multilinear Extension and Market Clearing}. In \bibinfo{booktitle}{\emph{2022
  {ACM-SIAM} Symposium on Discrete Algorithms, {SODA} 2022}}.
  \bibinfo{pages}{2229--2252}.
\newblock


\bibitem[Peters et~al\mbox{.}(2021)]%
        {PierczynskiSP21equalshares}
\bibfield{author}{\bibinfo{person}{Dominik Peters}, \bibinfo{person}{Grzegorz
  Pierczynski}, {and} \bibinfo{person}{Piotr Skowron}.}
  \bibinfo{year}{2021}\natexlab{}.
\newblock \showarticletitle{Proportional Participatory Budgeting with Additive
  Utilities}. In \bibinfo{booktitle}{\emph{Proceedings of the 34th
  International Conference on Neural Information Processing Systems, {NeurIPS}
  2021}}. \bibinfo{pages}{12726--12737}.
\newblock


\bibitem[Peters and Skowron(2020)]%
        {PetersS20equalshares}
\bibfield{author}{\bibinfo{person}{Dominik Peters} {and} \bibinfo{person}{Piotr
  Skowron}.} \bibinfo{year}{2020}\natexlab{}.
\newblock \showarticletitle{Proportionality and the Limits of Welfarism}. In
  \bibinfo{booktitle}{\emph{Proceedings of the 21st {ACM} Conference on
  Economics and Computation, {EC} 2020}}. \bibinfo{pages}{793--794}.
\newblock


\bibitem[Rey and Maly(2023)]%
        {rey2023computational}
\bibfield{author}{\bibinfo{person}{Simon Rey} {and} \bibinfo{person}{Jan
  Maly}.} \bibinfo{year}{2023}\natexlab{}.
\newblock \bibinfo{booktitle}{\emph{The (Computational) Social Choice Take on
  Indivisible Participatory Budgeting}}.
\newblock \bibinfo{type}{{T}echnical {R}eport}.
  \bibinfo{institution}{arxiv.org/2303.00621}.
\newblock


\bibitem[{S{\'{a}}nchez{-}Fern{\'{a}}ndez}(2025)]%
        {SanchezFernandez25}
\bibfield{author}{\bibinfo{person}{Luis {S{\'{a}}nchez{-}Fern{\'{a}}ndez}}.}
  \bibinfo{year}{2025}\natexlab{}.
\newblock \showarticletitle{A Note on the Method of Equal Shares}.
\newblock \bibinfo{journal}{\emph{Inf. Process. Lett.}}  \bibinfo{volume}{190}
  (\bibinfo{year}{2025}), \bibinfo{pages}{106576}.
\newblock


\bibitem[S{\'{a}}nchez{-}Fern{\'{a}}ndez et~al\mbox{.}(2017)]%
        {FernandezEL17}
\bibfield{author}{\bibinfo{person}{Luis S{\'{a}}nchez{-}Fern{\'{a}}ndez},
  \bibinfo{person}{Edith Elkind}, {and} \bibinfo{person}{Martin Lackner}.}
  \bibinfo{year}{2017}\natexlab{}.
\newblock \bibinfo{booktitle}{\emph{Committees Providing {EJR} Can be Computed
  Efficiently}}.
\newblock \bibinfo{type}{{T}echnical {R}eport}.
  \bibinfo{institution}{arxiv.org/1704.00356v3}.
\newblock


\bibitem[{S{\'{a}}nchez-Fern{\'{a}}ndez} and Fisteus(2019)]%
        {fernandez19monotonicity}
\bibfield{author}{\bibinfo{person}{Luis {S{\'{a}}nchez-Fern{\'{a}}ndez}} {and}
  \bibinfo{person}{Jes{\'{u}}s~A. Fisteus}.} \bibinfo{year}{2019}\natexlab{}.
\newblock \showarticletitle{Monotonicity Axioms in Approval-based Multi-winner
  Voting Rules}. In \bibinfo{booktitle}{\emph{Proceedings of the 18th
  International Conference on Autonomous Agents and MultiAgent Systems, {AAMAS}
  2019}}. \bibinfo{pages}{485--493}.
\newblock


\bibitem[Suzuki and Vollen(2024)]%
        {suzuki2024maximum}
\bibfield{author}{\bibinfo{person}{Mashbat Suzuki} {and}
  \bibinfo{person}{Jeremy Vollen}.} \bibinfo{year}{2024}\natexlab{}.
\newblock \showarticletitle{Maximum Flow is Fair: {A} Network Flow Approach to
  Committee Voting}. In \bibinfo{booktitle}{\emph{Proceedings of the 25th {ACM}
  Conference on Economics and Computation, {EC} 2024}}.
  \bibinfo{pages}{964--983}.
\newblock


\bibitem[Tao et~al\mbox{.}(2022)]%
        {tao22local}
\bibfield{author}{\bibinfo{person}{Liangde Tao}, \bibinfo{person}{Lin Chen},
  \bibinfo{person}{Lei Xu}, {and} \bibinfo{person}{Weidong Shi}.}
  \bibinfo{year}{2022}\natexlab{}.
\newblock \showarticletitle{Local Differential Privacy Meets Computational
  Social Choice - Resilience under Voter Deletion}. In
  \bibinfo{booktitle}{\emph{Proceedings of the 31st International Joint
  Conference on Artificial Intelligence, {IJCAI} 2022}}.
  \bibinfo{pages}{3940--3946}.
\newblock


\bibitem[Thiele(1895)]%
        {Thie95a}
\bibfield{author}{\bibinfo{person}{Thorvald~N. Thiele}.}
  \bibinfo{year}{1895}\natexlab{}.
\newblock \showarticletitle{Om Flerfoldsvalg}.
\newblock In \bibinfo{booktitle}{\emph{Oversigt over Det Kongelige Danske
  Vidensk-Abernes Selskabs Forhandlinger}}. \bibinfo{pages}{415--441}.
\newblock


\bibitem[Xia(2020)]%
        {xia20smoothed}
\bibfield{author}{\bibinfo{person}{Lirong Xia}.}
  \bibinfo{year}{2020}\natexlab{}.
\newblock \showarticletitle{The Smoothed Possibility of Social Choice}. In
  \bibinfo{booktitle}{\emph{Proceedings of the 33rd International Conference on
  Neural Information Processing Systems, {NeurIPS} 2020}}.
\newblock


\end{thebibliography}

\newpage
\appendix

\section{\texorpdfstring{Proofs Elided from \Cref{sec:prelims}}{Proofs Elided from Section 2}} \label{app:prelims}

\etvdistanceToContinuity*

\begin{proof}%
    The proposition states that if a rule is $\eps$-$\Delta$-TV-stable, then $f$ is $\eps$-$\Delta$-continuous.
    Fix a randomized rule $f$, a pair of elections $E = (C,N,A,k)$, $E' = (C,N,A',k)$ such that $A$ and $A'$ differ for only one voter $v$ such that $\abs{A_v \oplus A_v'} \leq \Delta$. %
    
    Recall that $\W$ is the collection of all committees of size at most $k$ and let $\W_c = \{ W \in \W : c \in W\}$ be the set of committees containing $c$.
    Let $P_W \defeq \prob{f(E) = W}$ and $P_W' \defeq \prob{f(E') = W}$. 
    Also let $\mathcal{A} \subseteq \W$ be the collection of committees $W$ for which $P_W' \geq P_W$;
    this is equal to the set (an event) which witnesses the maximum in the first definition of $\dtv{\cdot}{\cdot}$ in \Cref{def:tv-dist}.
    Then
    \begin{align*}
        \abs{\pi_c(f(E)) - \pi_c(f(E'))} &= \abs{\sum_{W \in \W_c} P_W - \sum_{W \in \W_c} P_W'} \\
        &= \abs{\sum_{W \in \W_c \cap \mathcal{A}} (P_W-P_W') - \sum_{W \in \W_c\setminus \mathcal{A}} (P_W' - P_W)} \\
        &\leq \max \left( \sum_{W \in \W_c \cap \mathcal{A}} (P_W - P_W'), \:\sum_{W \in \W_c\setminus \mathcal{A}} (P_W' - P_W) \right)
        \\
        &\leq \max \left( \sum_{W \in \mathcal{A}} (P_W - P_W'), \:\sum_{W \in \W \setminus \mathcal{A}} (P_W' - P_W) \right)
        \\
        &= \sum_{W \in \mathcal{A}} (P_W-P_W')
        \\
        &= \dtv{f(E)}{f(E')} \leq \eps,
    \end{align*}
    Since $f$ is $\eps$-$\Delta$-TV-stable by assumption.
\end{proof}

\section{\texorpdfstring{Proofs Elided from \Cref{sec:noisy-GJCR}}{Proofs Elided from Section 3}} \label{app:GJCR}

\thmuniformEJR*

\begin{proof}
    Let $E=(C,N,A,k)$ be an approval-based committee election and let $E'=(C,N,A',k)$ differ from $E$ by one voter $v \in N$ approving one additional candidate $c \in C \setminus A_v$. Let $\Wf \subseteq \mathcal{W}$ (respectively $\Ws \subseteq \mathcal{W}$) be those committees satisfying EJR+ in $E$ (respectively $E'$). 

    First, we show that $W \in \Wf \setminus \Ws$ implies that $c \not\in W$. Assume for contradiction that $c \in W$. Note that $|A_i \cap W| \leq |A'_i \cap W|$ for all $i \in N$ and $N_{c'} = N'_{c'}$ for all $c'\in C \setminus W$, where $N'_{c'}$ denotes the supporter set of $c'$ in election $E'$. Therefore, if $c'$ witnesses an EJR+ violation for $E'$, the same EJR+ violation is present for the election $E$, a contradiction. Thus, $c \not\in W$. 

    Second, we show that $W \in \Ws \setminus \Wf$ implies that $c \in W$. Assume for contradiction that $c \notin W$. Then, $|A_i \cap W| = |A'
    _i \cap W|$ for all $i \in N$ and $N_{c'} \subseteq N'_{c'}$ for all $c' \in C \setminus W$. Consider a candidate $c'$ that provides an EJR+ violation for the election $E$. This EJR+ violation would still be an EJR+ violation for the election $E'$, a contradiction. Thus, $c \in W$. 

    Ex-ante monotonicity of \uejr follows now directly since $$\frac{|\Wf \cap \Wc|}{|\Wf|} \leq \frac{|\Ws \cap \Wc|}{|\Ws \cap \Wc| + |\Wf \setminus \Wc|} \leq \frac{|\Ws \cap \Wc|}{|\Ws \cap \Wc| + |\Ws \setminus \Wc|} = \frac{|\Ws \cap \Wc|}{|\Ws|},$$
    where the first inequality follows from the fact that $\Wf \cap \Wc \subseteq \Ws \cap \Wc$ and since the entire expression is smaller than $1$, and the second inequality follows from $\Ws \setminus \Wc \subseteq \Wf \setminus \Wc$. The first term equals the selection probability for candidate $c$ in election $E$ and the last term equals the selection probability for candidate $c$ in election $E'$. Thus, \textsc{UniformEJR+} is ex-ante monotone. 
\end{proof}

\gcjrEJRplus*

\begin{proof}
    First, we claim the output committee satisfies $\abs{W} \leq k$.
    This proceeds via a price system, where each candidate costs $1$, and each voter in $\{i \in N_c : \abs{A_i \cap W} < \ell\}$ is charged $(\scl)^{-1}$ for a candidate $c \in \LW$ sampled in round $\ell$.
    Fix a voter, and consider the last time in the course of \Cref{alg:gjcr-noisy} they contribute to the cost of a candidate, $c$. 
    If this transpires in round $\ell$, then since they approve of at most $\ell-1$ other candidates in $W$ and $\ell$ decreases over the course of the algorithm, their total expenditure is at most
    \begin{equation} \label{eq:gjcr-noisy-correctness}
        (\ell-1)\left(\frac{n\ell}{k+1}\right)^{-1} + (\scl)^{-1} < \ell\left(\frac{n\ell}{k+1}\right)^{-1} = \frac{k+1}{n}.
    \end{equation}
    Summing over all voters, the total expenditure is strictly less than $k+1$, and so $\abs{W} \leq k$. 

    Next we claim the output committee satisfies \ejrp{}.
    We will argue that, for any $\ell \in [k]$, after $k$ steps of the inner loop on \cref{line:GJCR-noisy-inner-loop}, no \ejrp{}-violating candidates with an $\ell$-large group of supporting voters remain.
    If for any round $r \in [k]$ we have $\LW = \emptyset$ or $\max_{d \in \LW} \sdl < n \frac{\ell}{k}$, then no such candidates remain and we are done.
    Otherwise for all rounds $r \in [k]$ we have $\sum_{d \in \LW} g_\ell(\sdl) \geq g_\ell(\frac{n\ell}{k})$, and so \Cref{alg:gjcr-noisy} samples a candidate in every round, and after the $r=k$ iteration we have $\abs{W} = k$. 
    
    At this point we may reuse the price system argument above.
    After these $k$ rounds in epoch $\ell$, how many voters $i$ remain for which $\abs{A_i \cap W} < \ell$? 
    Let the number of these voters be denoted $n'$, and observe that each of these voters was charged at most $\ell-1$ times 
    for the $k$ candidates chosen in round $\ell$, while all other voters 
    where charged at most $\ell$ times. And each time someone was charged, the amount was at strictly smaller than $\left(\frac{n\ell}{k+1}\right)^{-1}$. 
    Hence, the total amount of payments, $k$, can be upper bounded by 
    \[n' (\ell -1) \left(\frac{n\ell}{k+1}\right)^{-1} + (n - n') \ell \left(\frac{n\ell}{k+1}\right)^{-1} > k.\]

    This is equivalent to $n' < \frac{n\ell}{k+1}$, so $\LW = \emptyset$ at the end of epoch $\ell$. Therefore after the $r = k$ rounds all unchosen $c \not \in W$ have $\scl < \frac{n\ell}{k}$ and so do not violate \ejrp{} for $W$ at level $\ell$.
\end{proof}

\lemgcjrmonnext*

\begin{proof}
Let $c^*$ be the candidate that receives one additional approval in election $E'$ by voter $i^*$. 
Consider a fixed sequence $s$ of length $T$, and let $\ell$ be the epoch corresponding to iteration $T$. From now on, we assume that $\alg(E,T) = \alg(E',T) = s$. 
Given that $c^* \not \in s$, we establish the relationship between $\scsl$ and $\scslp$, where the latter refers to the number of $\ell$-underrepresented supporters in election $E'$ (at iteration $T$). For $c^*$ and any $d \neq c^*$, we claim that 
\begin{equation} \label{eq:gjcr-delta-vals}
    \scsl \leq \scslp \leq \scsl + 1
    \qquad \text{and} \qquad 
    \sdlp = \sdl.
\end{equation}
To see this, recall the definition of $\scl$ in \Cref{eq:gjcr-deltajell-defn}. 
Because $c^* \not \in s$, for all voters $i \in N$ we have that $A_i \cap W_s = A_i' \cap W_s$. And because $A$ and $A'$ differ only on the approval of $c^*$ by $i^*$, we have that $N_{c^*} \cup \{i^*\} =  N_{c^*}'$ and $N_d = N_d'$ for all $d \neq c$.

Since $g_\ell$ is increasing, it follows that $g_\ell(\scslp) \geq g_\ell(\scsl)$, and of course $g_\ell(\sdlp) = g_\ell(\sdl)$ for $d \neq c^*$.
Next consider the form of the probability $P_c$ in \cref{line:GJCR-noisy-sample}.
For $c^*$, the numerator grows at least at the same rate as the denominator, and so $P_{c^*}' \geq P_{c^*}$. 
For $d \neq c^*$, the numerator is constant while the denominator is non-decreasing, so $P_d' \leq P_d$. 
Finally, observe that $P_\bot = 1 - \sum_{c \in C} P_c$. Letting $G \defeq \sum_{c \in C} g_\ell(\scl)$, from \eqref{eq:gjcr-delta-vals} and since $g_\ell$ is increasing it also holds that $G' \geq G$.
Since 
\begin{equation} \notag
    \sum_{c \in C} P_c = \frac{G}{\max(G, g_\ell(\frac{n\ell}{k}))}
\end{equation}
is non-decreasing in $G$, it follows that $P_\bot' \leq P_\bot$.
The claim follows.
\end{proof}

\gcjrmonotone*

\begin{proof}    
    Let $E$ be any election and $E'$ be the election derived from $E$ where some voter $i^*$ additionally approves $c^* \in C \setminus A_{i^*}$. 
    We consider the tree of all possible sequences chosen by \Cref{alg:gjcr-noisy}, and prove that the probability of choosing $c^*$ increases at every node in the tree. 
    Our claim is that, for all $T$ and all sequences $s=(s_1, \ldots, s_T)$,
    \begin{equation} \label{eq:gjcr-mon-IH}
        \prob{c^* \in \alg(E') \given \alg(E',T) = s} \geq \prob{c^* \in \alg(E) \given \alg(E,T) = s}.
    \end{equation}
    We show this via induction on the tree of potential sequences $s$, where $s = ()$ is the root of the tree and parents are direct prefixes of children.

    \noindent\textbf{Leaf Case:} Consider $T = k^2 - 1$. Then for any $s$ such that $c^* \in s$ we have 
    \[
        \prob{c^* \in \alg(E') \given \alg(E',T) = s} = \prob{c^* \in \alg(E) \given  \alg(E,T) = s} = 1.
    \]
    Otherwise $c^* \not \in s$, and since this is the last step of the algorithm, for any such $s$ we have
    \begin{align}
        \prob{c^* \in \alg(E') \given \alg(E',T) = s} &= \prob{c^* = \alg(E')_{T+1} \given \alg(E',T) = s} \notag\\
        &\geq \prob{c^* = \alg(E)_{T+1} \given \alg(E,T) = s} \notag \\
        &= \prob{c^* \in \alg(E) \given \alg(E,T) = s}, \notag
    \end{align}
    by \Cref{lem:gjcr-mon-next}. This establishes \eqref{eq:gjcr-mon-IH} when $T = k^2-1$.

    \noindent\textbf{Inductive Step:} We now address $T < k^2 - 1$. 
    We assume that \eqref{eq:gjcr-mon-IH} holds for all sequences of length $T+1$.
    We will demonstrate that \eqref{eq:gjcr-mon-IH} then holds for $s$ with $\abs{s} = T$ also.
    Again if $c^* \in s$ we have 
    \[
        \prob{c^* \in \alg(E') \given \alg(E',T) = s} = \prob{c^* \in \alg(E) \given \alg(E,T) = s} = 1.
    \] 
    Otherwise $c^* \not\in s$. For the sequence $s$, let $s + u$ denote appending $u \in C \cup \{\bot\}$ to $s$. 
    Then, 
    \begin{align}
        &\prob{c^* \in \alg(E') \given \alg(E',T) =s} \notag \\
        &\qqqq= \sum_{u \in C\cup\{\bot\}} \prob{c^* \in \alg(E') \given \alg(E',T+1) = s+u} \cdot \prob{u = \alg(E')_{T+1} \given \alg(E',T) = s} \notag \\
        &\qqqq= \prob{c^* = \alg(E')_{T+1} \given \alg(E',T) = s}  \label{eq:continue-here}  \\
        &\qqqq\qq +  \sum_{u \neq c^*} \prob{c^* \in \alg(E') \given \alg(E',T+1) = s+u} \cdot \prob{u = \alg(E')_{T+1} \given \alg(E',T) = s}. \notag
        \intertext{Since $c^* \not \in s$, by \Cref{lem:gjcr-mon-next} we may write for any $u \neq c^*$
        $$\prob{u = \alg(E')_{T+1} \given \alg(E',T) = s} = \prob{u = \alg(E)_{T+1} \given \alg(E,T) = s} - \eps_u$$ for some $\eps_u \geq 0$ and
        $$\prob{c^* = \alg(E')_{T+1} \given \alg(E',T) = s} = \prob{c^* = \alg(E)_{T+1} \given \alg(E,T) = s} + \sum_{u \neq c} \eps_u.$$ 
        Therefore, we can lower bound \Cref{eq:continue-here} by}
        &\qqqq\geq \prob{c^* = \alg(E)_{T+1} \given \alg(E,T) = s}  \notag \\
        &\qqqq\qq +  \sum_{u \neq c^*} \prob{c^* \!\in \!\alg(E') \given \alg(E',T\!+\!1) = s\!+\!u} \cdot \left(\prob{u = \alg(E')_{T+1} \!\given \!\alg(E',T) = s} + \eps_u\right) \notag \\ 
        &\qqqq= \prob{c^* = \alg(E)_{T+1} \given \alg(E,T) = s}  \notag \\
        &\qqqq\qq +  \sum_{u \neq c^*} \prob{c^* \!\in \!\alg(E') \given \alg(E',T\!+\!1) = s\!+\!u} \cdot \left(\prob{u = \alg(E)_{T+1} \!\given \!\alg(E,T) = s}\right) \notag \\ 
        &\qqqq \geq \prob{c^* = \alg(E)_{T+1} \given \alg(E,T) = s}  \notag \\
        &\qqqq\qq +  \sum_{u \neq c^*} \prob{c^* \!\in \!\alg(E) \given \alg(E,T\!+\!1) = s\!+\!u} \cdot \left(\prob{u = \alg(E)_{T+1} \!\given \!\alg(E,T) = s}\right) \notag \\ 
        &\qqqq=\prob{c \in \alg(E) \given \alg(E,T) = s}, \notag 
    \end{align}
    where the last inequality follows from the inductive hypothesis \eqref{eq:gjcr-mon-IH} for $T+1$.
    
    Therefore \eqref{eq:gjcr-mon-IH} holds for all nodes of the `sequence tree' of the algorithm. Since 
    \[
        \prob{c^* \in \alg(E)} = \prob{c^* \in \alg(E) \given \alg(E,0) = ()},
    \]
    our main claim then corresponds to the root of the tree, where $s$ is the empty list.
\end{proof}

\gcjrsequencestable*

\begin{proof}[Proof of \Cref{lem:noisy-GJCR-sequence-stable}]
    As before, let $E$ and $E'$ be two elections that differ only in one voter $v$, and let $\Delta_v = |A_v \oplus A'_v|$. We use statistical coupling and the coupling lemma to make our proof rigorous. 
    (Recall \Cref{def:coupling} and \Cref{lem:coupling}.)
    For any (possibly partial) sequence, consider the distributions $\cN_s$ and $\cN_s'$ over $\Omega = C \cup \{\bot\}$. 
    Moreover, let $\cM_s$ be the optimal coupling of $\cN_s$ and $\cN_s'$ given by \Cref{lem:coupling}.
    We now define a coupling $\cT$ for $\cS_E$ and $\cS_{E'}$ via an algorithm which constructs a sample $(s, s') \sim \cT$:
    \begin{enumerate}
        \item For \emph{every} partial sequence $s$, let $(u_s, u_s')$ be a sample from $\cM_s$. %
        \item Starting at $s=()$ and $s'=()$, iteratively construct $s \leftarrow s + u_{s}$ and $s' \leftarrow s' + u_{s'}'$.
        \item Return $(s, s')$.
    \end{enumerate}
    
    This is a coupling of $\cS_E$ and $\cS_{E'}$ because for every partial sequence $s$, the sample which is appended to $s$ is the $u_s$ in the joint sample $(u_s, u_s')\sim \cM_s$, and so it is distributed according to $\cN_s$. Therefore the final sequence $s$ is distributed according to $\cS$, and the same is true for $s'$ and $\cS_{E'}$.

    Now, let $s\vert t$ denote the length-$t$ prefix of $s$. For this coupling $\cT$ we have
    \begin{align}
        \probover{(s,s')\sim\cT}{s \neq s'} &= \prob{\bigvee_{t \in [k^2]} \left[ u_{s\vert t} \neq u_{s'\vert t}' \right]} \notag \\
        &\leq \sum_{t \in [k^2]} \probover{\cM_{s \vert t}}{u_{s\vert t} \neq u_{s\vert t}'} \notag \\
        &= \sum_{t \in [k^2]} \dtv{\cN_{s\vert t}}{\cN_{s'\vert t}}, \notag 
        \intertext{by a union bound and using that the $\cM_s$ are optimal couplings.
        We now apply \Cref{lem:gjcr-stability-per-step} with our choice of $a_\ell$ from \eqref{eq:gjcr-a-defn}. We also rewrite the sum in terms of epoch $\ell$ and round $r$ instead of using the flattened index $t$. 
        }
        &\leq \sum_{r=1}^k \sum_{\ell = 1}^k 4(2 e - 1) \cdot a_\ell \notag \\
        &= 4(2 e - 1) \sum_{r=1}^k \sum_{\ell = 1}^k  \frac{k(k+1)}{n\ell}\left(\log n + \log \Delta \right) \notag \\
        &= 4(2 e - 1) \sum_{\ell = 1}^k \frac{k^2(k+1)}{n\ell}\left(\log n + \log \Delta \right) \notag \\
        &= O\left( \frac{k^3}{n} \cdot \log k \cdot \left(\log n + \log \Delta \right) \right) \notag, 
    \end{align}
    where the last equation follows from upper bounding the harmonic number by the logarithm. 
    To conclude, we have demonstrated a coupling with bounded `off-diagonal' probability, and by \Cref{lem:coupling} this `off-diagonal' probability is an upper bound on the TV distance of $\cS$ and $\cS_{E'}$. %
\end{proof}

\JRcandcontLB*

\begin{proof}[Proof of \Cref{thm:JR-cand-cont-lb}]
    In order to show an asymptotic lower bound $\Omega\left(\frac{k}{n}\right)$ we will use the following family of instances defined for every $n, k \in \mathbb{N}$ such that $n$ is divisible by $k^2$. We remark that all single-voter changes that we carry out below only change one approval per voter, hence we are in the continuity model where $\Delta = 1$.
    We define $n = h \cdot k^2$ voters and a set $C$ of $k+1$ candidates.
    We create $k+1$ groups of voters of size $\frac{n}{k} - h = \frac{n}{k} - \frac{n}{k^2}$ such that the $i$-th group of voters approves only the $i$-th candidate.
    In this way, the number of voters that approve at least one candidate is equal to
    $(k+1) \left(\frac{n}{k} - \frac{n}{k^2} \right)
    = n - \frac{n}{k^2} = n-h$.
    That means that exactly $h$ voters approve no candidate.
    In the instance every candidate $c$ has the same approval score, i.e., $|N_c| = \frac{n}{k} - h$.

    We consider a rule $f$ which satisfies \jr{} applied to any instance from the family.
    A rule $f$, by definition, outputs a random committee of size at most $k$
    which induces a set of selection probabilities $(\pi_{c})_{c \in C}$,
    where $\sum_{c \in C} \pi_{c} \leq k$.
    Since we have $k+1$ candidates
    there exists a candidate $c^*$ among them such that
    $\pi_{c^*} \leq \frac{k}{k+1}$.

    Now, we modify the considered instance by adding $h$ approvals to $c^*$ in the votes of the voters without approvals.
    In this way, $c^*$ has $\frac{n}{k}$ supporters,
    but each other candidate has only $\frac{n}{k} - \frac{n}{k^2}$ approvals.
    That means that there exists exactly one $1$-large group of voters approving a common candidate---the group of voters approving $c^*$.
    Since $f$ satisfies \jr{} and voters from the $1$-large group approve only $c^*$, it must be the case that $c^*$ is chosen to the committee by $f$ in the modified instance, i.e., its selection probability becomes $1$.
    
    Therefore, the selection probability change for selecting $c^*$ increased by at least $\frac{1}{k+1}$ while adding $h = \frac{n}{k^2}$ approvals.
    It means that there exists an approval added which changed the selection probability of $c^*$ by at least $\Big( \frac{1}{k+1} \Big) / \left( \frac{n}{k^2} \right) \leq \frac{k}{n}$.
    Hence the lowerbound follows.
\end{proof}

\noisyContLB*

\begin{proof}[Proof of \Cref{thm:noisy-rules-cand-cont-lb}]
    In order to show an asymptotic lower bound $\Omega\left(\frac{k^2}{n}\right)$ we will use a family of instances which is analogous to the family from the proof of \Cref{thm:JR-cand-cont-lb} but instead of $k+1$ candidates we define $k+1$ pairs of candidates. 
    In this way we will force our rules to split probability $k/(k+1)$ into a pair of candidates, hence some candidate has selection probability at most $0.5$ which later results in an increase of probability by a factor $k$ larger than in the construction with only $k+1$ candidates. Again, we remark that all single-voter changes that we carry out below only change one approval per voter, hence we are in the continuity model where $\Delta = 1$.

    Formally, we define a family of instances for every $n, k \in \mathbb{N}$ such that $n$ is divisible by $k^2$.
    We define $n = h \cdot k^2$ voters and $m = 2(k+1)$ candidates $\{ c_1, c_2, \dots, c_m\} = C$ as follows.
    We create $k+1$ groups of voters of size $\frac{n}{k} - h = \frac{n}{k} - \frac{n}{k^2}$ such that the $i$-th group of voters approves only candidates $c_{2i-1}$ and $c_{2i}$.
    In this way, the number of voters that approve at least one candidate is equal to
    $(k+1) \left(\frac{n}{k} - \frac{n}{k^2} \right)
    = n - \frac{n}{k^2} = n-h$.
    That means that exactly $h$ voters approve no candidate.
    In the instance every candidate $c \in C$ has the same approval score, i.e., $|N_c| = \frac{n}{k} - h$.
    
    First, we consider \noisygjcr~(\Cref{alg:gjcr-noisy}) on an instance from the family.
    From the definition of a rule,
    \noisygjcr outputs a random committee of size at most $k$
    which induces a set of selection probabilities $(\pi_{c})_{c \in C}$,
    where $\sum_{c \in C} \pi_{c} \leq k$.
    Since we have $2(k+1)$ candidates,
    there exists a candidate $c^* \in C$ such that
    $\pi_{c^*} \leq \frac{k}{2(k+1)} < \frac{1}{2}$.

    Now, we modify the considered instance by adding $h$ approvals to $c^*$ in votes of the voters without approvals.
    In this way, $c^*$ has $\frac{n}{k}$ supporters,
    but each other candidate has only $\frac{n}{k} - \frac{n}{k^2}$ approvals.
    It means that none of the candidates is included in $\cL$ (\cref{line:alg:noisy-gjcr:defining-l}) for all epochs with $\ell \in (k,k-1,\dots,2)$ (\cref{line:alg:noisy-gjcr:outer-loop}).
    Only when $\ell = 1$ we have $n_{c^*, 1} = \frac{n}{k} > \frac{n}{k+1}$ and $n_{c, 1} = \frac{n}{k} - \frac{n}{k^2} < \frac{n}{k+1}$ for every $c \neq c^*$.
    Therefore, \noisygjcr defines $\cL \leftarrow \{c^*\}$ (\cref{line:alg:noisy-gjcr:defining-l}) and
    since $\ell=1$ we have
    $\max \left(\sum_{d \in \cL} g_{\ell}(n_{d\ell}), \:g_{\ell}(\frac{n\ell}{k}) \right) = \max \left( g_1(n_{c^*,1}), \:g_1(\frac{n}{k}) \right) = g_1(n_{c^*,1})$.
    It implies that $P_{c^*} = 1$ (\cref{line:GJCR-noisy-sample}), i.e., $c^*$ is included in $W$ with probability~$1$.

    Therefore, the selection probability change for selecting $c^*$ increased from a value at most $\frac{1}{2}$ to $1$ while adding $h = \frac{n}{k^2}$ approvals.
    It means that there exists an approval added which changed the selection probability by at least $\frac{1-0.5}{h} = \frac{k^2}{2n}$.
    Hence the lowerbound follows for \noisygjcr.

    The arguments for \noisygjcrcap{}~(\Cref{alg:gjcr-noisy-cap}) for every $\lm \in [k]$ are the same since every such algorithm considers $\ell=1$ as the last epoch and this is the only epoch when $\cL$ is not empty (\cref{line:alg:noisy-gjcr:defining-l}).
\end{proof}

Finally, we remark that the same lower bound holds for \noisygjcrcap{}, an algorithm we introduce in \Cref{app:more-stability}. 
Because $\eps$-TV stability implies $\eps$-continuity by \Cref{obs:marginalseasier}, our continuity lower bounds imply lower bounds for $\eps$-TV stability.

\section{\texorpdfstring{Proofs Elided from \Cref{sec:more-stability}}{Proofs Elided from Section 4}} \label{app:more-stability}

\subsection{\texorpdfstring{Proofs Elided from \Cref{sec:gjcr-slack}}{Proofs Elided from Section 4.1}}

\label{app:gjcr-slack}

\begin{algorithm}
\caption{\noisygjcrslack for $\alpha \in (0,1]$}
\label{alg:gjcr-noisy-slack}
$W \gets \emptyset$\;
\For{epoch $\ell$ in $(k, k-1, \dots, 1)$}{
    \For{round $r \in [k]$} {
        $\LW \leftarrow \left\{c \in C \setminus W: \scl > \frac{n \ell}{k+1} \right\}$\;
        Sample $c\sim \LW$ with probability $P_c \defeq \frac{g_\ell(\scl)}{\max\left(\sum_{d \in \LW} g_\ell(\sdl), \: g_\ell(\frac{n\ell}{\alpha \cdot k})\right)}$, and $c \leftarrow \bot$ otherwise\;
        \If{$c \neq \bot$}{
            $W \gets W \cup \{c\}$\;
        }
    }
}
\Return $W$\;
\end{algorithm}

\gcjrSlackThm* 

\begin{proof}
We start by arguing why (i) holds. The proof that \noisygjcr{} satisfies EJR+ can be divided into three parts: Part (a) shows that \noisygjcr{} selects at most $k$ candidates. This part of the proof carries over exactly as is to \noisygjcrslack{}, since the two algorithms are the same besides the definition of the probabilities $P_c$. However, the pricing system which is used for part (a) of the proof is not dependent on these probabilities.  Part (b) shows that if in some epoch $\ell$ selecting $\bot$ receives non-zero probability, then we satisfy EJR+. This statement is not true for \noisygjcrslack{} and we will address it below. Part (c) shows that if for some epoch $\ell$, the dummy candidate $\bot$ never receives positive utility, then we select $k$ candidates in that epoch and satisfy EJR+ for that $\ell$. This is part of the proof is still true and caries over exactly as it is to  \noisygjcrslack{}, since again, it uses the pricing scheme which is independent from the probabilities $P_c$. 

We now turn to proving a relaxed version of part (b). We will argue that, for any epoch $\ell \in [k]$, after $k$ rounds of the inner loop on \cref{line:GJCR-noisy-inner-loop}, no $\alpha$-\ejrp{}-violating for candidates with an $\frac{\ell}{\alpha}$-large  group of supporting voters exists. If for any iteration $r \in [k]$ the dummy candidate $\bot$ receives positive probability, which means that $\LW = \emptyset$ or $\max_{c \in \LW} \scl <\frac{n\ell}{\alpha k}$, then, no candidate can witness a violation towards $\alpha$-\ejrp{} for an $\frac{\ell}{\alpha}$-large group. Otherwise for all rounds $r \in [k]$ we have $\sum_{d \in \LW} g_\ell(\scl) \geq g_\ell(\frac{n\ell}{\alpha k})$, and so \Cref{alg:gjcr-noisy} samples a candidate in every step, and after the $r=k$ iteration we have $\abs{W} = k$. In this case, we can continue with part (c) of the proof, as discussed above.

For claim (ii), i.e., \noisygjcrslack{} is neutral and anonymous, note that it follows directly from the definition of the algorithm, since it is only based on the approval sets and does not do any decisions depending on the identities of the agents or candidates.

Lastly, we turn towards arguing why (iii) holds. Recall that we showed this statement for \noisygjcr{} by first showing that, for any (partial) sequence $s$, the \emph{next candidate} distribution is monotone (see \Cref{lem:gjcr-mon-next}). Recall that the only difference between \noisygjcr{} and \noisygjcrcap{} is that the denominator in the definition of $P_c$ is defined differently, namely, $\frac{G}{\max(G,g_{\ell}(\frac{n\ell}{ \alpha k}))}$ instead of $\frac{G}{\max(G,g_{\ell}(\frac{n\ell}{k}))}$. However, this change does not change any of the statements regarding the behavior of these terms in the proof of \Cref{lem:gjcr-mon-next}. In fact, the proof would hold for any $P_c$ of this form, where the term in the maximum function is any constant. Thus, \Cref{lem:gjcr-mon-next} is true for \noisygjcrslack{} as well. Since \Cref{thm:gjcr-mon} uses the lemma as a blackbox, the fact that \noisygjcrslack{} is monotone follows as well. 
\end{proof}

Before we turn towards proving \Cref{thm:noisy-GJCR-slack-continuous}, we will formalize and proof a variant of \Cref{lem:gjcr-slack-stability-per-step} for \noisygjcrslack{}. Let $s$ be some sequence of samples and let $\cN_s$ and $\cN'_s$ be the \emph{next candidate} distributions of \noisygjcrslack{} given that $s$ was selected up to some point. 

\begin{lemma} \label{lem:gjcr-slack-stability-per-step}
    Let $E$ and $E'$ be two elections that differ only in the ballot of voter $v$, and let $\cN_s, \cN_s'$ and $\Delta_v$ be defined as described above. 
    For any partial sequence of samples $s$ in \Cref{alg:gjcr-noisy-slack}, it holds that  
    \begin{equation} \label{eq:gjcr-slack-noisy-round-continuity}
        \dtv{\cN_s}{\cN_s'} \leq \gamma \cdot a_\ell + 4 \cdot \exp\left(\log(\Delta_v) -a_{\ell}\frac{n\ell}{k(k+1)}\left( \frac{k+1}{\alpha} - k\right) + a_{\ell} \right).
    \end{equation}
    for some universal constant $\gamma$ (in particular, $\gamma = 4(2e-1)$).
\end{lemma}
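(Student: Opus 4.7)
The plan is to transcribe the proof of \Cref{lem:gjcr-stability-per-step} essentially verbatim, tracking only the single structural change that \noisygjcrslack{} introduces: the lower bound on the denominator $G$ of $P_c$ has been raised from $g_\ell(\frac{n\ell}{k})$ to $g_\ell(\frac{n\ell}{\alpha k})$. Because the algorithms are otherwise identical, every preliminary bound from the original proof continues to apply unchanged. I would still begin by invoking \Cref{obs:a-ell-small-wlog} to assume $a_\ell \leq 1$; keep the coordinate bound $\scl - 1 \leq \sclp \leq \scl + 1$, which implies $|g_\ell(\sclp) - g_\ell(\scl)| \leq g_\ell(\scl) \cdot a_\ell(e-1)$; and retain $|\LW \oplus \LW'| \leq \Delta_v$.

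Next I would redefine $G \defeq \max(\sum_{d \in \LW} g_\ell(\sdl),\, g_\ell(\frac{n\ell}{\alpha k}))$ (with $G'$ defined analogously) and observe that the decomposition
\[ |G' - G| \leq \sum_{c \in \LW \cap \LW'} |g_\ell(\sclp) - g_\ell(\scl)| + \sum_{c \in \LW \oplus \LW'} \max(g_\ell(\sclp), g_\ell(\scl)) \]
still holds, since its derivation did not reference the specific threshold inside the $\max$. From here the chain of inequalities is identical to the original proof up to the final step. The contribution of the $\LW \cap \LW'$ sum yields the term $\gamma a_\ell$ exactly as before, because it relies only on $\sum_{c \in \LW \cap \LW'} g_\ell(\scl)/G \leq 1$, which is preserved.

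The single place where the substitution matters is the symmetric-difference sum, where the denominator of $\max(g_\ell(\scl), g_\ell(\sclp))/\min(G, G')$ is now lower-bounded by $g_\ell(\frac{n\ell}{\alpha k})$. The numerator is still bounded by $g_\ell(\frac{n\ell}{k+1} + 1)$, since $\max(\scl, \sclp) \leq \frac{n\ell}{k+1} + 1$ on the symmetric difference. Combining $g_\ell(x) = e^{a_\ell x}$ with the routine identity $\frac{n\ell}{\alpha k} - \frac{n\ell}{k+1} = \frac{n\ell}{k(k+1)}\left(\frac{k+1}{\alpha} - k\right)$ transforms the ratio into
\[ \exp\!\left(-a_\ell \frac{n\ell}{k(k+1)}\left(\frac{k+1}{\alpha} - k\right) + a_\ell\right), \]
which, after multiplication by $|\LW \oplus \LW'| \leq \Delta_v$ and combination with the $\gamma a_\ell$ term, gives \eqref{eq:gjcr-slack-noisy-round-continuity}.

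I do not anticipate a genuine obstacle here: the proof is pure bookkeeping once one recognizes that the only change is the threshold inside the $\max$ that defines $G$, and that this threshold affects exclusively the lower bound on $\min(G, G')$ used in the symmetric-difference term. A convenient sanity check is that at $\alpha = 1$ the factor $\frac{k+1}{\alpha} - k$ collapses to $1$, so we exactly recover \Cref{lem:gjcr-stability-per-step}.
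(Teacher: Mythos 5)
Your proposal is correct and follows essentially the same route as the paper's own proof, which is likewise a verbatim transcription of the proof of \Cref{lem:gjcr-stability-per-step} in which the only substantive change is that $\min(G,G') \geq g_\ell\bigl(\tfrac{n\ell}{\alpha k}\bigr)$ replaces $g_\ell\bigl(\tfrac{n\ell}{k}\bigr)$ in the symmetric-difference term. Your identification of exactly where the new threshold enters, the exponent computation via $\tfrac{n\ell}{\alpha k} - \tfrac{n\ell}{k+1} = \tfrac{n\ell}{k(k+1)}\bigl(\tfrac{k+1}{\alpha}-k\bigr)$, and the $\alpha=1$ sanity check all match the paper.
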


The proof is largely analogous to one of \Cref{lem:gjcr-stability-per-step}. 

\begin{proof}[Proof of \Cref{lem:gjcr-slack-stability-per-step}]
    We begin with an observation about the regime of $a_\ell$ for which this statement requires proof.
    \begin{observation} \label{obs:a-ell-small-wlog2}
        If $a_\ell \geq 1$, then \Cref{lem:gjcr-slack-stability-per-step} is trivially satisfied (e.g., with $\gamma=1$).
    \end{observation}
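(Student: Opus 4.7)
The plan is to observe that this is a one-line remark whose content is that the statement of \Cref{lem:gjcr-slack-stability-per-step} becomes vacuous in the regime $a_\ell \geq 1$, thanks to the trivial upper bound $\dtv{\cdot}{\cdot} \leq 1$ that holds for any pair of probability distributions. This is the direct analogue of \Cref{obs:a-ell-small-wlog}, whose proof is likewise a single sentence, so I would pattern the argument on that one.

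Concretely, I would fix the choice $\gamma = 1$ suggested in the parenthetical, and then read off the right-hand side of \eqref{eq:gjcr-slack-noisy-round-continuity}: it is a sum of two terms, the first being $\gamma \cdot a_\ell = a_\ell$ and the second being $4 \cdot \exp(\,\cdot\,)$, which is strictly positive because the exponential of a real number is positive. Under the hypothesis $a_\ell \geq 1$, the first term alone is already at least $1$, and so the sum is at least $1$. Since the total variation distance between any two probability distributions on a common measurable space satisfies $\dtv{\cN_s}{\cN_s'} \leq 1$, the desired inequality $\dtv{\cN_s}{\cN_s'} \leq \gamma \cdot a_\ell + 4 \exp(\cdots)$ holds automatically, with no reference to the specific structure of \noisygjcrslack{}.

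There is essentially no obstacle here: the observation is not using any property of the algorithm, the profile change, or the parameter $\alpha$, and so the argument is uniform over all such inputs. The only minor thing to check is that the exponential term is well-defined, which is immediate since $\log(\Delta_v)$ requires only $\Delta_v \geq 1$ (and if $\Delta_v = 0$ then $E = E'$ and the inequality is trivial anyway) and all other quantities in the exponent are finite real numbers. After this short verification, the observation justifies the assumption $a_\ell \leq 1$ used in the remainder of the proof of \Cref{lem:gjcr-slack-stability-per-step}.
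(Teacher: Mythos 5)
Your proposal is correct and matches the paper's argument exactly: the paper justifies this observation with the single remark that $\dtv{\cdot}{\cdot} \leq 1$, which is precisely the bound you invoke after noting that $\gamma \cdot a_\ell \geq 1$ when $\gamma = 1$ and $a_\ell \geq 1$. Your additional checks (positivity of the exponential term, well-definedness) are harmless elaborations of the same one-line argument.
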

    This holds because $\dtv{\cdot}{\cdot} \leq 1$. Therefore, we will assume that $a_{\ell} \leq 1$ holds in the following.

    We first establish a pair of useful bounds and notation. In the following we assume that the sequence $s$ of length $T$ is fixed and all of the following notation is under the assumption that the algorithm is in step $T+1$ and $\alg(E,T) = \alg(E',T) = s$, e.g., we denote by $\scl$ ($\sclp$, respectively) the variable in step $T+1$ of \Cref{alg:gjcr-noisy} when running it for input $E$ ($E'$, respectively). 
    Then, 
    \begin{equation} \label{eq:gjcr-delta-vals-general-app}
        \scl - 1 \leq \sclp \leq \scl + 1, 
    \end{equation}
    which holds because the only voter that can change membership in $\Scl$ when going from $E$ to $E'$ is voter $v$. 
    From \eqref{eq:gjcr-delta-vals-general-app}, the definition of $g_\ell$, and $a_\ell \leq 1$ (see \Cref{obs:a-ell-small-wlog2}) we have that for $c\in C$,
    \begin{equation} \label{eq:gjcr-gdelta-bound-app}
        \abs{g_\ell(\sclp)- g_\ell(\scl)} \leq  g_\ell(\scl) \cdot \max(1 - e^{-a_\ell}, \:e^{a_\ell} - 1)  \leq g_\ell(\scl) \cdot a_\ell \cdot (e-1).
    \end{equation}
    We will also apply the fact that 
    \begin{equation}
    |\LW \oplus \LW'| \leq \Delta_v, \label{eq:delta-app}
    \end{equation}
     which is due to the fact that $c \in \LW \oplus \LW'$ implies $\scl \neq \sclp$, which in turn implies that voter $v$ votes differently for $c$ in the two elections $E$ and $E'$. For notational convenience, we define 
    \begin{equation}
        G \defeq \max\left(\sum_{d \in \LW} g_\ell(\sdl), \: g_\ell\left(\frac{n\ell}{\alpha k}\right)\right)
    \end{equation}
    to be the denominator of the probabilities in \cref{line:GJCR-noisy-sample}, and define $G'$ analogously for $\sdlp$ and $\LW'$. We observe that \begin{equation}
        |G' - G| \leq \sum_{c \in \LW' \cap \LW} \abs{g_\ell(\sclp) - g_\ell(\scl)} + \sum_{c \in \LW' \oplus \LW} \max(g_\ell(\sclp),g_\ell(\scl)). \label{eq:Gdifference-app}
    \end{equation}

    {We are now ready to bound the difference between $\cN_s$ and $\cN_s'$. Since $P_\bot = 1 - \sum_{c \in C} P_c$, to bound $\dtv{\cN_s}{\cN_s'}$ we may incur a factor of two and focus on $c\in C$. Using this, we have

    \begin{align}
        \frac{1}{2}\dtv{\cN_s}{\cN_s'}
        &\leq  \sum_{c \in C} \abs{P_c' - P_c} %
        =  \sum_{c \in \LW' \cup \LW} \abs{P_c' - P_c} \notag \\
        &=  \sum_{c \in \LW' \cap \LW} \abs{\frac{g_\ell(\sclp)}{G'} - \frac{g_\ell(\scl)}{G}} +  \sum_{c \in \LW' \oplus \LW} \max(P_c', \: P_c). \notag
        \intertext{We now apply \Cref{fac:quotientdiff} to the first sum, which yields} 
        &\leq \sum_{c \in \LW' \cap \LW} g_\ell(\sdlp) \abs{\frac{1}{G'} - \frac{1}{G}} + \frac{1}{G} \sum_{c \in \LW' \cap \LW} \abs{g_\ell(\sclp) - g_\ell(\scl)} +  \sum_{c \in \LW' \oplus \LW} \max(P_c', \: P_c). \notag \\ 
        \intertext{For the next inequality, we rewrite the first sum as $\sum_{c \in \LW' \cap \LW} \frac{g_\ell(\sclp)}{G'} \frac{\abs{G' - G}}{G}$ which is smaller or equal to $ \frac{\abs{G' - G}}{G}$ since $\sum_{c \in \LW' \cap \LW} g_\ell(\sclp) \leq G'$. We also upper bound the second sum, which will become helpful later on. We get }
        &\leq \frac{\abs{G' - G}}{G} + \frac{1}{G} \sum_{c \in \LW' \cap \LW} \abs{g_\ell(\sclp) - g_\ell(\scl)} + \sum_{c \in \LW' \oplus \LW} \frac{\max(g_\ell(\scl),g_\ell(\sclp))}{\min(G,G')}. \notag \\ 
        \intertext{Next, we apply our previously established upper bound on $|G'-G|$ from \eqref{eq:Gdifference-app} and directly combine these terms with the right-hand-side of the expression.}
        &\leq 2\left(\frac{1}{G} \cdot \sum_{c \in \LW' \cap \LW} \abs{g_\ell(\sclp) - g_\ell(\scl)} + \sum_{c \in \LW' \oplus \LW} \frac{\max(g_\ell(\scl),g_\ell(\sclp))}{\min(G,G')} \right). \notag \\ 
        \intertext{Then, we apply \eqref{eq:gjcr-gdelta-bound-app} to the first sum. Moreover, we upper bound each summand of the second sum by the same upper bound, using the following  observation: When $c \in \LW' \oplus \LW$, then $\max(\scl,\sclp) \leq \frac{n\ell}{k+1} + 1$, which holds since \scl{} and \sclp{} can only differ by $1$. Also, $\min(G,G') \geq g_\ell(\frac{n\ell}{k})$. Lastly, by \eqref{eq:delta-app}, we know that the number of summands is bounded by $|\LW'\oplus\LW| \leq \Delta_v$. This yields }
        &\leq 2 \left((e-1) \cdot a_\ell \cdot \sum_{c \in \LW'\cap \LW} \frac{g_{\ell}(\scl)}{G} + \Delta_v \frac{g_{\ell}\left(\frac{n\ell}{k+1}+ 1\right) }{g_{\ell}(\frac{n\ell}{ \alpha k})} \right), \notag \\ 
        \intertext{where we can upper bound the sum by $1$ again and rewrite the right-hand-side by applying the definition of $g_{\ell}$. This yields}
        & \leq 2\left( (e-1) \cdot a_{\ell} + \exp\left(\log(\Delta_v) - a_{\ell} \frac{n\ell}{k(k+1)} \left( \frac{k+1}{\alpha} - k\right) + a_{\ell} \right) \right), \notag
    \end{align}    
    which, after multiplying the entire inequality by the factor of $2$, yields the claim.}
    \end{proof}

We now discuss our choice of $a_\ell$. 
We %
define 
\begin{equation} %
    a_\ell \defeq \frac{k+1}{n\ell \left(1 - \alpha + \frac{2-\alpha}{k} \right)}\log\left(n \cdot \Delta \right). 
\end{equation} 

Recall that by \Cref{obs:a-ell-small-wlog}, the lemma is trivially true for all $a_\ell \geq 1$. Therefore, we assume in the following that $a_\ell \leq 1$. Moreover, observe that $\frac{1}{\alpha} \geq 2 - \alpha$. Therefore, the second term can be upper bounded by

\begin{align*}
    \exp\left(\log(\Delta_v) -a_\ell\frac{n\ell}{k(k+1)}\left(\frac{k+1}{\alpha} - k\right) + a_\ell \right) &\leq \exp\left(\log(\Delta) -a_\ell\frac{n\ell}{k(k+1)}\left(\frac{k+1}{\alpha} - k\right) + a_\ell \right) \\
    & \leq \exp\left(\log(\Delta) -a_\ell\frac{n\ell}{k(k+1)}k\left(1 - \alpha + \frac{2-\alpha}{k} \right) + a_\ell \right) \\ 
    &=\exp \left( \log(\Delta) - \log\left(n \cdot \Delta \right) + a_\ell \right) \\
    &= \exp\left(\log n + a_\ell \right) = \frac{1}{n} \cdot e^{a_\ell}\\
    & \leq \frac{e}{n} \leq a_\ell \cdot e.
\end{align*}
Plugging this into \Cref{lem:gjcr-stability-per-step} yields a bound of $4(2e -1) a_{\ell}$  on the TV-distance of $\cN_s$ and $\cN'_s$.

\begin{theorem}\label{lem:GJCR-slack-sequence-stable}
    Assuming all single-voter profile changes satisfy $\Delta_v \leq \Delta$,
    the \noisygjcrslack{} sequence distribution $\mathcal{S}_E$ is $\eps$-TV-stable for $$\eps = \gamma \cdot \frac{k^2}{n\left(1 - \alpha + \frac{2-\alpha}{k} \right)} \cdot \log k \cdot \left(\log n + \log \Delta \right)\text{, for some universal constant }\gamma.$$
\end{theorem}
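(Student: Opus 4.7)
The plan is to mirror the proof strategy of \Cref{lem:noisy-GJCR-sequence-stable} from the main text, but now invoking the per-step bound from \Cref{lem:gjcr-slack-stability-per-step} together with the choice of $a_\ell$ prescribed in \Cref{app:gjcr-slack}. The calculation preceding the statement already shows that, for this choice of $a_\ell$, the per-step TV distance between the next-candidate distributions $\cN_s$ and $\cN_s'$ is bounded by $4(2e-1)\,a_\ell$. All that remains is to lift this per-step bound to a bound on the TV distance of the length-$k^2$ sequence distributions.

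First, for each partial sequence $s$ let $\cM_s$ denote an optimal coupling of $\cN_s$ and $\cN_s'$ (guaranteed by \Cref{lem:coupling}). I would then build a coupling $\cT$ of $\cS_E$ and $\cS_{E'}$ by the same sequential construction as in the proof of \Cref{lem:noisy-GJCR-sequence-stable}: pre-sample $(u_s,u_s') \sim \cM_s$ for every partial sequence $s$, and iteratively grow $s \leftarrow s + u_s$ and $s' \leftarrow s' + u_{s'}'$ starting from the empty sequence. This is a valid coupling because each appended element has the correct conditional marginal under $\cM_s$, so the aggregate sequences are distributed according to $\cS_E$ and $\cS_{E'}$.

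Second, I would apply the Coupling Lemma together with a union bound over the $k^2$ steps of \Cref{alg:gjcr-noisy-slack}. Writing $s\vert t$ for the length-$t$ prefix of $s$,
\[
\dtv{\cS_E}{\cS_{E'}} \leq \probover{(s,s')\sim\cT}{s \neq s'} \leq \sum_{t \in [k^2]} \dtv{\cN_{s\vert t}}{\cN_{s'\vert t}}.
\]
Plugging in the per-step bound $4(2e-1)\,a_\ell$ coming from \Cref{lem:gjcr-slack-stability-per-step} combined with the calculation immediately preceding the statement (which handled the exponential term via the choice of $a_\ell$), and rewriting the sum in terms of the epoch $\ell$ and round $r$ gives
\[
\dtv{\cS_E}{\cS_{E'}} \leq 4(2e-1) \sum_{r=1}^{k} \sum_{\ell=1}^{k} a_\ell = 4(2e-1)\,k \sum_{\ell=1}^{k} \frac{k+1}{n\ell\bigl(1-\alpha + \tfrac{2-\alpha}{k}\bigr)} \log(n\Delta).
\]
Bounding the harmonic sum $\sum_{\ell=1}^{k} \tfrac{1}{\ell} = O(\log k)$ and using $\log(n\Delta) = \log n + \log \Delta$ yields the claimed bound.

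I do not expect a genuine obstacle here: all technical difficulty is already absorbed into \Cref{lem:gjcr-slack-stability-per-step} and the choice of $a_\ell$. The only mild care is to confirm that the $a_\ell \leq 1$ regime assumed in the per-step calculation is either satisfied or yields a trivially sharper bound via \Cref{obs:a-ell-small-wlog2}; in the opposite regime, the per-step TV distance is at most $1$ and the conclusion follows a fortiori. The structural argument is essentially identical to the \noisygjcr{} case, since the only difference between \Cref{alg:gjcr-noisy} and \Cref{alg:gjcr-noisy-slack} is the threshold constant inside the denominator of $P_c$, which has already been accounted for in the per-step lemma.
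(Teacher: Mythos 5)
Your proposal is correct and follows essentially the same route as the paper's own proof: the identical sequential coupling construction from the \noisygjcr{} case, a union bound over the $k^2$ steps, and the per-step bound $4(2e-1)a_\ell$ from \Cref{lem:gjcr-slack-stability-per-step} with the modified $a_\ell$, summed via the harmonic number. No gaps; the remark about the $a_\ell \geq 1$ regime being trivial matches the paper's treatment as well.
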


\begin{proof}
Again, this proof follows almost exactly like the proof of \Cref{lem:GJCR-slack-sequence-stable}. The first part of the proof which defines the coupling $\cT$ can be followed exactly as is. For completeness, we do state the last part of the proof here, where we apply the union bound for the new parameter $a_{\ell}$. 

Now, let $s\vert t$ denote the length-$t$ prefix of $s$. For this coupling $\cT$ we have
    \begin{align}
        \probover{(s,s')\sim\cT}{s \neq s'} &= \prob{\bigvee_{t \in [k^2]} \left[ u_{s\vert t} \neq u_{s'\vert t}' \right]} \notag \\
        &\leq \sum_{t \in [k^2]} \probover{\cM_{s \vert t}}{u_{s\vert t} \neq u_{s\vert t}'} \notag \\
        &= \sum_{t \in [k^2]} \dtv{\cN_{s\vert t}}{\cN_{s'\vert t}}, \notag 
        \intertext{by a union bound and using that the $\cM_s$ are optimal couplings.
        We now apply \Cref{lem:gjcr-stability-per-step} with our choice of $a_\ell$ from \eqref{eq:gjcr-a-defn}. We also rewrite the sum in terms of epoch $\ell$ and round $r$ instead of using the flattened index $t$. 
        }
        &\leq \sum_{r=1}^k \sum_{\ell = 1}^k 4(2 e - 1) \cdot a_\ell \notag \\
        &= 4(2 e - 1) \sum_{r=1}^k \sum_{\ell = 1}^k  \frac{k+1}{n\ell \left(1 - \alpha + \frac{2-\alpha}{k} \right)}\left(\log n + \log \Delta \right) \notag \\
        &= 4(2 e - 1) \sum_{\ell = 1}^k \frac{k(k+1)}{n\ell\left(1 - \alpha + \frac{2-\alpha}{k} \right)}\left( \log n + \log \Delta \right) \notag \\
        &= O\left( \frac{k^2}{n \left(1 - \alpha + \frac{2-\alpha}{k} \right)} \cdot \log k \cdot \left(\log n + \log \Delta\right) \right) \notag, 
    \end{align}
    where the last equation follows from upper bounding the harmonic number by the logarithm. 
\end{proof}

Given the proof of \Cref{lem:GJCR-slack-sequence-stable}, the following theorem follows as a direct corollary. 

\thmgjcrslackcontinuous*

\subsection{\texorpdfstring{Proofs Elided from \Cref{sec:noisy-greedy-JR}}{Proofs Elided from Section 4.2}}

\label{app:gjcr-capped}

\begin{algorithm}
\caption{\noisygjcrcap}
\label{alg:gjcr-noisy-cap}
$W \gets \emptyset$\;
\For{epoch $\ell$ in $(\ell_{max}, \ell_{max}-1, \dots, 1)$}{ \label{line:alg:gjcr-noisy-cap:outer-loop}
    \For{round $r \in [k]$} { \label{line:gjcr-noisy-cap:inner-loop}
        $\LW \leftarrow \left\{c \in C \setminus W: \scl > \frac{n \ell}{k+1} \right\}$\;\label{line:alg:gjcr-noisy-cap:defining-l} 
        Sample $c\sim \LW$ with probability $P_c \defeq \frac{g_\ell(\scl)}{\max\left(\sum_{d \in \LW} g_\ell(\sdl), \: g_\ell(\frac{n\ell}{k})\right)}$, and $c \leftarrow \bot$ otherwise\; \label{line:gjcr-noisy-cap:sample}
        \If{$c \neq \bot$}{
            $W \gets W \cup \{c\}$\;
        }
    }
}
\Return $W$\;
\end{algorithm}

\gcjrCappedThm*

\begin{proof}
    We first note that (ii) follows directly from the definition of the algorithm, since it is only based on the approval sets and does not do any decisions depending on the identities of the agents or candidates.

    We now turn towards arguing why (i) holds. This proof is a straightforward adaptation of our proof of \Cref{thm:gjcr-noisy-correct}.
    The fact that the output committee satisfies $\abs{W} \leq k$ follows because of the exact same arguments as in the proof of \Cref{thm:gjcr-noisy-correct}. 
    
    Next for showing that \noisygjcrcap{} satisfies \lm-capped \ejrp{}, we observe that the proof of \Cref{thm:gjcr-noisy-correct} argues that after any epoch $\ell \in [k]$ (which consists of $k$ rounds), no EJR+ violation of an $\ell$-large group can exist. The very same arguments go through for \Cref{alg:gjcr-noisy-cap} and for all $\ell \in [\lm]$. It follows immediately that \noisygjcrcap{} satisfies $\lm$-capped EJR+.

    Lastly, we argue why \noisygjcrcap{} is monotone. Recall that we showed this statement for \noisygjcr{} by first showing that, for any (partial) sequence $s$, the \emph{next candidate} distribution is monotone (see \Cref{lem:gjcr-mon-next}). With respect to this proof, the only difference between \noisygjcr{} and \noisygjcrcap{} is that a complete sequence for \noisygjcr{} is of length $k^2$, while a complete sequence for \noisygjcrcap{} is of length $\lm \cdot k$. Other than that, the proof of \Cref{lem:gjcr-mon-next} and the proof of \Cref{thm:gjcr-mon} work exactly the same for \noisygjcrcap{}. 
\end{proof}

\begin{theorem}\label{lem:GJCR-capped-sequence-stable}
    Assuming all single-voter profile changes satisfy $\Delta_v \leq \Delta$,
    the \noisygjcrcap{} sequence distribution $\mathcal{S}_E$ is $\eps$-TV-stable for $$\eps = \gamma \cdot \frac{k^3}{n}(\log(\lm) + 1) \cdot \left(\log n + \log \Delta \right)\text{, for some universal constant }\gamma.$$
\end{theorem}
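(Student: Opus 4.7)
The plan is to mirror the proof of \Cref{lem:noisy-GJCR-sequence-stable} directly, exploiting the fact that \noisygjcrcap{} differs from \noisygjcr{} only by restricting the outer loop on \cref{line:alg:gjcr-noisy-cap:outer-loop} to $\ell \in [\lm]$, while the sampling rule on \cref{line:gjcr-noisy-cap:sample} has exactly the same form. Consequently, the per-step TV bound of \Cref{lem:gjcr-stability-per-step} carries over verbatim: for any partial sequence $s$ arising under either algorithm in epoch $\ell$, the next-candidate distributions $\cN_s, \cN_s'$ satisfy the same inequality, and the analysis justifying the choice of $a_\ell$ in \eqref{eq:gjcr-a-defn} is unchanged.

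First, I would fix elections $E, E'$ differing in a single voter $v$ with $\abs{A_v \oplus A_v'} \leq \Delta_v \leq \Delta$, and for every partial sequence $s$ take $\cM_s$ to be an optimal coupling of $\cN_s$ and $\cN_s'$ guaranteed by \Cref{lem:coupling}. Next I would build a coupling $\cT$ of the two sequence distributions $\mathcal{S}_E$ and $\mathcal{S}_{E'}$ exactly as in \Cref{lem:noisy-GJCR-sequence-stable}: pre-sample $(u_s, u_s')\sim \cM_s$ for every partial $s$, and iteratively extend $s \leftarrow s + u_s$ and $s' \leftarrow s' + u_{s'}'$ until both complete sequences are formed. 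By construction the marginals are $\mathcal{S}_E$ and $\mathcal{S}_{E'}$.

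The only change from the \noisygjcr{} analysis appears in the final union bound. Since \noisygjcrcap{} performs $\lm \cdot k$ inner samples rather than $k^2$, the Coupling Lemma gives
\[
    \dtv{\mathcal{S}_E}{\mathcal{S}_{E'}} \leq \probover{(s,s')\sim\cT}{s \neq s'} \leq \sum_{r=1}^{k} \sum_{\ell=1}^{\lm} \dtv{\cN_{s\vert t}}{\cN_{s'\vert t}}.
\]
For $a_\ell$ as in \eqref{eq:gjcr-a-defn}, \Cref{lem:gjcr-stability-per-step} bounds each summand by $O(a_\ell) = O\!\left(\tfrac{k(k+1)}{n\ell}(\log n + \log \Delta)\right)$, so
\[
    \dtv{\mathcal{S}_E}{\mathcal{S}_{E'}} = O\!\left( \tfrac{k^2(k+1)}{n}(\log n + \log \Delta) \sum_{\ell=1}^{\lm} \tfrac{1}{\ell} \right) = O\!\left( \tfrac{k^3}{n}(\log(\lm)+1)(\log n + \log \Delta)\right),
\]
using $\sum_{\ell=1}^{\lm}\tfrac{1}{\ell} \leq \log(\lm)+1$. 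This yields the claimed $\eps$.

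There is no real obstacle: the per-step bound, the coupling construction, and the union bound are all inherited from the \noisygjcr{} analysis, and the only quantitative change is that the harmonic sum is truncated at $\lm$ rather than at $k$, which is precisely what replaces the $\log k$ factor by $\log(\lm)+1$ in the final expression. The only subtle point to verify is that \Cref{lem:gjcr-stability-per-step} is not sensitive to the outer loop range, which is immediate from its proof since it concerns only a single sampling step given a fixed partial sequence $s$ and the same choice of $a_\ell$.
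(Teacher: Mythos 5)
Your proposal is correct and follows essentially the same route as the paper's proof: reuse the per-step bound of \Cref{lem:gjcr-stability-per-step} (which is insensitive to the outer loop range), construct the same prefix-wise optimal coupling of the sequence distributions, and union bound over the $k\cdot\lm$ sampling steps so that the harmonic sum truncates at $\lm$, replacing $\log k$ with $\log(\lm)+1$. No gaps.
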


\begin{proof}
    Recall how we proved the continuity guarantee for \noisygjcr{}: We consider two elections $E$ and $E'$ that only differ in one voter $v$ who differs in $\Delta_v = |A_v \oplus A'_v|$ approvals. Then, we showed in \Cref{lem:gjcr-stability-per-step} that if we fix a sequence of samples $s$, then the \emph{next candidate} distributions $\mathcal{N}_s$ and $\mathcal{N}'_s$ have a bounded TV-distance. Again, the proof of this lemma goes through exactly as is for the case of \noisygjcrcap{}, with the only difference again that the maximum value of $\ell$ is $\lm$ and sequences are of length at most $k \lm$.

    Now, we turn towards proving a variant of \Cref{lem:noisy-GJCR-sequence-stable} for \noisygjcrcap{}. That is, we want to argue about the stability of the sequence distribution $\cS_{E}$. As before, let $E$ and $E'$ be two elections that differ only in one voter $v$, and let $\Delta_v = |A_v \oplus A'_v|$. We use statistical coupling and the coupling lemma to make our proof rigorous. 
    (Recall \Cref{def:coupling} and \Cref{lem:coupling}.)
    For any (possibly partial) sequence, consider the distributions $\cN_s$ and $\cN_s'$ over $\Omega = C \cup \{\bot\}$. 
    Moreover, let $\cM_s$ be the optimal coupling of $\cN_s$ and $\cN_s'$ given by \Cref{lem:coupling}. We now define a coupling $\cT$ for $\cS_E$ and $\cS_{E'}$ via an algorithm which constructs a sample $(s, s') \sim \cT$:
    \begin{enumerate}
        \item For \emph{every} partial sequence $s$, let $(u_s, u_s')$ be a sample from $\cM_s$. %
        \item Starting at $s=()$ and $s'=()$, iteratively construct $s \leftarrow s + u_{s}$ and $s' \leftarrow s' + u_{s'}'$.
        \item Return $(s, s')$.
    \end{enumerate}
    
    This is a coupling of $\cS_E$ and $\cS_{E'}$ because for every partial sequence $s$, the sample which is appended to $s$ is the $u_s$ in the joint sample $(u_s, u_s')\sim \cM_s$, and so it is distributed according to $\cN_s$. Therefore the final sequence $s$ is distributed according to $\cS_E$, and the same is true for $s'$ and $\cS_{E'}$.

    Now, let $s\vert t$ denote the length-$t$ prefix of $s$. For this coupling $\cT$ we have
    \begin{align}
        \probover{(s,s')\sim\cT}{s \neq s'} &= \prob{\bigvee_{t \in [k \cdot \lm]} \left[ u_{s\vert t} \neq u_{s'\vert t}' \right]} \notag \\
        &\leq \sum_{t \in [k \cdot \lm]} \probover{\cM_{s \vert t}}{u_{s\vert t} \neq u_{s\vert t}'} \notag \\
        &= \sum_{t \in [k \cdot \lm]} \dtv{\cN_{s\vert t}}{\cN_{s'\vert t}}, \notag 
        \intertext{by a union bound and using that the $\cM_s$ are optimal couplings.
        We now apply \Cref{lem:gjcr-stability-per-step} with our choice of $a_\ell$ from \eqref{eq:gjcr-a-defn}. We also rewrite the sum in terms of epoch $\ell$ and round $r$ instead of using the flattened index $t$. 
        }
        &\leq \sum_{r=1}^k \sum_{\ell = 1}^{\lm} 4(2 e - 1) \cdot a_\ell \notag \\
        &= 4(2 e - 1) \sum_{r=1}^k \sum_{\ell = 1}^{\lm}  \frac{k(k+1)}{n\ell}\left(\log n + \log \Delta \right) \notag \\
        &= 4(2 e - 1) \sum_{\ell = 1}^{\lm} \frac{k^2(k+1)}{n\ell}\left(\log n + \log \Delta \right) \notag \\
        &= 4(2 e - 1) \frac{k^3}{n} \cdot (\log(\lm) + 1) \cdot \left(\log n + \log \Delta \right) \notag, 
    \end{align}
    where the last equation follows from bounding the harmonic number of $\lm$ by $\log(\lm) + 1$. 
\end{proof}

Given the proof of \Cref{lem:GJCR-capped-sequence-stable}, the following theorem follows as a direct corollary. 

\thmgjcrcappedcontinuous*

\subsection{\texorpdfstring{Proofs Elided from \Cref{sec:exp-PAV}}{Proofs Elided from Section 4.3}}

\label{app:softmax-pav}
\noisyPAVmonotone* 

\begin{proof}[Proof of \Cref{thm:exp-PAV-monotonicity}]
    Since (i) and (ii) follows immediately from the definition of the algorithm, we focus on (iii). As in the proof of \Cref{thm:exp-pav-continuous}, we start with an arbitrary election $E$ and let $E'$ be the same election with one approval $E_{ij}$ flipped from $0$ to $1$. Let $c=j$ denote the candidate for which this switch occurs. 
    Let $\pavp{W}$ denote the PAV score of $W$ under $E'$. 
    For convenience, let $\mathcal{W}$ and $\mathcal{W}'$ denote $\W_E$ and $\W_{E'}$, respectively, and let $\W_c$ and $\W_{\overline c}$ denote the committees in $\W$ containing $c$ and not containing $c$, respectively. (Likewise for $\W_c'$ and $\W_{\overline c}'$.)

    For the property \ejrp{} an additional approval of some candidate $c$ only `helps' committees containing $c$, and only `hurts' committees not containing $c$; in particular, $\W_c \subseteq \W_c'$ and $\W_{\overline c}' \subseteq \W_{\overline c}$.
    As a result, we have
    \begin{align*}
        g(\pavp{W}) &= g(\pav{W}) %
        \qquad &\text{for $W \in \W_{\overline c}' \cap \W_{\overline c} = \W_{\overline c}'$}, \\
        g(\pavp{W}) &= g(\pav{W})\cdot e^{a\abs{A_v' \cap W}^{-1}} \qquad &\text{for $W \in \W_c' \cap \W_c = \W_c$.}
    \end{align*}
    
    Then for candidate $c$, we bound the probability a committee containing it is chosen under $E'$ by
    \begin{align*}
        \pi_c' &= \frac{\sum_{W \in \W_c'} g(\pavp{W})}{\sum_{W \in \W'} g(\pavp{W})} \\
        &= \frac{\sum_{W \in \W_c} g(\pav{W}) \cdot e^{a\abs{A_v' \cap W}^{-1}} + \sum_{W \in \W_c' \setminus \W_c} g(\pavp{W})} 
        {\sum_{W \in \W_c} g(\pav{W}) \cdot e^{a\abs{A_v' \cap W}^{-1}} + \sum_{W \in \W_c' \setminus \W_c} g(\pavp{W}) + \sum_{W \in \W_{\overline c}'} g(\pav{W})} \\
        &\geq \frac{\sum_{W \in \W_c} g(\pav{W}) \cdot e^{a\abs{A_v' \cap W}^{-1}}} 
        {\sum_{W \in \W_c} g(\pav{W}) \cdot e^{a\abs{A_v' \cap W}^{-1}} + \sum_{W \in \W_{\overline c}'} g(\pav{W})} \\
        &\geq \frac{\sum_{W \in \W_c} g(\pav{W})} 
        {\sum_{W \in \W_c} g(\pav{W}) + \sum_{W \in \W_{\overline c}'} g(\pav{W})} \\
        &\geq \frac{\sum_{W \in \W_c} g(\pav{W})} 
        {\sum_{W \in \W_c} g(\pav{W}) + \sum_{W \in \W_{\overline c}'} g(\pav{W}) + \sum_{W \in \W_{\overline c}\setminus \W_{\overline c}'} g(\pav{W})} \\
        &= \frac{\sum_{W \in \W_c} g(\pav{W})}{\sum_{W \in \W} g(\pav{W})} \\
        &= \pi_c,
    \end{align*}
    where we repeatedly applied the facts that $\frac{a+c}{b+c} \geq \frac{a}{b}$ and $\frac{a}{b} \geq \frac{a}{b+c}$ for $a, b, c > 0$ and $a \leq b$.
\end{proof}

\begin{restatable}{lemma}{noisyPAVstable} \label{lem:exp-pav-stable}
    \noisypav{} (\Cref{alg:exp-PAV}) is $\eps$-TV-stable for $\eps=a\cdot \hat\delta \cdot \gamma$, where $a = \frac{k^2}{n}\log(\frac{m^k n}{k^2})$, $\gamma = 5e^2$ is a constant, and the bound on the change to the approvals of voter $v$ is is measured by $\hat\delta \defeq \harm(\min(\abs{\Delta}, \:k)) \leq \log k + 1$.
\end{restatable}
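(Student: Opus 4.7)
The plan is to bound $\dtv{\cD_{f(E)}}{\cD_{f(E')}}$ by introducing an intermediate unrestricted exponential distribution and applying the triangle inequality. Let $\mu(W) \propto e^{a \cdot h(W)}$ and $\mu'(W) \propto e^{a \cdot h'(W)}$ denote the full exponential distributions over $\binom{C}{k}$ (ignoring the EJR+ restriction), where $h'$ is the PAV score under $E'$. Then
\[
    \dtv{\cD_{f(E)}}{\cD_{f(E')}} \leq \dtv{\cD_{f(E)}}{\mu} + \dtv{\mu}{\mu'} + \dtv{\mu'}{\cD_{f(E')}}.
\]
I would begin by showing the per-committee PAV-score closeness $|h(W) - h'(W)| \leq \hat\delta$: only voter $v$'s contribution $\harm(|A_v \cap W|)$ differs between $E$ and $E'$, and since $||A_v \cap W| - |A_v' \cap W|| \leq \min(|\Delta|, k)$ while $|\harm(a) - \harm(b)| \leq \harm(|a-b|)$, the bound follows. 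This already controls the middle term $\dtv{\mu}{\mu'} \leq (e^{2a\hat\delta}-1)/2 = O(a\hat\delta)$ via the standard exponential-mechanism computation, since both the per-committee weights and the partition function shift by at most a factor $e^{a\hat\delta}$.

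The bulk of the work goes into bounding the outer terms $\dtv{\cD_{f(E)}}{\mu} = \mu(\overline{\W})$ and the symmetric $\mu'(\overline{\W'})$. For this I would establish a PAV gap: for every $W \notin \W$, $h(W) \leq H^* - \Omega(n/k^2)$, where $H^* = \max_{W' \in \binom{C}{k}} h(W')$. Given an EJR+ violation $(c, \ell, N'')$ at $W$, adding $c$ to $W$ increases PAV by at least $|N''|/\ell \geq n/k$, since each $i \in N''$ has $|A_i \cap W| < \ell$ and therefore contributes at least $1/\ell$ to the gain. The best $k$-subset of the size-$(k+1)$ set $W \cup \{c\}$ then loses at most $n/(k+1)$ in PAV, because the sum $\sum_{c'' \in W \cup \{c\}} [h(W \cup \{c\}) - h((W \cup \{c\}) \setminus \{c''\})]$ telescopes to at most $n$, so its minimum summand is at most $n/(k+1)$. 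Hence some $k$-committee attains PAV at least $h(W) + n/(k(k+1))$, and since PAV-maximizers satisfy EJR+~\cite{brill23robust}, $H^*$ is attained inside $\W$, giving the gap.

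Combining with the lower bound $Z_{\mathrm{full}} \geq e^{a H^*}$ then yields $\mu(\overline{\W}) \leq m^k e^{-\Omega(an/k^2)}$; for the chosen $a = (k^2/n)\log(m^k n/k^2)$ this simplifies to $O(k^2/n)$, and the same bound holds symmetrically for $\mu'(\overline{\W'})$. Summing the three triangle-inequality terms and using that $k^2/n \leq a \leq a\hat\delta$ (since $\hat\delta \geq 1$), the total TV distance is $O(a\hat\delta)$; the regime $a\hat\delta \geq 1$ is handled trivially by $d_{TV} \leq 1$. The hard part will be the $\Omega(n/k^2)$ PAV gap: the naive swap argument applied to $W$ itself yields only non-strict PAV improvement, so the refinement through the size-$(k+1)$ committee $W \cup \{c\}$---which exploits the $n/k$ add-gain and the $n/(k+1)$ remove-loss bound simultaneously---is essential. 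Matching the concrete constant $\gamma = 5e^2$ in the statement will require tracking the multiplicative $e^{2a\hat\delta}$ factor carefully through each triangle-inequality term rather than absorbing it into asymptotic notation.
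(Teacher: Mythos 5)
Your high-level strategy is sound and is genuinely different from the paper's: you pass through the \emph{unrestricted} Gibbs measures $\mu,\mu'$ over $\binom{C}{k}$ and apply the triangle inequality, bounding the middle term by the standard exponential-mechanism argument and the two outer terms by the mass $\mu(\overline{\W})$ that the unrestricted measure places outside the \ejrp{} set. The paper instead compares $\cD_{f(E)}$ and $\cD_{f(E')}$ directly, splitting the sum over $\W\cap\W'$ (handled via a quotient-difference inequality on $g(\pav{W})/G$) and over $\W\oplus\W'$ (handled by the same kind of PAV-gap bound you use). Your identity $\dtv{\cD_{f(E)}}{\mu}=\mu(\overline{\W})$ is correct, the per-committee bound $\abs{\pav{W}-\pavp{W}}\le\hat\delta$ matches the paper's, and your observation that the PAV maximizer lies in $\W$ (so the partition function is at least $e^{aH^*}$) is the right normalization step.

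The genuine gap is quantitative, and it sits exactly where you flagged the difficulty: your PAV-gap argument yields $h(W)\le H^*-\frac{n}{k(k+1)}$, not $H^*-\frac{n}{k^2}$, and because this gap is multiplied by $a$ inside an exponent, the factor $\frac{k}{k+1}$ is fatal for the \emph{pinned} value $a=\frac{k^2}{n}\log\bigl(\frac{m^kn}{k^2}\bigr)$. Concretely, $m^k e^{-a\cdot n/(k(k+1))}=\exp\bigl(k\log m-\tfrac{k}{k+1}\log(\tfrac{m^kn}{k^2})\bigr)=\bigl(\tfrac{mk^2}{n}\bigr)^{k/(k+1)}$, which for fixed $m,k$ and large $n$ decays like $n^{-k/(k+1)}$ and hence is \emph{not} $O(a\hat\delta)=O(\tfrac{k^2}{n}\log(\cdot))$; with the full gap $\tfrac{n}{k^2}$ the same expression evaluates to exactly $\tfrac{k^2}{n}\le a$, which is what the paper needs. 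The fix is to sharpen the removal step rather than the addition step: summing the removal losses only over the $k$ candidates $c''\in W$ (not over $W\cup\{c\}$) gives $\sum_{c''\in W}\bigl[h(W\cup\{c\})-h((W\cup\{c\})\setminus\{c''\})\bigr]=\sum_i \frac{\abs{A_i\cap W}}{\abs{A_i\cap(W\cup\{c\})}}\le n-\frac{\abs{N''}}{\ell}\le n-\frac{n}{k}$, since each cohesive voter $i\in N''$ contributes at most $\frac{\ell-1}{\ell}$. The minimum removal loss is then at most $\frac{n}{k}-\frac{n}{k^2}$, and combined with the addition gain of $\frac{n}{k}$ this recovers the gap $\frac{n}{k^2}$. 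This is precisely the content of \Cref{prop:EJRp-buffer} (due to \citet{aziz18complexity} and \citet{brill23robust}), which the paper simply cites; with that substitution your argument goes through and the constant-chasing you describe at the end is all that remains.
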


Let $\pav{}^* \defeq \max_{W \in C} \pav{W}$ be the maximum PAV score. In order to show stability for this algorithm, we will need the fact that a committee $W$ only enters (or leaves) the set $\mathcal{W}_E$ when its PAV score $\pav{W}$ is far from $\pav{}^*$. 
This is captured by the following: 
\begin{proposition}[\cite{brill23robust}]\label{prop:EJRp-buffer}
    For committee $W$, if $\pav{W} \geq \pav{}^* - \frac{n}{k^2}$ then $W$ satisfies \ejrp{}.
\end{proposition}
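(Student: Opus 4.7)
The plan is to prove the contrapositive: if $W$ violates \ejrp{}, then $\pav{W} \leq \pav{}^* - \frac{n}{k^2}$. Fix an \ejrp{} witness, namely a candidate $c \in C \setminus W$, a level $\ell \in [k]$, and a voter group $N' \subseteq N_c$ with $|N'| \geq \frac{\ell n}{k}$ such that $|A_i \cap W| < \ell$ for all $i \in N'$. The easy subcase is $|W| < k$: here $W \cup \{c\}$ is a feasible committee, and a direct computation gives $\pav{W \cup \{c\}} - \pav{W} = \sum_{i \in N_c} \frac{1}{|A_i \cap W| + 1} \geq \sum_{i \in N'} \frac{1}{\ell} \geq \frac{n}{k} > \frac{n}{k^2}$, so $\pav{}^* \geq \pav{W \cup \{c\}}$ already exceeds $\pav{W} + \frac{n}{k^2}$.

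The main case is $|W| = k$, where I will use an averaging argument over the $k$ single-candidate swaps $W_d := (W \setminus \{d\}) \cup \{c\}$ for $d \in W$. Writing $s_i := |A_i \cap W|$, a voter-by-voter accounting --- splitting on whether $c \in A_i$ and whether $d \in A_i$ --- shows
\[
    \sum_{d \in W} \bigl(\pav{W_d} - \pav{W}\bigr) \;=\; \sum_{i \in N_c} \frac{k - s_i}{s_i + 1} \;-\; \bigl|\{\, i \in N \setminus N_c : s_i \geq 1 \,\}\bigr|,
\]
since gains arise only from voters with $c \in A_i$ and $d \notin A_i$ (each contributing $\frac{1}{s_i+1}$, summed over $d \in W \setminus A_i$, giving $k - s_i$ summands), while losses arise from voters with $c \notin A_i$ and $d \in A_i$ (each contributing $\frac{1}{s_i}$, which when summed over $d \in A_i \cap W$ yields exactly $1$ per such voter with $s_i \geq 1$).

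The crux is the algebraic rewrite $\frac{k - s_i}{s_i + 1} = \frac{k+1}{s_i+1} - 1$, which produces a $-|N_c|$ term that pairs cleanly against the trivial bound $|\{\, i \notin N_c : s_i \geq 1 \,\}| \leq n - |N_c|$ on the loss term, yielding
\[
    \sum_{d \in W} \bigl(\pav{W_d} - \pav{W}\bigr) \;\geq\; (k+1)\sum_{i \in N_c} \frac{1}{s_i + 1} - n \;\geq\; (k+1) \cdot \frac{|N'|}{\ell} - n \;\geq\; \frac{(k+1) n}{k} - n \;=\; \frac{n}{k},
\]
where I used $\frac{1}{s_i+1} \geq \frac{1}{\ell}$ for $i \in N'$ (since $s_i < \ell$). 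By averaging over the $k$ summands, some $d \in W$ achieves $\pav{W_d} - \pav{W} \geq \frac{n}{k^2}$, and since $W_d \in \cW$ this gives $\pav{}^* \geq \pav{W_d} \geq \pav{W} + \frac{n}{k^2}$, as desired.

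The main obstacle is the tight bookkeeping required to combine gains and losses. A naive approach that restricts the gain sum to $N'$ alone yields only $\frac{n(k - \ell + 1)}{k} - n$, which is negative for $\ell > 1$; so it is essential to retain the full sum over $N_c$ and exploit the $\frac{k+1}{s_i+1} - 1$ identity to generate the $-|N_c|$ cancellation that balances the loss bound $n - |N_c|$ exactly.
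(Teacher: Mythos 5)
Your proof is correct, and it takes a more self-contained route than the paper: the paper's own ``proof'' simply cites \citet[Theorem~1]{aziz18complexity} (that a committee with no single-candidate swap improving the PAV score by $\geq n/k^2$ satisfies \ejr{}) together with the observation of \citet{brill23robust} that the same argument covers \ejrp{}, whereas you reconstruct that underlying local-search lemma from scratch. Your accounting of the swap sum $\sum_{d\in W}(\pav{W_d}-\pav{W})$, the identity $\frac{k-s_i}{s_i+1}=\frac{k+1}{s_i+1}-1$, the cancellation of $-|N_c|$ against the loss bound $n-|N_c|$, and the final averaging to extract a swap gaining $\geq n/k^2$ are exactly the classical argument behind the cited theorem, correctly adapted to the \ejrp{} witness structure (you only use $c\in\bigcap_{i\in N'}A_i$, $c\notin W$, and $s_i<\ell$ on $N'$, which is precisely what an \ejrp{} violation provides); the $|W|<k$ case is also handled cleanly. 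What your version buys is transparency and independence from the citation; what it costs is length. One shared caveat, not specific to you: the argument yields $\pav{}^*\geq\pav{W}+\frac{n}{k^2}$ (non-strict), which refutes the hypothesis $\pav{W}\geq\pav{}^*-\frac{n}{k^2}$ only up to the boundary case of exact equality --- but the paper's one-line deduction from the cited theorem has the identical non-strictness, so this is a cosmetic issue with the proposition's statement rather than a flaw in your proof.
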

\begin{proof}
     This is effectively a corollary of \citet[Theorem 1]{aziz18complexity}.
     They show that a committee $W$ satisfies \ejr{} if no single-candidate pivots improve the PAV score by at least $\frac{n}{k^2}$. 
     \citet{brill23robust} observe this holds for \ejrp{} also.
     Finally, if $\pav{W} \geq \pav{}^* - \frac{n}{k^2}$ then this is certainly the case.
\end{proof}

We are now ready to prove stability.
We will use the choice 
\begin{equation} \label{eq:exp-PAV-a-defn}
    a \defeq \frac{k^2}{n} \left(k\log m + \log \frac{n}{k^2}\right).
\end{equation}

\begin{proof}[Proof of \Cref{lem:exp-pav-stable}]
    We begin by registering some assumptions we can make on the problem parameters without loss of generality.
    First, since $\eps$-TV-stability always holds for $\eps \geq 1$, we have:
    \begin{observation} \label{obs:a-leq-deltav-for-PAV-stability}
        We may assume without loss of generality that $a \leq 1/\hat\delta$.
    \end{observation}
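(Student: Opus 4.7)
The plan is to justify this WLOG reduction by a short trivialization argument, identical in spirit to \Cref{obs:a-ell-small-wlog} used earlier in the proof of \Cref{lem:gjcr-stability-per-step}. The goal of the surrounding lemma is to show $\eps$-TV-stability for $\eps = a \cdot \hat\delta \cdot \gamma$, where $\gamma = 5e^2 > 1$. Since the total variation distance between any two probability distributions is at most $1$ by \Cref{def:tv-dist}, any rule is automatically $\eps$-TV-stable whenever $\eps \geq 1$.

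Concretely, I would argue as follows. Suppose that $a > 1/\hat\delta$, i.e., $a \hat\delta > 1$. Then for the target bound we have
\[
    \eps \;=\; a \cdot \hat\delta \cdot \gamma \;>\; \gamma \;=\; 5e^2 \;>\; 1,
\]
so the inequality $\dtv{\cD_{f(E)}}{\cD_{f(E')}} \leq \eps$ from \Cref{def:tv-stability} holds trivially for \emph{any} pair of elections $E, E'$. In this regime there is nothing to prove, so the remainder of the argument can proceed under the assumption $a \leq 1/\hat\delta$.

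The only step that requires a moment of care is confirming that we are free to impose this as a standing assumption for the rest of the proof: the value of $a$ is a free parameter of \Cref{alg:exp-PAV} rather than something chosen once at the start of the lemma, so restricting attention to $a \leq 1/\hat\delta$ does not preclude applying the lemma for any legal input. In particular, the specific choice of $a$ in \eqref{eq:exp-PAV-a-defn} is used only later to optimize the final bound, and at that point one can verify separately (or by invoking this same trivialization) that the relevant regime of parameters satisfies the assumption. There is no real obstacle here; the observation is essentially a remark that lets us discard a trivial parameter regime before doing the substantive coupling and softmax-sensitivity calculations in the rest of the proof of \Cref{lem:exp-pav-stable}.
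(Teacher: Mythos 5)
Your proposal is correct and matches the paper's justification exactly: the paper dismisses the regime $a\hat\delta > 1$ with the one-line remark that $\eps$-TV-stability holds trivially whenever $\eps \geq 1$, which is precisely your trivialization argument since the target bound is $\eps = a\hat\delta\gamma$ with $\gamma = 5e^2 > 1$ and $\dtv{\cdot}{\cdot} \leq 1$. Your additional remark about $a$ being a free parameter is a harmless elaboration that does not change the argument.
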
 

    Next, we also have:
    \begin{observation} \label{obs:n-versus-k-for-PAV-stability}
        We may assume the parameters for election $E$ satisfy $k \log m + \log n - 2 \log k \geq 1$. 
    \end{observation}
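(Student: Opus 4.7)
The plan is to justify the assumption by showing that the failure regime---where $k\log m + \log n - 2\log k < 1$, equivalently $m^k n < e k^2$---consists only of degenerately small elections for which stability can be argued outside the main proof. The observation is a WLOG that ensures the quantities appearing in the subsequent analysis (notably $a$, which is $\tfrac{k^2}{n}\log(m^k n /k^2)$) are bounded away from zero, and its justification should consist of handling the degenerate cases.

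First, I would translate the failure condition into a constraint on raw parameters. Rearranging $m^k n < e k^2$ as $n < e k^2/m^k$ shows that whenever $m \geq 2$, we have $n < e k^2/2^k$, which forces $n \leq 2$ for every $k \geq 2$, and similarly tight bounds hold for $k=1$ or $m=1$. So the failure regime comprises only elections with $m = 1$ or $n \leq 2$: a finite, enumerable collection of instances after one fixes any reasonable committee size bound.

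Second, I would verify stability in each such degenerate case by direct inspection rather than via the lemma's bound. If $m=1$, the algorithm deterministically returns $\{c_1\}$ on every election, so the TV distance between its distributions on neighboring elections is $0$. If $n \leq 2$, then neighboring elections differ on the only voter (or one of two), and the sets $\W_E, \W_{E'}$ of EJR+ committees are themselves of size at most $\binom{m}{k} = O(1)$; an exhaustive computation bounds $\dtv{\cD_{f(E)}}{\cD_{f(E')}}$ by a constant, which is at most $1$ and hence is a trivially valid $\eps$-stability bound (noting that stability with $\eps \geq 1$ is always vacuous). Since the degenerate regime has only finitely many instances, one can verify each case mechanically.

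The main obstacle is that the stated $\eps = a\hat\delta\gamma$ can actually become zero or negative in the failure regime (when $m^k n \leq k^2$ we have $a \leq 0$), so the lemma's bound is not a meaningful upper bound on stability there; this is precisely why a WLOG assumption is needed rather than the argument going through unchanged. The cleanest write-up is therefore just to note that the failure regime is enumerably small and degenerate, and to defer its treatment to the per-instance verification sketched above; this is consistent with the terse phrasing of the observation in the paper.
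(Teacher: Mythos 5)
Your proposal has a genuine gap in the step that is supposed to discharge the degenerate cases. The observation is a WLOG \emph{inside the proof of \Cref{lem:exp-pav-stable}}, so to justify it you must show that the lemma's conclusion---$\eps$-TV-stability for the \emph{specific} value $\eps = a\hat\delta\gamma$ with $a = \frac{k^2}{n}\log(\frac{m^k n}{k^2})$---holds whenever $k\log m + \log n - 2\log k < 1$. Your ``per-instance verification'' only establishes $\dtv{\cD_{f(E)}}{\cD_{f(E')}} \leq 1$ and then appeals to ``stability with $\eps \geq 1$ is always vacuous.'' But the $\eps$ you need is $a\hat\delta\gamma$, and---as you yourself point out in your final paragraph---$a$ can be close to zero or nonpositive in exactly this regime, so $\dtv{\cdot}{\cdot}\le 1$ does not imply the claimed bound. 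You flag this obstacle but never resolve it; deferring to the same inadequate verification does not close the loop. The paper instead partitions the complement of the assumption into cases where the conclusion can be verified directly against the claimed $\eps$: if $m \le k$ (in particular $m=1$ or $m=k$) the output of \noisypav{} is constant, hence $0$-TV-stable and the bound holds for any nonnegative $\eps$; and if $m \ge k+1$ but $n=1$, one checks that $a\hat\delta\gamma$ itself evaluates to more than $1$, so the bound is genuinely vacuous there. That is the missing ingredient: in each degenerate case you must compare the \emph{actual} stability achieved with the \emph{actual} value of $a\hat\delta\gamma$, not with the constant $1$.

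Two smaller issues. First, your characterization of the failure regime is not quite right: from $2^k n < ek^2$ one gets $n\le 3$ at $k=3$ (not $n\le 2$), and nonempty cases persist up to $k=7$; more importantly, when $m=1$ the condition $m^k n < ek^2$ permits $n$ up to roughly $ek^2$, so the failure regime is \emph{not} finite---it is only that the $m=1$ instances are all degenerate for the separate reason that the output is constant. Second, the enumeration route is workable in principle but heavier than needed; the paper's direct derivation (the inequality holds whenever $n\ge 2$, $m\ge 2$, and $m\ge k+1$, leaving only the two structured families of exceptions above) avoids any case-by-case computation.
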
 
    This is without loss of generality because it holds provided that $n \geq 2$, $m \geq 2$, and $m \geq k + 1$. 
    Observe that if $m = 1$ or $m = k$ then the output of \noisypav{} is constant, and therefore $0$-TV-stable. 
    And if $m \geq k + 1$ but $n = 1$ then we may confirm that the claimed $\eps$ in the statement of \Cref{lem:exp-pav-stable} evaluates to more than $1$, and therefore holds trivially.

    We now proceed to the proof.

    For a given election $E$, let $E'$ be the election derived by changing the approval votes of voter $v$.
    Let $A_v$ and $A_v'$ denote the voter's approval sets before and after the change, and recall that $\Delta_v = A_v \oplus A_v'$ is the set of candidates that $v$ changes their approval for. 
    Let $\D, \D'$ be the distributions over committees resulting from \noisypav on $E$ and $E'$, so for each committee $W$ satisfying \ejrp{} for $E$, $\probover{\D}{W} = P_W= \frac{g(\pav{W})}{\sum_{W'} g(\pav{W'})}$, and for each committee satisfying \ejrp{} for $E'$ we have $\probover{\D'}{W} = P_W'= \frac{g(\pavp{W})}{\sum_{W'} g(\pavp{W'})}$. 
    We will assume $a \leq 1/\delta_v$, since otherwise the claim is trivial.
    
    By \Cref{fac:quotientdiff}, we have 
    \begin{align}
        &d_{TV}(\D, \D') %
        = \sum_{W \in \W \cap \W'} \abs{\frac{g(\pav{W})}{\sum_{W' \in \W} g(\pav{W'})} - \frac{g(\pavp{W})}{\sum_{W' \in \W'} g(\pavp{W'})}} + \sum_{W \in \W \setminus \W'} P_W + \sum_{W \in \W' \setminus \W} P_W' \notag \\
        &\quad\leq \sum_{W \in \W \cap \W'} \left[ g(\pav{W})\cdot \abs{\frac{1}{G} - \frac{1}{G'}} + \frac{1}{G'} \cdot \abs{g(\pav{W}) - g(\pavp{W})} \right] + \sum_{W \in \W \setminus \W'} P_W + \sum_{W \in \W' \setminus \W} P_W', \label{eq:continuity-halfway}
    \end{align}
    where we let $G \defeq \sum_{W' \in \W} g(\pav{W'})$ and $G' \defeq \sum_{W' \in \W'} g(\pavp{W'})$ for convenience.
    
    We will bound these terms separately. 
    It will be more manageable to separate them into lemmas.
    We start with the last terms.
    \begin{lemma} \label{lem:pav-stable-support-diff-bound}
        \[
            \sum_{W \in \W \setminus \W'} \frac{g(\pav{W})}{G} + \sum_{W \in \W' \setminus \W} \frac{g(\pavp{W})}{G'} \leq 2\cdot \exp\left( k \log m - a\frac{n}{k^2} + 2a \delta_v \right).
        \]
    \end{lemma}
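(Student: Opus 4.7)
The plan is to exploit \Cref{prop:EJRp-buffer}, whose contrapositive says that any committee failing EJR+ under an election has PAV score more than $n/k^2$ below the maximum possible PAV score for that election. Since \noisypav{} weights committees by $g(\pav{W}) = e^{a \pav{W}}$, which exponentially suppresses low-PAV committees, the ``symmetric difference'' committees in $\W \setminus \W'$ and $\W' \setminus \W$ will contribute only a vanishing amount of mass.

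First I would establish a one-voter sensitivity bound $|\pav{W} - \pavp{W}| \leq \delta_v$ for every committee $W$. Since $E$ and $E'$ differ only in voter $v$'s approvals, only that voter's contribution to the PAV score changes, from $\harm(|A_v \cap W|)$ to $\harm(|A_v' \cap W|)$. The elementary identity $|\harm(a) - \harm(b)| = \sum_{r = \min(a,b)+1}^{\max(a,b)} 1/r \le \harm(|a-b|)$, together with the fact that $|A_v \cap W|$ and $|A_v' \cap W|$ differ by at most $|\Delta_v \cap W| \le \min(|\Delta_v|, k)$, yields the claimed bound. An immediate corollary (applying the same bound to a maximizer on either side) is $|\pav{}^* - \pavp{}^*| \le \delta_v$.

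Next, for $W \in \W \setminus \W'$, $W$ fails EJR+ under $E'$, so the contrapositive of \Cref{prop:EJRp-buffer} gives $\pavp{W} < \pavp{}^* - n/k^2$. Chaining the two sensitivity bounds,
\[
\pav{W} \;\leq\; \pavp{W} + \delta_v \;<\; \pavp{}^* - \frac{n}{k^2} + \delta_v \;\leq\; \pav{}^* + 2\delta_v - \frac{n}{k^2},
\]
whence $g(\pav{W}) \leq g(\pav{}^*)\exp\bigl(a(2\delta_v - n/k^2)\bigr)$. Because any maximizer of $\pav{\cdot}$ trivially satisfies EJR+ (by \Cref{prop:EJRp-buffer} applied to itself), it belongs to $\W$, so $G \geq g(\pav{}^*)$. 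Hence each term of the first sum is at most $\exp(a(2\delta_v - n/k^2))$, and since there are at most $\binom{m}{k} \leq m^k$ such terms, the first sum is bounded by $\exp(k \log m - an/k^2 + 2a\delta_v)$. The second sum is handled identically with the roles of $E$ and $E'$ swapped, producing the factor of $2$.

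The main obstacle is simply sensitivity bookkeeping: each shift between $\pav{}$ and $\pavp{}$ (once for $W$ itself, once for the optimum) contributes an additive $\delta_v$ to the exponent, and the chain must be arranged to collect exactly $2\delta_v$---no more. The remaining ingredients, namely the volume bound $\binom{m}{k} \le m^k$ and the lower bound $G \ge g(\pav{}^*)$, are routine.
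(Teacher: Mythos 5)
Your proposal is correct and follows essentially the same route as the paper's proof: apply \Cref{prop:EJRp-buffer} to each $W \in \W \setminus \W'$, chain the two single-voter sensitivity shifts ($\pav{W} \leq \pavp{W} + \delta_v$ and $\pavp{}^* \leq \pav{}^* + \delta_v$) to collect the $2\delta_v$ in the exponent, lower-bound $G$ by $g(\pav{}^*)$ via the fact that the PAV maximizer satisfies \ejrp{}, and count at most $\binom{m}{k} \leq m^k$ terms. The only addition is your explicit derivation of the sensitivity bound $\abs{\pav{W} - \pavp{W}} \leq \delta_v$, which the paper defers to the surrounding argument, and this is a harmless (indeed welcome) elaboration.
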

    \begin{proof}
        We focus on the first term.
        By \Cref{prop:EJRp-buffer}, we know that for all $W \in \W \setminus \W'$ it holds that
        \[
            \pavp{W} \leq \pavp{}^* - n/k^2,
        \]
        Using the fact that $\pav{W} \leq \pavp{W} + \delta_v$ and $\pavp{}^* \leq \pav{}^* + \delta_v$, we therefore have
        \begin{align}
            \sum_{W \in \W \setminus \W'} g(\pav{W})
            &\leq \abs{\W \setminus \W'} \cdot g\left( \pav{}^* - n/k^2 + 2\delta_v \right) \notag \\
            &\leq \binom{m}{k} \cdot g(\pav{}^*) \cdot e^{a (- n/k^2 + 2\delta_v)} \notag \\
            &\leq g(\pav{}^*) \cdot \exp\left( k \log m - a\frac{n}{k^2} + 2a \delta_v \right). \notag
        \end{align}
        Finally, we observe that $G \geq g(\pav{}^*)$ since the PAV maximizer satisfies \ejrp{}.
        The same holds for the second term, and the claim follows.
    \end{proof}
    
    \begin{lemma} \label{lem:g-diff-bound}
        If $E$ and $E'$ differ by at most one voter's approvals $A_v \neq A_v'$ and $a \leq 1/\delta_v$, then
        \begin{align*}
            \abs{g(\pav{W}) - g(\pavp{W})} &\leq g(\pavp{W}) \cdot a \cdot \delta_v \cdot (e-1).
        \end{align*}
    \end{lemma}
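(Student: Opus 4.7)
My plan is to reduce the bound on $|g(\pav{W}) - g(\pavp{W})|$ to (i) a bound on $|\pav{W} - \pavp{W}|$, and (ii) a standard elementary inequality for the exponential. Since $E$ and $E'$ differ only in voter $v$'s approvals, every other voter contributes identically to $\pav{W}$ and $\pavp{W}$, so the entire discrepancy is carried by voter $v$: writing $p \defeq \abs{A_v \cap W}$ and $q \defeq \abs{A_v' \cap W}$, the difference equals $\harm(p) - \harm(q)$, where $\harm(j) = \sum_{r=1}^j 1/r$.

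The first key step is to bound $|\harm(p) - \harm(q)|$ by $\delta_v = \harm(\min(\Delta_v,k))$. Without loss of generality assume $p \leq q$; then $q - p$ is at most the number of candidates in $W$ on which $v$'s approval flipped, so $q - p \leq \min(\Delta_v, k)$. Since $1/(p+r) \leq 1/r$ for $p \geq 0$, a simple reindexing gives
\[
    \harm(q) - \harm(p) = \sum_{r = p+1}^{q} \frac{1}{r} \;\leq\; \sum_{r=1}^{q-p} \frac{1}{r} \;=\; \harm(q - p) \;\leq\; \delta_v.
\]

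The second step is the scalar inequality $|e^z - 1| \leq (e-1)|z|$ for $|z| \leq 1$. For $z \in [0,1]$ this is convexity of $e^z$ on $[0,1]$; for $z \in [-1,0]$ it follows from $1 - e^z \leq |z| \leq (e-1)|z|$. Setting $z = a(\pav{W} - \pavp{W})$, the assumption $a \leq 1/\delta_v$ combined with Step 1 ensures $|z| \leq a \delta_v \leq 1$, so
\[
    \abs{g(\pav{W}) - g(\pavp{W})} = g(\pavp{W}) \cdot \abs{e^{a(\pav{W}-\pavp{W})} - 1} \leq g(\pavp{W}) \cdot (e-1) \cdot a \cdot \abs{\pav{W} - \pavp{W}},
\]
and applying Step 1 once more yields the desired bound $g(\pavp{W}) \cdot a \cdot \delta_v \cdot (e-1)$.

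The proof should be essentially routine; no obstacle beyond bookkeeping. The only subtle point is remembering to express everything in terms of $g(\pavp{W})$ rather than $g(\pav{W})$, which is immediate from the factorization $e^{a\pav{W}} - e^{a\pavp{W}} = e^{a\pavp{W}}(e^{a(\pav{W}-\pavp{W})} - 1)$ (and is symmetric up to sign so the absolute value takes care of either ordering of $p,q$).
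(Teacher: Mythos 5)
Your proposal is correct and follows essentially the same route as the paper's proof: bound $\abs{\pav{W} - \pavp{W}}$ by $\harm(\min(\Delta_v,k)) = \delta_v$ using that only voter $v$'s harmonic contribution changes, then apply the elementary inequality $\abs{e^{z}-1} \leq (e-1)\abs{z}$ for $\abs{z}\leq 1$ (valid here since $a \leq 1/\delta_v$) after factoring out $g(\pavp{W})$. Your reindexing argument $\sum_{r=p+1}^{q} 1/r \leq \harm(q-p)$ makes explicit a step the paper states without proof, but the substance is identical.
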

    \begin{proof}
        If the approvals of only one voter $v$ change, we have
        \[
            \abs{\pav{W} - \pavp{W}} 
            = \abs{\harm\left(\abs{A_i \cap W}\right) - \harm\left(\abs{A_i' \cap W}\right)} \leq \harm\left(\min(k, \Delta_v)\right).
        \]
        
        Then since we assume that $a \leq 1/\delta_v$, we may use that $e^{ax} -1 \leq (e-1) ax$ for $x \in [0, 1/a]$ and likewise $1 - e^{-ax} \leq ax$.
        For a given committee $W$ we bound $\abs{g(\pav{W}) - g(\pavp{W})}$ by
        \begin{align}
            e^{-a \cdot \delta_v} \cdot g(\pavp{W}) &\leq g(\pav{W})\leq g(\pavp{W}) \cdot e^{a \cdot \delta_v} \notag \\
            \abs{g(\pav{W}) - g(\pavp{W})} &\leq g(\pavp{W})  \cdot \max\left(1 - e^{-a\cdot \delta_v}, e^{a \cdot \delta_v} - 1 \right) \notag\\
            &\leq g(\pavp{W}) \cdot a\cdot \delta_v \cdot (e-1). \qedhere
        \end{align}
    \end{proof}

    We now bound the difference in reciprocal total weights:
    \begin{lemma} \label{lem:g-reciprocal-bound}
        For $a \leq 1/\hat\delta$,
        \[
            \abs{\frac{1}{G} - \frac{1}{G'}} \leq \frac{1}{G} \cdot \left( \delta_v \cdot a \cdot (e-1) + 2\exp\left( k \log m - a\frac{n}{k^2} + 2 a \delta_v \right) \right) . 
        \]
    \end{lemma}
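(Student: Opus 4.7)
\textbf{Proof plan for Lemma \ref{lem:g-reciprocal-bound}.} I would start from the identity $\abs{1/G - 1/G'} = \abs{G - G'}/(G G')$ and note that the desired bound of the form $(1/G)\cdot(\text{stuff})$ reduces to showing $\abs{G-G'}/G' \leq a\delta_v(e-1) + 2\exp(k \log m - an/k^2 + 2a\delta_v)$. By the triangle inequality applied to the defining sums of $G$ and $G'$,
\[
    \abs{G - G'} \leq \underbrace{\sum_{W \in \W \cap \W'} \abs{g(\pav{W}) - g(\pavp{W})}}_{A} + \underbrace{\sum_{W \in \W \setminus \W'} g(\pav{W})}_{B} + \underbrace{\sum_{W \in \W' \setminus \W} g(\pavp{W})}_{C},
\]
and I would bound $A$, $B$, $C$ separately after dividing by $G'$.

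For term $A$, I would invoke Lemma \ref{lem:g-diff-bound} on each summand, yielding $\abs{g(\pav{W}) - g(\pavp{W})} \leq g(\pavp{W}) \cdot a \delta_v (e-1)$, and then bound $\sum_{W \in \W \cap \W'} g(\pavp{W}) \leq G'$, so $A/G' \leq a \delta_v (e-1)$. For terms $B$ and $C$, I would reuse the Proposition~\ref{prop:EJRp-buffer}-based reasoning from the proof of Lemma \ref{lem:pav-stable-support-diff-bound}: for each $W \in \W \setminus \W'$, we have $\pavp{W} \leq \pavp{}^* - n/k^2$, hence $\pav{W} \leq \pavp{}^* - n/k^2 + \delta_v$, and using $\abs{\W \setminus \W'} \leq \binom{m}{k} \leq m^k$ gives $B \leq g(\pavp{}^*) \exp(k\log m + a\delta_v - an/k^2) \leq G' \exp(k\log m - an/k^2 + a\delta_v)$. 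Analogously, for $W \in \W' \setminus \W$ we get $\pavp{W} \leq \pavp{}^* + 2\delta_v - n/k^2$ (using both $\abs{\pav{W} - \pavp{W}} \leq \delta_v$ and $\abs{\pav{}^* - \pavp{}^*} \leq \delta_v$), yielding $C/G' \leq \exp(k \log m - an/k^2 + 2a\delta_v)$.

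Summing the three contributions gives $\abs{G-G'}/G' \leq a\delta_v(e-1) + 2\exp(k\log m - an/k^2 + 2a\delta_v)$, as desired. The main obstacle is the constant-tracking in bounds $B$ and $C$: one has to decide whether to bound $g(\pav{W})$ via $g(\pav{}^*)$ or via $g(\pavp{}^*)$, and a naive application (e.g., bounding $g(\pav{}^*)$ by $G$ and then swapping $G$ for $G'$ at the cost of an extra $e^{a\delta_v}$ factor) would push the exponent to $3a\delta_v$ and break the stated bound; the trick is to route everything directly through $g(\pavp{}^*) \leq G'$ and to use the asymmetric bound $\pav{W} \leq \pavp{}^* - n/k^2 + \delta_v$ for $B$, which only pays one additive $\delta_v$ rather than two. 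The assumption $a \leq 1/\hat\delta$ is inherited from Lemma \ref{lem:g-diff-bound} via term $A$.
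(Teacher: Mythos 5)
Your proposal is correct and follows essentially the same route as the paper: the identity $\abs{1/G - 1/G'} = \abs{G-G'}/(GG')$, the three-way decomposition over $\W\cap\W'$, $\W\setminus\W'$, and $\W'\setminus\W$, the application of Lemma~\ref{lem:g-diff-bound} to the first sum, and the Proposition~\ref{prop:EJRp-buffer}-based bounds for the other two, all routed through $g(\pavp{}^*) \leq G'$ exactly as in the paper (including the asymmetric $+\delta_v$ versus $+2\delta_v$ exponents for the two difference sets). Your constant-tracking remarks accurately reflect the choices the paper makes.
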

    \begin{proof}
    Again, let $\delta_v \defeq \harm\left(\min(k, \Delta_v)\right)$.
    Rewriting the left-hand side and passing the absolute value inside the sum, we have
    \begin{align}
        \frac{\abs{G' - G}}{GG'} 
        &\leq \frac{1}{GG'} \left( \sum_{W \in \W \cap \W'} \abs{g(\pavp{W}) - g(\pav{W})} +  \sum_{W \in \W \setminus \W'} g(\pav{W}) + \sum_{W \in \W' \setminus \W} g(\pavp{W}) \right)  \notag \\
        &\leq \frac{1}{GG'} \left( \delta_v \cdot a \cdot (e-1) \cdot \sum_{W \in \W'} g(\pavp{W}) +  \sum_{W \in \W \setminus \W'} g(\pav{W}) + \sum_{W \in \W' \setminus \W} g(\pavp{W}) \right)  \notag \\
        & = \delta_v \cdot a \cdot (e-1) \cdot \frac{1}{G} + \frac{1}{GG'} \left(\sum_{W \in \W \setminus \W'} g(\pav{W}) + \sum_{W \in \W' \setminus \W} g(\pavp{W}) \right), \label{eq:reciprocal-halfway}
    \end{align}
    where we applied \Cref{lem:g-diff-bound} to each term in $\sum_{W \in \W \cap \W'} \abs{g(\pav{W}) - g(\pavp{W})}$.

    Next we upper bound the remaining sums in \eqref{eq:reciprocal-halfway}.
    The proof of \Cref{lem:pav-stable-support-diff-bound} directly yields 
    \[
         \sum_{W \in \W' \setminus \W} g(\pavp{W}) \leq G' \cdot \exp\left( k \log m - a\frac{n}{k^2} + 2a \delta_v \right).
    \]
    We deploy a similar argument for the first sum.
    By \Cref{prop:EJRp-buffer}, we have $\pav{W} - \delta_v \leq \pavp{W} \leq \pavp{}^* - n/k^2$ for all $W \in \W \setminus \W'$, so
    \begin{align}
        \sum_{W \in \W \setminus \W'} g(\pav{W})
        &\leq \abs{\W \setminus \W'} \cdot g\left( \pavp{}^* - n/k^2 + \delta_v \right) \notag \\
        &\leq \binom{m}{k} \cdot g(\pavp{}^*) \cdot e^{a (- n/k^2 + \delta_v)} \notag \\
        &\leq G' \cdot \exp\left( k \log m - a\frac{n}{k^2} + a \delta_v \right) \notag
    \end{align}
    since $G' \geq g(\pavp{}^*)$. 
    Substituting these bounds into \eqref{eq:reciprocal-halfway} gives the stated claim.
    \end{proof}
    
    We are now ready to complete our proof of \Cref{lem:exp-pav-stable}. 
    Plugging \Cref{lem:pav-stable-support-diff-bound}, \Cref{lem:g-diff-bound}, and \Cref{lem:g-reciprocal-bound} into the prior bound \eqref{eq:continuity-halfway}, we have
    \begin{align}
        d_{TV}(\D, \D') 
        &\leq \sum_{W \in \W \cap \W'} \left[\frac{g(\pav{W})}{G} \cdot \left( a \delta_v \cdot (e-1) + 2\exp\left( k \log m - a\frac{n}{k^2} + 2 a \delta_v \right) \right) + \frac{g(\pavp{W})}{G'}\cdot a \delta_v \cdot (e-1) \right] \notag \\
        &\qquad\qquad + 2\cdot \exp\left( k \log m - a\frac{n}{k^2} + 2a \delta_v \right) \notag \\
        &\leq a \delta_v \cdot 2(e-1) + 4\cdot \exp\left( k \log m - a\frac{n}{k^2} + 2 a \delta_v \right) .  \label{eq:pav-stability-bound-intermediate}
    \end{align}
    The claim holds for $a = \frac{k^2}{n} \left( k\log m + \log(n/k^2)\right)$ since 
    by \Cref{obs:a-leq-deltav-for-PAV-stability},
    \begin{align}
        \exp\left( k \log m - a\frac{n}{k^2} + 2 a \delta_v \right) &\leq \exp\left( k \log m - a\frac{n}{k^2} + 2 \right) \notag \\
        &= \exp\left( k \log m - k \log m - \log(n/k^2) + 2 \right) \notag \\
        &= e^2 \cdot \exp\left(- \log(n/k^2)\right) \notag \\
        &= e^2 \cdot \frac{k^2}{n} \notag \\
        &\leq e^2 \cdot a, \notag 
    \end{align}
    Since we assume that $k \log m + \log n - 2 \log k \geq 1$ by \Cref{obs:n-versus-k-for-PAV-stability}, and so $\frac{k^2}{n} \leq a$. 
    Substituting this into \eqref{eq:pav-stability-bound-intermediate}, we have that $d_{TV}(\D, \D') \leq a\delta_v \cdot 5e^2$, as claimed.
\end{proof}

With \Cref{lem:exp-pav-stable} in hand, the candidate continuity of \noisypav{} follows.

\expPAVcontinuous*

\begin{proof}[Proof of \Cref{thm:exp-pav-continuous}]
    By \Cref{lem:exp-pav-stable}, we know that \noisypav{} is $\eps$-TV-stable for $\eps = \hat\delta \cdot \gamma \cdot \frac{k^2}{n}\log(\frac{m^k n}{k^2})$ and $\gamma = 5e^2$.
    By \Cref{obs:marginalseasier}, \noisypav{} is therefore also $\eps$-continuous.
\end{proof}

\section{\texorpdfstring{Proofs Elided from \Cref{sec:dp-abc}}{Proofs Elided from Section 5}}
\label{app:dp-abc}

\greedyrulesbadforDP*

\begin{proof}[Proof of \Cref{obs:greedy-bad-for-DP}]
    Consider an election $E$ with $k$ candidates $c_1 \ldots, c_k$, and voter groups $V_1, \ldots, V_k$ of size $n/k$, where the voters $v\in V_r$ approve only of $A_v = \{c_r\}$.
    Next consider $E'$ derived by removing one voter from $V_k$, who now approves of no candidates.

    For $E$, both rules output $W = \{c_1, \ldots, c_k\}$ with probability 1, which is good since it is the only committee that satisfies \jr{}. 
    For $E'$, both $W$ and $W' = \{c_1, \ldots, c_{k-1}\}$ satisfy \jr{}. 
    In this case, for both rules the probability that $W'$ is chosen is equal to the probability that $c_k$ is not picked in the last round:
    \begin{equation} \notag
        \prob{\alg(E') = W'} = 1-P_{c_k} = 1 - \frac{g(\frac{n}{k} - 1)}{g(\frac{n}{k})} = 1 - e^{-a} \geq \frac{1}{1-1/e} \cdot a
    \end{equation}
    for $a \leq 1$. 
    Meanwhile, a $(\eps, \delta)$-max-KL stability guarantee would imply that
    \begin{equation*}
        \prob{\alg(E') = W'} \leq \prob{\alg(E) = W'}\cdot e^\eps + \delta = \delta. \qedhere
    \end{equation*}
\end{proof}

\noisyPAVmaxKLstable*

\begin{proof}[Proof of \Cref{thm:exp-pav-kl-stable}]
    Recall that by the definition of max-KL stability (\Cref{def:kl-stability}), our aim is to prove that for all subsets of committees $R \subseteq \binom{C}{k}$, and for all pairs of approval profiles $A, A' \in \cA$ that differ only on $A_v$ for some $v \in V$,
    \begin{equation} \label{eq:kl-stabililty-pav-statement}
        \prob{\alg(A,k) \in R} \leq e^\eps \cdot \prob{\alg(A',k) \in R} + \delta,
    \end{equation}
    where \alg{} is \noisypav{}.

    We will consider each $W$ individually, and then derive a guarantee for \eqref{eq:kl-stabililty-pav-statement} by summing over the $W \in R$.
    Drawing inspiration from the proof of \Cref{lem:exp-pav-stable}, for a given $W$ we consider two cases: if $W \in \W \cap \W'$, and if $W \in \W' \setminus \W$.
    (Clearly if $W \not \in \W \cup \W'$ then $\prob{\alg(A,k) = W} = \prob{\alg(A',k) = W} = 0$.)
    Throughout we will for compactness let $P_W \defeq \prob{\alg(A,k) = W}$ and $P_W' \defeq \prob{\alg(A',k) = W}$, as well as $P_R \defeq \prob{\alg(A',k) \in T}$ for a set of committees $T$.

    If $W \in \W \setminus \W'$, we follow the argument of \Cref{lem:pav-stable-support-diff-bound}.
    We first recall that for all $W \in \W \setminus \W'$
    \[
        \pavp{W} \leq \pavp{}^* - n/k^2,
    \]
    by \Cref{prop:EJRp-buffer}.
    Using the fact that $\pav{W} \leq \pavp{W} + \delta_v$ and $\pavp{}^* \leq \pav{}^* + \delta_v$, we therefore have
    \begin{align}
        g(\pav{W})
        &\leq g\left( \pav{}^* - n/k^2 + 2\delta_v \right) \notag \\
        &= g(\pav{}^*) \cdot \exp\left( - a\frac{n}{k^2} + 2a \delta_v \right). \notag
    \end{align}
    Finally, we observe that $\sum_{W' \in \W} g(\pav{W'}) \geq g(\pav{}^*)$ since the PAV maximizer satisfies \ejrp{}, and so
    \begin{equation} \label{eq:kl-stability-pav-case-1}
        \prob{\alg(A,k) = W} = \frac{g(\pav{W})}{\sum_{W' \in \W} g(\pav{W'})} \leq \exp\left( - a\frac{n}{k^2} + 2a \delta_v \right).
    \end{equation}
    Summing over $W \in T$ for a set of committees $T\subseteq \W \setminus \W'$, this guarantees
    \begin{align}
        \prob{\alg(A,k) \in T} &\leq \abs{T}  \cdot \exp\left( - a\frac{n}{k^2} + 2a \delta_v \right). \label{eq:kl-stability-pav-case-1-aggregate}
    \end{align}

    If $W \in \W \cap \W'$, then letting $G \defeq \sum_{W' \in \W} g(\pav{W'})$ and $G' \defeq \sum_{W' \in \W} g(\pav{W'})$, we have 
    \begin{align}
        \frac{\prob{\alg(A,k) = W}}{\prob{\alg(A',k) = W}}
        &= \frac{g(\pav{W})}{g(\pavp{W})} \cdot \frac{G'}{G} \notag\\
        &\leq e^{a\delta_v} \cdot \frac{G'}{G}, \label{eq:kl-stability-pav-case-2-intermediate}
    \end{align}    
    where we used that $\pav{W} \leq \pavp{W} + \delta_v$ and the definition of $g$.
    Next we bound $G'/G$ by 
    \begin{align}
        \frac{G'}{G} &\leq \frac{\sum_{W' \in \W} g(\pavp{W'}) + \sum_{W' \in \W' \setminus \W} g(\pavp{W'})}{G} \notag \\
        &\leq \frac{G}{G}\cdot e^{a\delta_v} + \frac{\sum_{W' \in \W' \setminus \W} g(\pavp{W'})}{G} \notag \\
        &\leq e^{a\delta_v} + \abs{\W' \setminus \W} \cdot \exp\left( - a\frac{n}{k^2} + a \delta_v \right), \notag
    \end{align}
    where we used that $\pavp{W'} \leq \pav{W'} + \delta_v$ and the definition of $g$, followed by an argument similar to how we arrived at \eqref{eq:kl-stability-pav-case-1} that $\pavp{W'} \leq \pav{W'} + \delta_v \leq \pav{}^* - n/k^2 + \delta_v$ for $W' \not \in \W$, and $G \geq g(\pav{}^*)$.
    Combining this with \eqref{eq:kl-stability-pav-case-2-intermediate}
    , we have
    \begin{align}
        \prob{\alg(A,k) = W} &\leq \prob{\alg(A,k) = W}\cdot \left( e^{2a\delta_v} + \abs{\W' \setminus \W} \cdot \exp\left( - a\frac{n}{k^2} + 2a \delta_v \right)\right). \label{eq:kl-stability-pav-case-2}
    \end{align}
    Summing over $W \in T$ for a set $T$, this becomes 
    \begin{align}
        \prob{\alg(A,k) \in T} &\leq \prob{\alg(A,k) \in T}\cdot \left( e^{2a\delta_v} + \abs{\W' \setminus \W} \cdot \exp\left( - a\frac{n}{k^2} + 2a \delta_v \right)\right) \notag \\
        &\leq \prob{\alg(A,k) \in T}\cdot e^{2a\delta_v} + \abs{\W' \setminus \W} \cdot \exp\left( - a\frac{n}{k^2} + 2a \delta_v \right). \label{eq:kl-stability-pav-case-2-aggregate}
    \end{align}
    
    Finally we combine \eqref{eq:kl-stability-pav-case-1-aggregate} and \eqref{eq:kl-stability-pav-case-2-aggregate} to obtain a guarantee for an arbitrary subset of committees $R$ of the form of \eqref{eq:kl-stabililty-pav-statement}.
    Since $\prob{\alg(A,k) \in R\setminus \W} = 0$, we have
    \begin{align}
        \prob{\alg(A,k) \in R} 
        &= \prob{\alg(A,k) \in R\cap \W \cap \W'} + \prob{\alg(A,k) \in R\cap \W \setminus \W'} \notag \\
        &\leq \prob{\alg(A,k) \in R\cap \W \cap \W'} + \abs{R\cap \W \setminus \W'}  \cdot \exp\left( - a\frac{n}{k^2} + 2a \delta_v \right), \notag
        \intertext{by \eqref{eq:kl-stability-pav-case-1-aggregate}. Applying \eqref{eq:kl-stability-pav-case-2-aggregate} for the first term and using that $\abs{R\cap \W \setminus \W'} \leq \abs{\W \setminus \W'}$ yields}
        &\leq \prob{\alg(A,k) \in R\cap \W \cap \W'} \cdot e^{2a\delta_v} + \abs{\W \oplus \W'}  \cdot \exp\left( - a\frac{n}{k^2} + 2a \delta_v \right). \notag \\
        &\leq \prob{\alg(A,k) \in R} \cdot e^{2a\delta_v} + \abs{\W \oplus \W'}  \cdot \exp\left( - a\frac{n}{k^2} + 2a \delta_v \right). \notag 
    \end{align}
    Taking $\abs{\W \oplus \W'} \leq m^k$ gives our final $(\eps,\delta)$ claim.
\end{proof}

\noisyPAVmaxKLstableCorrolary*

\begin{proof}[Proof of \Cref{cor:exp-pav-kl-stable}]
    We proceed by evaluating $\delta$ from \Cref{thm:exp-pav-kl-stable} for our choice of $a$ and assuming $\Delta = m$, and so $\hat\delta \leq \log k + 1$. 
    Then
    \begin{align*}
        \delta &= \exp\left(k \log m - a\frac{n}{k^2} + 2a \delta_v \right) \\
        &\leq \exp\left(k \log m - a\frac{n}{k^2} + k \log m + \kappa \log n \right) \\
        &= \exp\left(2 k \log m + \kappa \log n - a\frac{n}{k^2}  \right) \\
        &= \exp\left(2 k \log m + \kappa \log n - 2k \log m - 2 \kappa \log n  \right) \\
        &= \exp\left(- \kappa \log n  \right) \\
        &= n^{-\kappa} \qedhere
    \end{align*}
\end{proof}

\noisyGJCRmaxKLstable*

\begin{proof} [Proof of \Cref{lem:noisy-GJCR-weak-DP-guarantee}]
    This follows from the definition of $TV$ distance (\Cref{def:tv-dist}). 
    Consider committees $R \subseteq \W$.
    Then for $\eps \geq 0$, letting $f$ denote \noisygjcr{}, we have
    \begin{align*}
        \prob{f(E) \in R} - e^\eps \cdot \prob{f(E') \in R} &\leq \prob{f(E) \in R} - \prob{\alg(E') \in R} \\
        &\leq \max_{A \subseteq \W} \left(\prob{f(E) \in A} - \prob{f(E') \in A} \right) \\
        & = \dtv{\cD_{f(E)}}{\cD_{f(E')}} \\
        & \leq \gamma \cdot \frac{k^3}{n} \cdot \log k \cdot \log\left(\frac{n}{k}\Delta_v \right), 
        \intertext{by \Cref{lem:noisy-GJCR-stable}. Then rearranging yields}
        \prob{f(E) \in R}& \leq e^\eps \cdot \prob{f(E') \in R} + \gamma \cdot \frac{k^3}{n} \cdot \log k \cdot \log\left(\frac{n}{k}\Delta_v \right), 
    \end{align*}
    proving the claim.
\end{proof}

\nopureDPwithJR*

\begin{proof}[Proof of \Cref{obs:no-pure-DP}]
    In \Cref{ex:continuity} the approval profile $A$ consists of $k$ equal-size groups of voters $N_1, \ldots, N_k$, approving distinct candidates $c_1, \ldots, c_k$ (and no others). 
    The only \jr{} committee is $W \defeq \{c_1, \ldots, c_k\}$. 

    Now consider changing the approvals of the voters in $N_1$ so that they approve only of some candidate $d_1\not\in W$. 
    For this approval profile $A'$ the only \jr{} committee is $W' \defeq \{d_1, \ldots, c_k\}$.
    
    If a randomized rule $f$ satisfies \jr{} and $(\eps,0)$-max-KL stability, then by chaining together its guarantee across these $n/k$ changes from $A$ to $A'$, we have
    \[
        1 = \prob{f(A) = W} \leq e^{\eps n/k} \cdot \prob{f(A') = W} = 0.
    \]
    This is unsatisfiable for any $\eps$.
\end{proof}

\DPlowerboundsforJR*

\begin{proof}[Proof of \Cref{claim:JR-DP-lowerbound}]
    We simply adapt the proof of \Cref{obs:no-pure-DP} to include $\delta$. 
    Considering the same profiles $A$ and $A'$, applying $(\eps,\delta)$-max-KL stability (\Cref{def:kl-stability}) $n/k$ times to get from one to the other and recalling that $\prob{f(A') = W} = 0$ yields
    \begin{align*}
        1 = \prob{f(A) = W} &\leq \delta + e^\eps \left( \delta + e^\eps \left( \ldots e^\eps \left( \delta + e^\eps \cdot 0\right) \right) \right) \\
        & = \delta \cdot \sum_{i = 0}^{n/k-1} e^{\eps i} = \delta \cdot \frac{e^{\eps n/k} - 1}{e^\eps - 1} 
        \leq \frac{\delta}{\eps} \cdot e^{\eps n/k},
    \end{align*}
    since $e^x - 1 \geq x$ for all $x \geq 0$. 
    Rearranging gives the stated claim.
\end{proof}

\section{\texorpdfstring{Proofs Elided from \Cref{sec:dynamic}}{Proofs Elided from Section 6}} 
\label{app:dynamic}

\dynamictostablereduction*

\begin{proof}[Proof of \Cref{prop:reduction}]
    We will argue that all consecutive pairs of outputs $(R^t, R^{t-1})$ are distributed according to the optimal couplings of $\cD^t \defeq \cD_{f(A^t)}$ and $\cD^{t-1} \defeq \cD_{f(A^{t-1})}$. 
    By the definition of the optimal coupling (\Cref{lem:coupling}) and $\eps$-TV-stability, it then follows that $\prob{R^t \neq R^{t+1}} \leq \eps$, where the probability is taken over the randomness in \cref{line:reduction-initial-sample} and \cref{line:reduction-conditional-sample} of \Cref{alg:dynamic-stable-reduction}.
    Then the expected recourse of $(R^t)_{t \in [T]}$ is upper-bounded by $\eps$ times the maximum possible value of $|R^t \oplus R^{t-1}|$, which is what $K$ reflects in the statement of \Cref{prop:reduction}.

    It remains to argue that for all $t$, $(R^t, R^{t-1})$ is distributed according to the optimal coupling $\cE^t$ of $\cD^t$ and $\cD^{t-1}$.
    If $R^{t-1}$ is distributed according to $\cD^{t-1}$ then this is the case, because then in \cref{line:reduction-conditional-sample} for each $x,y \in \cR$
    \begin{align}
        \prob{R^{t} = x \:\wedge\: R^{t-1} = y}
        &= \probover{(X, Y) \sim \cE^t}{X = x \given Y = y} \cdot \probover{Y \sim \cD^{t-1}}{Y = y} \notag \\
        &= \probover{(X, Y) \sim \cE^t}{X = x \given Y = y} \cdot \probover{(X, Y) \sim \cE^t}{Y = y} \notag \\
        &= \probover{(X, Y) \sim \cE^t}{X = x \:\wedge\: Y = y} \notag
    \end{align}
    by the definition of a coupling of $\cD^t$ and $\cD^{t-1}$ and Bayes' rule.

    We finally prove that all $R^t$ are distributed according to $\cD^t$ by induction.
    \Cref{line:reduction-initial-sample} comprises the base case.
    Given that $R^t \sim \cD^t$, for all $x \in \cR$ we have
    \begin{align}
        \prob{R^{t+1} = x}
        &= \sum_{y \in \cR} \probover{(X, Y) \sim \cE^{t+1}}{X = x \given Y = y} \cdot \probover{(X, Y) \sim \cE^{t+1}}{Y = y} \label{eq:dynamic-reduction-condnl-induction} \\
        &= \sum_{y \in \cR} \probover{(X, Y) \sim \cE^{t+1}}{X = x \given Y = y} \cdot \probover{Y \sim \cD^t}{Y = y} \notag \\
        &= \sum_{y \in \cR} \probover{(X, Y) \sim \cE^{t+1}}{X = x \:\wedge\: Y = y} \notag \\
        &= \probover{X \sim \cD^{t+1}}{X = x} \notag
    \end{align}
    again by the definition of a coupling (\Cref{def:coupling}) and Bayes’ rule.
\end{proof}

\begin{algorithm}[ht!]
\caption{\dynamicgjcr{}}
\label{alg:gjcr-dynamic}
\Input{$(E^t)_{t \in [T]} = (C, N, k, A^t)_{t \in [T]}$}
\Output{$(W^1, \ldots, W^T)$, where $W^t$ satisfies \ejrp{} for $E^t$ for all $t$}
$W^1 \gets \noisygjcr(E^1)$\; \label{line:dynamic-gjcr-initial-sample}
$s^1 \gets$ candidate sequence used to construct $W^1$ \;
\For{$t \in (2, \ldots, T)$}{
    $ s \leftarrow s^{t-1}$ and $s' \leftarrow ()$ \;
    Generate distributions $\cN_{s\vert r}^{t-1}, \cN_{s\vert r}^{t}$ from \cref{line:GJCR-noisy-sample} of \Cref{alg:gjcr-noisy} on inputs $E^{t-1}$ and $E^t$,\newline for all prefixes $s\vert r$ of $s$ for $r \in [k^2]$\; %
    Construct $\cM_{s\vert r}^t \gets $ optimal couplings of $\cN_{s\vert r}^{t-1}, \cN_{s\vert r}^{t}$, for $r \in [k^2]$ \;
    \For{$r \in (1, \ldots, k^2)$}{ \label{line:gjcr-dynamic-sample}
        \uIf{$s'\vert r = s \vert r$}{ %
            Sample $(c',c)\sim (\cM_{s' \vert r}^t$ conditioned on the first coordinate being equal to $s(r)$)\; \label{line:gjcr-dynamic-condnl-sample} %
        }
        \Else{ 
            Sample $c \sim \cN_{s'\vert r}^{t}$\; \label{line:gjcr-dynamic-fresh-sample}
        }
        $s' \leftarrow$ ($s'$ with $c$ appended)\;
    }
    $W^t \leftarrow \{c \in C: c \in s'$\}\;
    $s^t \leftarrow s'$\;
}
\Return $W$\;
\end{algorithm}

\dynamicgjcrpolytime*

\begin{proof}[Proof of \Cref{thm:dynamicgjcr}]
    Recall the couplings and distributions used in the proof of \Cref{thm:noisy-GJCR-continuous}. 
    In particular, let $\cN_s$ and $\cN_s '$ denote the distributions from which \noisygjcr samples on \cref{line:GJCR-noisy-sample}, given a partial sequence $s$ and under inputs $E$ and $E'$, respectively. 
    We let $\cM_s$ denote an optimal coupling of $\cN_s$ and $\cN_s '$.

    The key insight is that although coupling the output distributions of \noisygjcr for $E^t$ and $E^{t+1}$ using these couplings $\cM_s$ of $\cN_s$ and $\cN_s '$ is perhaps suboptimal from the perspective of TV distance, it is computationally tractable.
    The distributions $\cN_s$ and $\cN_s '$ are constructed explicitly in \cref{line:GJCR-noisy-sample} in polynomial time. 
    Since each has support $m$, an optimal coupling $\cM_s$ can be explicitly computed.
    For each $c \in C \cup \{\bot\}$, let $P_c$ and $P_c'$ be the probability of selecting $c$ from distributions $\cN_s$ and $\cN_s'$, respectively.
    Then to construct $\cM_s$,
    \begin{itemize}
        \item first set $\cM_s(c,c) = \min(P_c, \:P_c')$ for all $c \in C \cup \{\bot\}$,
        \item then match excess probability mass for $c \in \cup \{\bot\}$ for which $P_c > P_c'$ to excess probability mass for $c' \in \cup \{\bot\}$ for which $P_{c'} < P_{c'}'$.
    \end{itemize}
    We claim $\cM_s$ is in fact an optimal coupling of $\cN_s$ and $\cN_s'$ as defined in \Cref{lem:coupling}, but for our purposes it suffices to observe that, by \Cref{lem:gjcr-stability-per-step} and the choice of $a_\ell$ in \eqref{eq:gjcr-a-defn}, these distributions satisfy 
    \[
        \dtv{\cN_s}{\cN_s'} = O(a_\ell).
    \]
    By our construction of $\cM_s$, the total probability of selecting a pair of different elements from $\cM_s$ is
    \[
        \probover{(X,Y) \sim \cM_s}{X \neq Y}
        = \sum_{c \in C \cup \{\bot\}: P_c \geq P_c'} (P_c - P_c') = \dtv{\cN_s}{\cN_s'} = O(a_\ell)
    \]
    by the definition of TV distance (\Cref{def:tv-dist}).

   For a partial sequence of samples $s$ in \noisygjcr, we use $s \vert r$ to denote the sequence up to index $r$ (exclusive).
    We claim that at all stages $r$ of the loop on \cref{line:gjcr-dynamic-sample}, the next candidate $c$ is a sample from $\cN_{s'\vert r}^t$. This is true outright in the case of \cref{line:gjcr-dynamic-fresh-sample}.
    Otherwise, following the proof of \Cref{prop:reduction} this follows by induction on $t$, and it holds over randomness over prior rounds and samples.
    In particular, the base case is step $r$ of the initial sample performed on \cref{line:dynamic-gjcr-initial-sample}. 
    Provided that $s(r)$ on \cref{line:gjcr-dynamic-condnl-sample} is a sample from $\cN_{s' \vert r}^{t-1}$, which is the inductive hypothesis, following \eqref{eq:dynamic-reduction-condnl-induction} and using that $\cM_{s' \vert r}^t$ is a coupling proves that the $c$ we draw on \cref{line:gjcr-dynamic-condnl-sample} is indeed a sample from $\cN_{s'\vert r}^t$.

    We have established that this runs in polynomial time.
    Since each output $W^t$ is a (correlated) sample from the distribution of $\noisygjcr(E^t)$, we have anonymity, neutrality, and monotonicity.
    Finally, by the analysis of the coupling above, for any $t$ the probability that $W^{t-1} \neq W^{t}$ is at most the probability that any of the joint samples on \cref{line:gjcr-dynamic-condnl-sample} disagree, which is
    \[
        \sum_{r \in [k^2]} \probover{(X,Y) \sim \cM_s}{X \neq Y} = O\left( \sum_{r \in [k^2]} a_\ell \right).
    \]
    By the analysis in \Cref{thm:noisy-GJCR-continuous} this is $\tilde{O}\left(k^3/n\right)$.
    We can finally conclude that the expected recourse of the output is upper bounded by the probability $\tilde{O}\left(k^3/n\right)$ times the maximum possible value of a single-step recourse $|W^{t-1} \oplus W^{t}| \leq 2k$.
    Hence, the recourse bound $\tilde{O}\left(k^4/n\right)$ follows.
\end{proof}

\end{document}